\newtheorem{theorem}{Theorem}
\newtheorem{lemma}{Lemma}
\DeclareMathOperator{\diag}{diag}
\DeclareMathOperator{\bdiag}{bdiag}
\DeclareMathOperator{\Tr}{Tr}
\DeclareMathOperator{\Var}{Var}
\DeclareMathOperator{\Cov}{Cov}
\DeclareMathOperator{\vecm}{vec}
\DeclareMathOperator{\logdet}{logdet}
\DeclareMathOperator{\TV}{TV}
\DeclareMathOperator{\err}{err}
\DeclareMathOperator*{\argmin}{argmin}
\DeclareMathOperator{\KLsymbol}{KL}
\newcommand{\Jmat}{J}
\newcommand{\innov}{\varepsilon}
\newcommand{\noise}{\eta}
\newcommand{\bbE}{\mathbb{E}}
\newcommand{\bbP}{\mathbb{P}}
\newcommand{\bbR}{\mathbb{R}}
\newcommand{\bbN}{\mathbb{N}}
\newcommand{\one}{\mathbf{1}}
\newcommand{\Sigmad}{\Sigma_{\mathrm{d}}}
\newcommand{\KL}[2]{\KLsymbol\left\{#1  ~\middle\Vert~  #2 \right\}}
\newcommand{\KLmax}{\KLsymbol_{\mathrm{avg}}}
\newcommand{\basis}{\mathrm{e}}
\newcommand{\thetamax}{\vartheta}
\newcommand{\tmix}{t_{\mathrm{mix}}}
\newcommand{\phantomeq}{\phantom{{}={}}}
\newcommand{\phantomleq}{\phantom{{}\leq{}}}
\newcommand{\phantomplus}{\phantom{{}+{}}}
\title{\sc Minimax Estimation of Partially-Observed Vector AutoRegressions}
\author{
 \textbf{Guillaume Dalle}\\
 CERMICS, École des Ponts\\
 Marne-la-Vallée, France\\
 guillaume.dalle@enpc.fr\\
 \and
 \textbf{Yohann De Castro}\\
 Institut Camille Jordan, École Centrale Lyon\\
 Écully, France\\
 yohann.de-castro@ec-lyon.fr\\
}
\date{April 2022}
\begin{document}

\maketitle

\begin{abstract}
High-dimensional time series are a core ingredient of the statistical modeling toolkit, for which numerous estimation methods are known.
But when observations are scarce or corrupted, the learning task becomes much harder.
The question is: how much harder?

In this paper, we study the properties of a partially-observed Vector AutoRegressive process, which is a state-space model endowed with a stochastic observation mechanism.
Our goal is to estimate its sparse transition matrix, but we only have access to a small and noisy subsample of the state components.
Interestingly, the sampling process itself is random and can exhibit temporal correlations, a feature shared by many realistic data acquisition scenarios.

We start by describing an estimator based on the Yule-Walker equation and the Dantzig selector, and we give an upper bound on its non-asymptotic error.
Then, we provide a matching minimax lower bound, thus proving near-optimality of our estimator.
The convergence rate we obtain sheds light on the role of several key parameters such as the sampling ratio, the amount of noise and the number of non-zero coefficients in the transition matrix.
These theoretical findings are commented and illustrated by numerical experiments on simulated data.
\end{abstract}

\section{Introduction}

Time series provide a natural representation for periodic measurements of a stochastic process.
In particular, those defined by linear Gaussian recursions may be the most widely used and the easiest to study.
Well-known examples include the AutoRegressive (AR) process and its multivariate counterpart, the Vector AutoRegressive (VAR) process.

Industrial applications of these models encounter two main challenges.
First, they often involve signals in high dimension, which means sparsity assumptions play an important role.
Second, the variables of interest are rarely measured exactly or entirely.
Indeed, physical constraints such as the cost of sensors can make it impossible to capture every component of the system's state at all times.
It is therefore natural to ask: \emph{how much harder does high-dimensional learning become when one only observes a fraction of the relevant values?}

\subsection{Context of the Study}

To answer this question, we study a state-space model where the state~$X_t \in \bbR^D$ follows a VAR process of order~$1$ over a period of length~$T$.
Since the dimension~$D$ of~$X_t$ is high, we assume that its transition matrix~$\theta \in \bbR^{D \times D}$ is~$s$-sparse (there are no more than~$s$ non-zero coefficients in each row).
However, we do not observe the state itself: our observations~$Y_t$ only involve the subset of components~$X_{t, d}$ for which~$\pi_{t, d} = 1$, where~$\pi_t$ is a vector of Bernoulli variables.
To make matters worse, this subset is corrupted with noise, which leads to the following generative procedure:
\begin{equation} \label{eq:model}
  X_t = \theta X_{t-1} + \mathcal{N}(0, \sigma^2 I) \qquad \pi_{t, d} \sim \mathcal{B}(p)  \qquad Y_t = \diag(\pi_t) X_t + \mathcal{N}(0, \omega^2 I).
\end{equation}
When we write~$\pi_{t, d} \sim \mathcal{B}(p)$, we mean that the marginals of the sampling variables are identical, which requires that every state component be sampled with equal probability~$p$.
However, we reject the standard independence assumption in favor of temporal dependencies between the Bernoulli variables~$\pi_{t, d}$ (see Section~\ref{sec:trains} for a practical justification).

\medskip

To shed light on the properties of our model, we start by constructing a sparse estimator for~$\theta$, whose non-asymptotic error we upper bound.
We complement this finding with a lower bound on the minimax error that does not depend on the choice of estimator.
Upper and lower bound match in most regards, which proves their optimality.
A rough summary of our analysis is that the best possible estimator~$\widehat{\theta}$ satisfies
\begin{equation} \label{eq:convergence_rate}
  \lVert \widehat{\theta} - \theta \rVert_{\infty} \lesssim \left(1 + \frac{\omega^2}{\sigma^2} \right) \frac{s}{p\sqrt{T}}
\end{equation}
with high probability.
We observe that the error does not depend on the state dimension~$D$, but only on the sparsity~$s$ of the transition matrix.
As expected, it decreases linearly as~$p$ grows, since more information becomes available.
Lastly, it is a function of~$\omega^2 / \sigma^2$, which means that precise recovery of~$\theta$ is only possible when the noise is not too much larger than the signal.

Novel features of our work include the first proof of a minimax lower bound in this setting (to the best of our knowledge), the investigation of temporal correlations within the sampling process, the combination of discrete and continuous concentration inequalities to obtain error estimates, as well as detailed numerical experiments on simulated data.

\subsection{Example of Application} \label{sec:trains}

Our study was inspired by concrete questions related to delay propagation on railway networks, which came up during a collaboration with a leading railway company.
When external factors (weather, passenger behavior, mechanical failures) trigger a primary delay, resource conflicts between trains can amplify the initial incident and send ripple effects through the whole network.
Understanding and predicting this propagation phenomenon is a crucial task for traffic management and robust scheduling.

To model it, we construct a network graph~$\mathcal{G} = (\mathcal{V}, \mathcal{E})$ linking the railway stations, and we assume the existence of a hidden congestion variable~$X_{t, d}$ that lives on the edges~$d \in \mathcal{E}$.
This congestion evolves according to a VAR process, whose transition matrix~$\theta$ represents pairwise interactions between edges.
The sparsity structure of~$\theta$ expresses the local nature of delay propagation, which is why it is closely related to the adjacency structure of~$\mathcal{G}$.
Indeed, between times~$t$ and~$t+1$, edges are expected to transmit congestion to their close neighbors, and not to regions of the network that are very far away.

Unfortunately for us,~$X_t$ is never observed directly.
The only information we have is collected by the trains whenever they cross an edge of the network.
The crossing time of a train is influenced by the congestion, but also by other individual factors: in this sense, our observations~$Y_t$ are a noisy version of the underlying process~$X_t$.
Furthermore, the observations are limited in size: the dimension of~$X_t$ is the number of edges~$D = |\mathcal{E}|$, while the dimension of~$Y_t$ is linked to the number of trains on the timetable and the length of their respective journeys.
We can thus define a random variable~$\pi_{t, d}$ equal to~$1$ if a train crosses edge~$d$ between~$t$ and~$t+1$, and~$0$ otherwise.
A more realistic model would account for the possibility of multiple trains crossing an edge in the same time step, especially if the discretization interval is large.
However, our binary assumption greatly simplifies exposition without betraying the qualitative behavior of the system.
Crucially, this sampling mechanism exhibits temporal correlations: periods of dense traffic are likely to be followed by dense traffic, which means that the sequence of sampling variables~$\pi_{t, d}$ is not independently distributed.

We recognize the framework of Equation~\eqref{eq:model}, and can therefore apply the theoretical result of Equation~\eqref{eq:convergence_rate}.
This error quantification provides useful insight on the estimation of~$\theta$, which is essential to help railway operatives dimension their data sets or evaluate prediction uncertainty.

\subsection{Related Works} \label{sec:related_works}

The theory of VAR processes has been known for a long time: the book of \citet{lutkepohlNewIntroductionMultiple2005} provides a detailed account.
If we have full and noiseless observations of the process~$X_t$, we can use conditional Least Squares to estimate~$\theta$ by minimizing the quadratic error~$\sum_t \lVert X_t - \theta X_{t-1} \rVert_2^2$.
This is equivalent to solving the Yule-Walker equation~$\Gamma_h = \theta \Gamma_{h-1}$, where we replace the autocovariance matrix~$\Gamma_h = \Cov[X_{t+h}, X_t]$ with its empirical counterpart~$\widehat{\Gamma}_h$.
In the case of Gaussian innovations, both approaches coincide with the Maximum Likelihood Estimator~(MLE).

Neither of these methods was initially designed for missing or noisy data.
Luckily, statistical estimation with imprecise measurements has been thoroughly studied \citep{buonaccorsiMeasurementErrorModels2010}.
The same goes for incomplete data sets ; an extensive survey was recently published by \citet{littleStatisticalAnalysisMissing2019}.
According to their terminology, our work deals with data that is missing completely at random (MCAR), which means that the projection~$\pi_t$ is independent from the underlying process~$X_t$.
We also assume to know the distribution of the missingness indicators~$1-\pi_{t, d}$, which is not necessarily true for other applications (e.g. clinical trials).

A principled approach to deal with missing data would require extending the MLE to partially-observed time series, also known as state-space models \citep{cappeInferenceHiddenMarkov2006}.
Most of the time, exact or approximate inference is achievable using some version of the Kalman filter \citep{kalmanNewApproachLinear1960} or particle methods \citep{doucetSequentialMonteCarlo2000}, whereas parameter estimation typically involves the Expectation-Maximization (EM) algorithm \citep{shumwayApproachTimeSeries1982}.
Unfortunately, the EM algorithm is hard to analyze explicitly in terms of statistical error, which is why other methods are sometimes preferred in theoretical studies.
In particular, plug-in methods that use covariance estimates within the Yule-Walker equation have been quite popular in the machine learning community.

\medskip

In this line of work, the core challenge is the high dimension~$D$ of the VAR process~$X_t$.
To address it, many authors use sparsity-inducing penalties as a way to reduce data requirements and computational workload.
In the last ten years, the LASSO \citep{tibshiraniRegressionShrinkageSelection1996} has been increasingly applied to random designs exhibiting correlations or missing data.
This trend started with the seminal work of \citet{lohHighdimensionalRegressionNoisy2012}, and numerous other papers followed \citep[see for example][]{basuRegularizedEstimationSparse2015, kockOracleInequalitiesHigh2015, melnykEstimatingStructuredVector2016, jalaliMissingDataSparse2018}.

As an alternative to the LASSO, the Dantzig selector \citep{candesDantzigSelectorStatistical2007} enforces sparsity in the objective and data fidelity in the constraints.
While the LASSO requires solving a Quadratic Program (QP), for instance with proximal methods, the Dantzig selector gives rise to a Linear Program (LP) which can be parallelized across dimensions.
\citet{hanDirectEstimationHigh2015} studied its application to VAR estimation, obtaining finite-sample error bounds with very natural hypotheses.
A little later, \citet{raoEstimationAutoregressiveProcesses2017} extended these results to the more general scenario in which a hidden VAR process is randomly sampled or projected, and then corrupted with noise.
This last work is quite similar to ours, but we think that the proof they present to control the non-asymptotic error is incomplete at best\footnote{
  Indeed, the combination of discrete and Gaussian concentration inequalities as performed on page 2 (middle of right column) of the supplementary material for \citet{raoEstimationAutoregressiveProcesses2017} glosses over the fact that~$L_F$ is itself a random variable.
  As we will discover during our own proof, this introduces an additional difficulty and forces us to use a more complex Gaussian concentration result (Lemma~\ref{lem:conditional_hanson_wright}).
  See \url{https://web.stanford.edu/~milind/papers/system_id_icassp_proof.pdf} for the supplementary material in question.
}.

Another salient feature of our paper is the search for a minimax lower bound, which allows us to prove the optimality of our convergence rates.
To the best of our knowledge, this was only attempted once for partially-observed VAR processes.
\citet{raoFundamentalEstimationLimits2017} presented a lower bound on the minimax error in a setting very similar to ours, but their result is less generic in several regards.
Indeed, we account for the possibility of temporal correlations within sampling, as well as observation noise.
Moreover, unlike the one proposed by \citet{raoFundamentalEstimationLimits2017}, our proof focuses on geometric properties and doesn't make use of the admissible set of transition matrices until the very end.
This makes it easy to handle many different types of structured transitions without additional work: sparse, Toeplitz, banded, etc.

Finally, the error bounds we obtain are backed up by detailed numerical experiments on simulated data, which allow us to visualize the influence of every parameter of interest.

\subsection{Outline of the Paper}

In Section~\ref{sec:model_and_estimator}, we define the generative procedure behind the partially-observed VAR process, and we present a sparse estimator of the transition matrix.
We then state both of our theoretical results in Section~\ref{sec:lower_upper_bound}: an upper-bound on the error of our specific estimator, complemented by a minimax lower bound on the error of any estimation algorithm.
Section~\ref{sec:experiments} contains numerical experiments demonstrating the impact of various parameters, which lead to the conclusion in Section~\ref{sec:conclusion}.

Appendix~\ref{sec:convergence_proof} is dedicated to proving the convergence rate of the sparse estimator, while Appendix~\ref{sec:lower_bound_proof} contains the derivation of the minimax lower bound.
A number of useful results from linear algebra and probability are presented in Appendix~\ref{sec:lemmas} to make the paper as self-contained as possible.
Most of them are well-known, some were obtained or adapted specifically for our proof.
Appendix~\ref{sec:glossary} contains a summary of the main notations and symbols.

\section{The Partially-Observed VAR Process and its Sparse Estimator} \label{sec:model_and_estimator}

Before stating our theoretical results, we introduce our statistical model and the estimator we use.

\subsection{Model Definition} \label{sec:model_definition}

The model we study was described approximately in the introduction.
We now fill the gaps of the generative procedure it relies on.

\paragraph{The underlying state}~$X = (X_t)_{t \in [T]}$ follows a stationary VAR process of order~$1$.
This process has dimension~$D$ and the following recursive definition:
\begin{equation} \label{eq:x_model}
  X_{t} = \theta X_{t-1} + \innov_t \qquad \text{with} \qquad \innov_t \sim \mathcal{N}(0, \Sigma).
\end{equation}
Here~$\theta \in \bbR^{D \times D}$ is the transition matrix and~$\Sigma \in \bbR^{D \times D}$ is the covariance matrix of the innovations (in the introduction, we assumed~$\Sigma = \sigma^2 I$).

To ensure stationarity of the VAR process, we must constrain the spectral radius of~$\theta$ to satisfy~$\rho(\theta) < 1$.
Throughout the paper, we actually make the following (slightly stronger) assumption on the spectral norm of~$\theta$: there exists~$\thetamax \in (0, 1)$ such that for all the values of~$\theta$ we consider,~$\lVert \theta \rVert_2 \leq \thetamax < 1$.
Furthermore, we only study row-sparse transition matrices, having at most~$s$ nonzero coefficients in each row.
In other words, we restrict our choice of parameters to
\begin{equation} \label{eq:bigtheta_definition}
  \theta \in \Theta_s \quad \text{where} \quad \Theta_s = \{\theta \in \bbR^{D \times D}: \lVert \theta \rVert_2 \leq \thetamax < 1 \quad \text{and} \quad \forall i, \lVert \theta_{i, \cdot} \rVert_0 \leq s\}.
\end{equation}
We denote by~$\sigma_{\min}^2 = \lambda_{\min}(\Sigma)$ and~$\sigma_{\max}^2 = \lambda_{\max}(\Sigma)$ the minimum and maximum eigenvalues of the covariance matrix~$\Sigma$.

\paragraph{The observation mechanism} we chose implies that we do not have direct access to the latent process~$X_t$.
To construct the observations~$Y_t$, we sample a subset of state components according to the binary vectors~$\pi_t$.
Then, independent Gaussian noise with variance~$\omega^2$ is added to these selected components, and we observe the result.
If we denote by~$\Pi_t = \diag(\pi_t)$ the diagonal projection matrix, we have
\begin{equation} \label{eq:y_model}
  Y_t = \Pi_t X_t + \noise_t \qquad \text{with} \qquad \noise_t \sim \mathcal{N}(0, \omega^2 I).
\end{equation}
An essential hypothesis we make is the mutual independence between our three sources of randomness: the innovations~$\innov_t$, the projections~$\pi_t$ and the observation noise~$\noise_t$.

\medskip

A major feature of the present work is the non-deterministic selection of observed state components, that is, the fact that~$\pi_t$ is a random sequence of Bernoulli vectors following a known distribution.
In order to sum up the amount of information available using one parameter~$p \in (0, 1)$, we want this distribution to satisfy the following condition: each component~$X_{t,d}$ of the latent state must be sampled with the same marginal probability~$p = \bbP(\pi_{t, d} = 1)$.

On the other hand, we also want to introduce temporal dependencies between the projections.
The simplest way to achieve that is through a Markovian hypothesis: independently along each dimension~$d$, time indices~$t$ are selected for observation according to a binary-valued Markov chain with transition matrix~$\mathcal{T} = \begin{psmallmatrix} 1 - a & a \\ b & 1 - b \end{psmallmatrix}$.
Its coefficients are chosen to make the chain stationary with invariant measure~$(\tfrac{b}{a+b}, \tfrac{a}{a+b}) = (1-p, p)$.
Note that when~$a = 1-b = p$, this reduces to independent sampling of each component with probability~$p$.
We also assume there exists a universal constant~$\chi$ such that~$0 < \chi \leq a, b \leq 1-\chi < 1$: this means that the chain does not transition too fast nor too slowly.

\medskip

Our data set is built from~$N$ independent realizations of this process.
For the sake of simplicity however, we will prove all convergence theorems in the case~$N = 1$: extending those results to the general case simply amounts to replacing~$T$ with~$NT$ in the resulting error bounds.

\subsection{Sparse Estimator for the Transition Matrix}  \label{sec:estimator_def}

We now introduce the estimation method chosen for this problem.

\paragraph{The transition estimator} presented here is a straightforward generalization of the one used by \citet{raoEstimationAutoregressiveProcesses2017}.
The lag-$h$ covariance matrix of the VAR process~$X_t$ is given by the Yule-Walker recursion (see Lemma~\ref{lem:x_covariance}):
\begin{equation} \label{eq:yule_walker}
  \Gamma_h(\theta) =  \Cov_\theta[X_{t+h}, X_{t}] = \theta \Gamma_{h-1}(\theta) = \theta^h \Gamma_0(\theta)
\end{equation}
We can use it to define a simple two-step procedure:
\begin{enumerate}
  \item For a given~$h_0$, build estimators~$\widehat{\Gamma}_{h_0}$ and~$\widehat{\Gamma}_{h_0 + 1}$ of the covariances~$\Gamma_{h_0}$ and~$\Gamma_{h_0+1}$.
  \item Use them to approximate the transition matrix by inverting Equation~\eqref{eq:yule_walker}.
\end{enumerate}
A simple inversion technique uses the Moore-Penrose pseudoinverse (just in case~$\widehat{\Gamma}_{h_0}$ is singular):
\begin{equation} \label{eq:theta_estimator_dense}
  \widehat{\theta}^{\text{dense}} = \widehat{\Gamma}_{h_0+1} \widehat{\Gamma}_{h_0}^\dagger.
\end{equation}
The problem with this procedure is that is does not guarantee sparsity of~$\widehat{\theta}$.
To obtain a sparse result, we follow \citet{hanDirectEstimationHigh2015} and cast Equation~\eqref{eq:yule_walker} as a soft constraint enforcing proximity between~$\widehat{\Gamma}_{h_0+1}$ and~$\widehat{\theta} \widehat{\Gamma}_{h_0}$.
This amounts to solving the following constrained optimization problem:
\begin{equation} \label{eq:theta_estimator}
  \widehat{\theta} \in \argmin_{M \in \bbR^{D \times D}} \lVert \vecm(M) \rVert_1 \quad \text{subject to} \quad \lVert M \widehat{\Gamma}_{h_0} - \widehat{\Gamma}_{h_0+1} \rVert_{\max} \leq \lambda.
\end{equation}
Here~$\lVert \vecm(\cdot) \rVert_1$ denotes the sum of the absolute values of all the coefficients of a matrix, while~$\lVert \cdot \rVert_{\max}$ is the maximum of these absolute values.
Given that both of these norms are piecewise linear, the problem of Equation~\eqref{eq:theta_estimator} can be reformulated as an LP. It can even be decomposed along each dimension, which allows for an efficient and parallel solution procedure.
The only thing left to do is decide how to estimate the covariance matrices~$\Gamma_h$.

\paragraph{The covariance estimator} we use is a variant of the empirical covariance.
Since~$Y_t = \Pi_t X_t + \noise_t$ where~$\noise_t$ is zero-mean, a natural proxy for~$X_t$ is obtained by inverting the sampling operator:~$\widehat{X}_t = \Pi_t^\dagger Y_t$.
It would therefore seem logical to build an estimator of~$\Gamma_h$ by plugging this proxy into the empirical covariance between~$X_{t+h}$ and~$X_t$.
However, in order for this idea to work, we must make two small adjustments.

To account for the random sampling, the plug-in empirical covariance must be scaled elementwise by a matrix~$S(h) = \bbE[\pi_{t+h} \pi_t']$.
Intuitively, since~$\widehat{X}_{t+h} \widehat{X}_t'$ has a fraction~$p^2$ of nonzero coefficients, we need to divide it by something close to~$p^2$ to get an unbiased covariance estimator.
Furthermore, to account for the observation noise, we must incorporate an additive correction~$- \omega^2 I$.
This correction becomes unnecessary for~$h \geq 1$ since the observation noise~$\noise_t$ is independent across time.

In conclusion, we obtain the following covariance estimator:
\begin{equation} \label{eq:gamma_estimator}
  \widehat{\Gamma}_h =  \frac{1}{S(h)} \odot \frac{1}{T-h} \sum_{t=1}^{T-h} \left(\Pi_{t+h}^\dagger Y_{t+h}\right) \left(\Pi_t^\dagger Y_{t}\right)' - \one_{\{h = 0\}} \omega^2 I.
\end{equation}
The coefficients of the scaling matrix~$S(h)$ are computed in Lemma~\ref{lem:proj_moments}.

\section{Lower and Upper Bound on the Estimation Error} \label{sec:lower_upper_bound}

We now have the necessary background to formulate our theoretical results.
In all the following statements (and their proofs), the letter~$c$ denotes a universal positive constant, which may change from one line to the next but never depends on any varying problem parameters.
More specifically, statements involving it should always be understood as \enquote{there exists~$c > 0$ such that}...

\subsection{Main Theorems}

We start by bounding the non-asymptotic error of the estimator we just introduced.

\begin{theorem}[Error upper bound] \label{thm:convergence_rate_theta}
  Consider the partially-observed VAR model defined in Section~\ref{sec:model_definition}.
  We use the estimator~$\widehat{\theta}$ of Section~\ref{sec:estimator_def} with~$h_0 = 0$, and we suppose that~$T$ is \enquote{large enough}, as specified by Equations~\eqref{eq:T_constraint_convergence1} and~\eqref{eq:T_constraint_convergence2}.
  Let us define
  \begin{equation} \label{eq:gammau_sparse}
    \gamma_u(\theta) =
    \frac{\lVert \theta \rVert_{\infty} + 1}{(1-\lVert \theta \rVert_2)^2}
    \frac{\sigma_{\max}^2 + \omega^2}{\lVert \Gamma_0(\theta)^{-1} \rVert_1^{-1}} \qquad \text{and} \qquad q_u = \min\{p, 1-b\} \leq p.
  \end{equation}
  Then there is a value of~$\lambda$ such that the following upper bound holds with probability at least~$1 - \delta$:
  \begin{equation} \label{eq:theta_sparse_error}
    \lVert \widehat{\theta} - \theta \rVert_{\infty}
    \leq c \frac{\gamma_u(\theta) s}{\sqrt{T p q_u}} \sqrt{\log(D/\delta)}.
  \end{equation}
\end{theorem}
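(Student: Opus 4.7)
The plan is to run a Dantzig-selector analysis, split into a probabilistic ingredient — a high-probability max-entry concentration $\|\widehat{\Gamma}_h - \Gamma_h\|_{\max} \leq \tau$ for $h \in \{0, 1\}$ — and a deterministic ingredient that converts this covariance control into the stated bound on $\|\widehat{\theta} - \theta\|_\infty$. The concentration step is the main obstacle; the deterministic step is a direct row-by-row Dantzig argument.

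For the concentration step, I would expand each entry $(\widehat{\Gamma}_h)_{ij} - (\Gamma_h)_{ij}$ as a quadratic form in the joint Gaussian vector $(X_{1:T}, \noise_{1:T})$ whose coefficient matrix depends on the random sampling trajectory $\pi$. Conditionally on $\pi$, the conditional Hanson-Wright inequality (Lemma~\ref{lem:conditional_hanson_wright}) gives a deviation bound with variance proxy controlled by $\sigma_{\max}^2 + \omega^2$ and by the stationary covariance norm $\|\Gamma_0\|_2 \lesssim \sigma_{\max}^2 / (1 - \|\theta\|_2^2)$. In parallel, a Bernstein-type inequality for the binary Markov chain $\pi$ — using the mixing margin encoded by $\chi$ — controls the deviation of the empirical sampling frequencies from $S(h)$, which is where the effective sample size $T p q_u$ with $q_u = \min(p, 1-b)$ enters. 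A union bound over the $D^2$ entries and the two lags then yields
\[
\tau \lesssim \frac{\sigma_{\max}^2 + \omega^2}{(1 - \|\theta\|_2)^2} \sqrt{\frac{\log(D/\delta)}{T p q_u}},
\]
as soon as $T$ is above the mixing-based thresholds that become Equations~\eqref{eq:T_constraint_convergence1}--\eqref{eq:T_constraint_convergence2}. This is the delicate point: since the quadratic-form coefficient matrix is itself random in $\pi$, a naive decoupling into deterministic Hanson-Wright plus a marginal bound on $\pi$ misses the correct rate — precisely the gap flagged in the footnote — so the conditional variant of Hanson-Wright is essential.

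For the deterministic step, set $\lambda = (\|\theta\|_\infty + 1)\tau$. By Yule-Walker, $\theta \Gamma_0 = \Gamma_1$, so
\[
\|\theta \widehat{\Gamma}_0 - \widehat{\Gamma}_1\|_{\max} \leq \|\theta\|_\infty \|\widehat{\Gamma}_0 - \Gamma_0\|_{\max} + \|\widehat{\Gamma}_1 - \Gamma_1\|_{\max} \leq \lambda,
\]
i.e.\ $\theta$ is feasible for~\eqref{eq:theta_estimator}. The program decouples row by row, so for each $i$ I would set $\delta_i = \widehat{\theta}_{i, \cdot} - \theta_{i, \cdot}$. Feasibility of both $\widehat{\theta}$ and $\theta$ gives $\|\delta_i \widehat{\Gamma}_0\|_\infty \leq 2\lambda$; $\ell_1$-minimality of $\widehat{\theta}$ gives the cone inequality $\|\delta_{i, S^c}\|_1 \leq \|\delta_{i, S}\|_1$ for $S = \mathrm{supp}(\theta_{i, \cdot})$, hence $\|\delta_i\|_1 \leq 2 s \|\delta_i\|_\infty$. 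Combining the identity $\delta_i = \delta_i \widehat{\Gamma}_0 \Gamma_0^{-1} - \delta_i (\widehat{\Gamma}_0 - \Gamma_0) \Gamma_0^{-1}$ with the elementary inequalities $\|u A\|_\infty \leq \|u\|_\infty \|A\|_1$ and $\|u (\widehat{\Gamma}_0 - \Gamma_0)\|_\infty \leq \|u\|_1 \tau$ produces
\[
\|\delta_i\|_\infty \leq 2\lambda \|\Gamma_0^{-1}\|_1 + 2 s \tau \|\Gamma_0^{-1}\|_1 \|\delta_i\|_\infty.
\]
Choosing $T$ large enough that $2 s \tau \|\Gamma_0^{-1}\|_1 \leq 1/2$ (the second $T$ constraint) absorbs the error term, giving $\|\delta_i\|_\infty \lesssim \lambda \|\Gamma_0^{-1}\|_1$ uniformly in $i$. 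Finally $\|\widehat{\theta} - \theta\|_\infty = \max_i \|\delta_i\|_1 \leq 2 s \max_i \|\delta_i\|_\infty \lesssim s \lambda \|\Gamma_0^{-1}\|_1$; substituting $\lambda$ and $\tau$ regroups the prefactors into $\gamma_u(\theta)$ and yields the announced bound.
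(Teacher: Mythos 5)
Your overall architecture matches the paper's: the probabilistic ingredient is exactly the paper's Lemma~\ref{lem:convergence_rate_gamma} (conditional Hanson--Wright on quadratic forms whose coefficient matrix is random through~$\pi$, combined with a Doeblin-type Chernoff bound for the Markov sampling chain, then a union bound over the~$D^2$ entries and the two lags), and you correctly identify why the conditional variant of Hanson--Wright, with its extra trace term, is the crux. The deterministic step is where you genuinely diverge. The paper follows \citet{hanDirectEstimationHigh2015}: it proves feasibility of the true~$\theta$ for~$\lambda=(\lVert\theta\rVert_\infty+1)\err(\delta)$, uses the row-wise structure of the program to get~$\lVert\widehat\theta\rVert_\infty\leq\lVert\theta\rVert_\infty$, bounds the \emph{max norm}~$\lVert\widehat\theta-\theta\rVert_{\max}\leq 2\lambda\lVert\Gamma_0^{-1}\rVert_1$ with no self-referential term, and then converts to the~$\ell_\infty$ operator norm through a thresholding argument. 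You instead use the cone inequality~$\lVert\delta_{i,S^c}\rVert_1\leq\lVert\delta_{i,S}\rVert_1$ and a self-bounding inequality~$\lVert\delta_i\rVert_\infty\leq 2\lambda\lVert\Gamma_0^{-1}\rVert_1+2s\tau\lVert\Gamma_0^{-1}\rVert_1\lVert\delta_i\rVert_\infty$ that must be closed by absorption.

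That absorption step is the one real discrepancy. You need~$2s\tau\lVert\Gamma_0^{-1}\rVert_1\leq 1/2$, a condition that scales with~$s\lVert\Gamma_0^{-1}\rVert_1$, and you attribute it to \enquote{the second~$T$ constraint}; but Equation~\eqref{eq:T_constraint_convergence2} in the paper is a different and weaker requirement coming from the concentration step (it involves neither~$s$ nor~$\lVert\Gamma_0^{-1}\rVert_1$). As written, your argument therefore proves the bound only under a strictly stronger \enquote{$T$ large enough} hypothesis than the theorem states. The fix is small and essentially recovers the paper's trick: instead of bounding~$\lVert\delta_i(\widehat\Gamma_0-\Gamma_0)\rVert_\infty\leq\lVert\delta_i\rVert_1\tau\leq 2s\lVert\delta_i\rVert_\infty\tau$, bound it by~$(\lVert\widehat\theta_{i,\cdot}\rVert_1+\lVert\theta_{i,\cdot}\rVert_1)\tau\leq 2\lVert\theta\rVert_\infty\tau$, using row-wise~$\ell_1$ minimality of~$\widehat\theta$; this removes the self-referential term, gives~$\lVert\delta_i\rVert_\infty\lesssim\lambda\lVert\Gamma_0^{-1}\rVert_1$ without any extra condition, and your cone inequality~$\lVert\delta_i\rVert_1\leq 2s\lVert\delta_i\rVert_\infty$ then yields the stated~$\ell_\infty$ operator-norm bound under the paper's own hypotheses.
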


\begin{proof}
  The argument combines discrete and continuous concentration inequalities, to account for both the Bernoulli sampling and the Gaussian noise.
  More precisely, we exploit a recent Chernoff bound that applies to non-reversible Markov chains, and we plug it into a conditional version of the Hanson-Wright inequality that we derived specifically for our purposes.
  See Appendix~\ref{sec:convergence_proof} for more details.
\end{proof}

We now move on to a minimax lower bound which is estimator-independent, and quantifies the intrinsic difficulty of our statistical problem.
The term minimax means that we study the probability of making an error of magnitude~$\zeta$, when we pick the best possible estimator~$\widehat{\theta}$ and nature replies by choosing the worst possible parameter~$\theta$:
\begin{equation} \label{eq:minimax_def}
  \mathfrak{P}(\zeta) = \inf_{\widehat{\theta}} \sup_{\theta \in \Theta_s} \bbP_{\theta} \left[ \lVert \widehat{\theta} - \theta \rVert_{\infty} \geq \zeta \right].
\end{equation}
More precisely, we want to find a threshold~$\zeta$ such that the probability of exceeding it is non-negligible, for instance~$\mathfrak{P}(\zeta) \geq \frac{1}{2}$.
The evolution of this threshold will tell us how the error behaves with respect to the various problem parameters.

\begin{theorem}[Error lower bound] \label{thm:lower_bound_sparse}
  Consider the partially-observed VAR model defined in Section~\ref{sec:model_definition}.
  We suppose that~$T$ is \enquote{large enough}, as specified by Equations~\eqref{eq:Tlarge_minimax_1} and~\eqref{eq:Tlarge_minimax_2}.
  Let us define
  \begin{equation} \label{eq:gammal_q}
    \gamma_\ell = (1-\thetamax)^{3/2} \frac{\sigma_{\min}^2 + \omega^2}{\sigma_{\max}^2}
    \qquad \text{and} \qquad
    q_\ell = \max\{1-b, 2p-(1-b)\} \geq p.
  \end{equation}
  Then the following minimax lower bound holds:
  \begin{equation}
    \inf_{\widehat{\theta}} \sup_{\theta \in \Theta_s} \bbP_{\theta} \left[ \lVert \widehat{\theta} - \theta \rVert_{\infty} \geq c \frac{\gamma_\ell s}{\sqrt{T p q_\ell}} \right] \geq \frac{1}{2}.
  \end{equation}
\end{theorem}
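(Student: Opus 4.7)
The approach I would take is the classical Fano minimax reduction. I would choose a finite family $\{\theta_0, \theta_1, \ldots, \theta_M\} \subseteq \Theta_s$ with $\|\theta_k - \theta_l\|_\infty \geq 2\zeta$ for every pair $k \neq l$, upper-bound the pairwise Kullback–Leibler divergences $\KL{\bbP_{\theta_k}}{\bbP_{\theta_0}}$, and then invoke Fano's inequality in the form $\mathfrak{P}(\zeta) \geq 1 - (\max_{k \neq 0} \KL{\bbP_{\theta_k}}{\bbP_{\theta_0}} + \log 2)/\log M$. In the spirit of the \enquote{geometry first, structure last} philosophy advertised in Section~\ref{sec:related_works}, I would first derive a master inequality for the KL divergence in terms of the abstract perturbation $V = \theta - \theta_0$, and defer the specialization to $s$-sparse matrices to the very last step, so that the same proof would apply to Toeplitz, banded, or other structured families $\Theta \subseteq \bbR^{D \times D}$ with minimal modification.

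The key observation underlying the KL bound is that, conditional on the sampling sequence $\pi_{1:T}$, the joint observation vector $Y_{1:T}$ is centered Gaussian with covariance $\Lambda_\theta(\pi)$ that depends on $\theta$ only through the stationary autocovariances $\Gamma_h(\theta)$ of the Yule–Walker recursion (Lemma~\ref{lem:x_covariance}). Since the law of $\pi$ is the same under every $\bbP_\theta$, the chain rule yields $\KL{\bbP_{\theta_k}}{\bbP_{\theta_0}} = \bbE_\pi \KL{\mathcal{N}(0, \Lambda_{\theta_k}(\pi))}{\mathcal{N}(0, \Lambda_{\theta_0}(\pi))}$. Applying the closed-form Gaussian KL expression and a second-order expansion around $\theta_0$ (which is legitimate because the \enquote{$T$ large enough} hypothesis forces $\zeta$, hence $V$, to be small) turns the right-hand side into a quadratic form in $V$, weighted on one side by entries of $\Gamma_0(\theta_0)$, which scale like $\sigma_{\max}^2/(1-\thetamax)$, and on the other by the minimum eigenvalue of the conditional observation covariance, which is bounded below by $\sigma_{\min}^2 + \omega^2$. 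Operator-norm bounds on $\Gamma_0(\theta_0)^{\pm 1}$ drawn from Lemma~\ref{lem:x_covariance} contribute the remaining $(1-\thetamax)^{3}$ factor needed to match $\gamma_\ell^{-2}$.

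Taking expectation over $\pi$ converts the quadratic form into a double sum over pairs of times and coordinates weighted by $\bbE[\pi_{t,d}\pi_{t',d'}]$, i.e.\ the second-order moments of the sampling Markov chain. Using stationarity together with the geometric mixing implied by $a, b \geq \chi$, these contributions collapse into a single $T \cdot p \cdot q_\ell$ factor, where the precise combinatorial identity $q_\ell = \max\{1-b,\, 2p-(1-b)\}$ emerges as the worst-case one-step joint probability across the two possible signs of the pairwise correlation between $\pi_{t,d}$ and $\pi_{t+1,d}$. I would then plug in a Varshamov–Gilbert packing of $s$-sparse sign perturbations (of the form $\theta_k = \theta_0 + H_k$ for a fixed baseline $\theta_0$ in the relative interior of $\Theta_s$, with each $H_k$ having at most $s$ nonzero entries per row and carefully chosen amplitudes), and choose the perturbation magnitude so that $\max_k \KL{\bbP_{\theta_k}}{\bbP_{\theta_0}}$ equals a small fraction of $\log M$. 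Fano then delivers the claimed threshold $\zeta \asymp \gamma_\ell s/\sqrt{T p q_\ell}$ after rearrangement, the large-$T$ conditions~\eqref{eq:Tlarge_minimax_1}--\eqref{eq:Tlarge_minimax_2} being exactly what guarantees both that each $\theta_k$ stays inside $\Theta_s$ (in particular satisfies $\|\theta_k\|_2 \leq \thetamax$) and that the Taylor remainder in the KL expansion stays dominated by the leading-order term.

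The main obstacle is the KL computation in the middle step: even after Gaussian conditioning, $\Lambda_\theta(\pi)$ is a $TD \times TD$ random block matrix coupled across time through the VAR dynamics and across coordinates through the random sampling mask, and extracting the sharp leading-order behaviour of its KL distance to $\Lambda_{\theta_0}(\pi)$ requires carefully pairing each autocovariance entry $\Gamma_h(\theta_0)_{d,d'}$ with the matching pairwise probability $\bbE[\pi_{t,d}\pi_{t+h,d'}]$. Getting the exact combination $p \cdot q_\ell$, rather than a naive $p^2$, depends on aligning the geometric decay of the VAR covariances and the Markov chain moments on the same time scale, so that the telescoping sum over $h$ collapses to a single linear-in-$T$ contribution without spurious cross terms. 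This is also the step where one must combine discrete concentration for the Bernoulli Markov chain with Gaussian concentration for $X_t$, mirroring the hybrid analysis that already showed up in the upper-bound proof.
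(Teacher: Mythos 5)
Your proposal follows essentially the same route as the paper: Fano's method with a Gilbert--Varshamov packing of sparse perturbations, the chain rule to reduce to the conditional Gaussian KL given~$\Pi$ (using that the law of~$\Pi$ does not depend on~$\theta$), a perturbative bound on that conditional KL, and averaging over the sampling second moments~$\bbE[\pi_i \pi_j]$ to produce the~$T p q_\ell$ factor before tuning the perturbation radius against~$\log M$. The one clarification worth making is that the paper needs neither a Taylor expansion with remainder control nor any ``hybrid discrete/Gaussian concentration'' at this stage: it takes~$\theta_0 = 0$, bounds the conditional KL exactly through the close-Gaussians Lemma~\ref{lem:kl_around_id} applied to~$\Delta_\Pi(\theta) = Q_\Pi^{-1/2} R_\Pi(\theta) Q_\Pi^{-1/2}$, and evaluates~$\bbE_\Pi \bigl[ \lVert R_\Pi(\theta) \rVert_F^2 \bigr]$ as an exact second-moment sum over~$\bbE[\pi_i\pi_j]$ (Lemma~\ref{lem:proj_moments}), so the concentration machinery you anticipate in your last paragraph is only needed for the upper bound, not here.
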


\begin{proof}
  The argument is based on an information-theoretical result known as Fano's inequality.
  To apply it, we need to upper bound the Kullback-Leibler (KL) divergence between the distributions~$\bbP_{\theta_0}(\Pi, Y)$ and~$\bbP_{\theta_1}(\Pi, Y)$, where~$\theta_0$ and~$\theta_1$ are sufficiently far apart.
  See Appendix~\ref{sec:lower_bound_proof} for more details.
\end{proof}

\subsection{Influence of the Problem Parameters}

Let us now compare the error bounds of Theorems~\ref{thm:convergence_rate_theta} and~\ref{thm:lower_bound_sparse}.
Our first remark is that~$s$ and~$T$ play exactly the same roles in both bounds (up to a logarithmic factor), which shows that the dependency of the error in~$s/\sqrt{T}$ is optimal.

\paragraph{The sampling parameters} appear as~$1/\sqrt{p q_u}$ in the upper bound, whereas the lower bound scales as~$1/\sqrt{p q_\ell}$ instead.
This means that we have not proven the optimality of either bound with respect to~$p$ or~$b$.
However, it is reassuring to note that there is no conflict between them since~$q_\ell \geq p \geq q_u$.
Furthermore, when~$a=1-b=p$ (that is, when Markov sampling boils down to independent sampling), both bounds simplify into the~$1/p$ dependency we would expect (since~$q_u = q_\ell = p$).
So in the case of independent sampling,~$1/p$ is indeed the optimal rate.

\paragraph{The~$\ell_2$ norm of the transition matrix} plays opposite roles on each side.
In the lower bound,~$1-\thetamax = 1 - \max_{\theta \in \Theta_s} \lVert \theta \rVert_2$ appears in the numerator, whereas in the upper bound,~$1 - \lVert \theta \rVert_2$ appears in the denominator.
It is likely that these dependencies are suboptimal, but at least they are compatible with one another: as~$\lVert \theta \rVert_2 \to 1$, that is, as the VAR process becomes unstable, the lower bound tends to~$0$ and the upper bound to~$+\infty$.
This is a reflection of the fact that our proofs make heavy use of the distance between~$\theta$ and the unit sphere, which means they become meaningless when~$\theta$ gets too large.

\paragraph{The variances~$\Sigma$ and~$\omega^2$} are involved in~$\gamma_\ell$ for the lower bound, and in~$\gamma_u(\theta)$ for the upper bound.
In both cases, the ratio~$\gamma$ tells us whether the underlying process is large enough to be detected among the noise.
Roughly speaking, the magnitude of~$X_t$ is related to the spectrum of~$\Sigma$, while the magnitude of~$Y_t$ is related to the spectrum of~$\Sigma + \omega^2 I$.
If the latter is significantly larger than the former, recovering~$X_t$ (and thus~$\theta$) is a hopeless endeavor.

To simplify the comparison, let us assume in this discussion that~$\Sigma = \sigma^2 I$, and that~$\theta$ commutes with its transpose.
Then we have~$\lVert \Gamma_0^{-1}(\theta) \rVert_1^{-1} = \big \lVert \big(\sigma^2(I - \theta \theta')^{-1} \big)^{-1} \big\rVert_1^{-1} = \sigma^2 \lVert I - \theta \theta' \rVert_1^{-1}$, and we can give a simpler expression of~$\gamma_\ell$ and~$\gamma_u(\theta)$:
\begin{align*}
  \gamma_u(\theta)
  = \frac{(\lVert \theta' \rVert_1 + 1) \lVert I - \theta \theta' \rVert_1}{(1-\thetamax)^2}
  \frac{\sigma^2 + \omega^2}{\sigma^2} &  &
  \gamma_\ell
  = (1-\thetamax)^{3/2} \frac{\sigma^2 + \omega^2}{\sigma^2}.
\end{align*}
We recognize the same dependency in both bounds, namely~$\gamma \propto 1 + \frac{\sigma^2}{\omega^2}$.
Lemma~\ref{lem:signal_to_noise} gives a heuristic argument linking this functional form to the asymptotic behavior of the MLE.

\subsection{Extension to VAR Processes of Higher Order}

Although our results only apply to state-space models based on an underlying VAR process of order~$1$, we could try to extend them to the more general case of VAR($K$) processes. Just for this Section, suppose~$X_t$ is no longer given by Equation~\eqref{eq:x_model}, but instead satisfies:
\begin{align*}
  X_t = \theta_1 X_{t-1} + \theta_2 X_{t-2} + ... + \theta_K X_{t-K} + \innov_t.
\end{align*}
Then we can represent this as a VAR(1) process using augmented variables \citep{lutkepohlNewIntroductionMultiple2005}.
Indeed, observe that defining $\widetilde{X}_t = \begin{pmatrix}
    X_t & X_{t-1} & \cdots & X_{t-K+1}
  \end{pmatrix}'$ and~$\widetilde{\innov}_t = \begin{pmatrix}
    \innov_t & 0 & \cdots & 0
  \end{pmatrix}'$ yields
\begin{align*}
  \widetilde{X}_t = \widetilde{\theta} \widetilde{X}_{t-1} + \widetilde\innov_t \qquad \text{with} \qquad
  \widetilde{\theta} = \begin{bmatrix}
    \theta_1 & \theta_2 & \cdots & \theta_{K-1} & \theta_K \\
    I_D      & 0        & \cdots & 0            & 0        \\
    0        & I_D      &        & 0            & 0        \\
    \vdots   &          & \ddots & \vdots       & \vdots   \\
    0        & 0        & \cdots & I_D          & 0
  \end{bmatrix}.
\end{align*}
Unfortunately, by this reasoning, the Markov sampling mechanism that generates~$\Pi_t$ gives rise to a new distribution for~$\widetilde{\Pi}_t$ which is no longer part of the same family.
Indeed, the augmented sampling process~$\widetilde{\Pi}_t$ is still Markovian but with a memory of size~$K$ instead of~$1$.
Therefore, the adaptation would require new arguments and we leave it for future work.

\section{Numerical Illustrations} \label{sec:experiments}

We now illustrate our results on simulated data.
All experiments were performed on a Dell Precision 5530 mobile workstation with Intel Core i7-8850H CPU (2.60GHz~$\times$ 12) and 31 GiB of RAM, running under Ubuntu 20.04.
Our code was written in Julia \citep{bezansonJuliaFreshApproach2017},
linear optimization problems were modeled using JuMP \citep{dunningJuMPModelingLanguage2017} and solved with the COIN-OR Clp solver \citep{forrestCoinorClpRelease2022}.
The reproducible Pluto notebook used to generate all the plots will be made available on GitHub as soon as the review procedure is complete and anonymity is no longer required.

\subsection{Data Generation}

Simulating a partially-observed VAR process with known transition matrix~$\theta$ allows us to compute the estimation error~$\lVert \widehat{\theta} - \theta \rVert_{\infty}$ and study the influence of parameters such as~$T$,~$D$,~$s$,~$p$,~$\omega$, etc.
Real values for~$\theta$ were drawn using independent standard Gaussian distributions for each coefficient, and then normalized to satisfy~$\lVert \theta \rVert_2 = \thetamax = \frac12$.
To simplify comparison with the theoretical bounds, we used a diagonal innovation covariance~$\Sigma = \sigma^2 I$ and set the sampling parameters to~$a = 1-b = p$, which amounts to independent sampling (except for the experiment that focuses specifically on the influence of~$b$).
When not mentioned explicitly, all other parameters are equal to their default values given below (we assume~$\omega$ is known):
\begin{equation*}
  T = 10000 \qquad D = 5 \qquad \sigma = 1.0 \qquad \omega = 0.1 \qquad p = 1.0.
\end{equation*}
Most of the simulations are run in a dense estimation scenario.
For those that require the sparse procedure, selecting a good regularization parameter~$\lambda$ is paramount: indeed, Theorem~\ref{thm:convergence_rate_theta} is only valid for a specific value of~$\lambda$ (which is not known in practice, but we can hope to approximate this near-optimal choice).

A standard way to tune~$\lambda$ would be cross-validation.
However, evaluating a choice of~$\lambda$ (and the resulting estimate~$\widehat{\theta}$) requires inferring the hidden state sequence~$X_t$ from the observations~$Y_t$.
If the projection matrices~$\Pi_t$ were deterministic, the inference could be performed with Kalman filtering \citep{kalmanNewApproachLinear1960}, but since they are stochastic, the distribution of~$(X, Y)$ is no longer jointly Gaussian and the justification behind the Kalman filter breaks down.
Finding an appropriate inference method in our setting will be the topic of future studies.

In the meantime, to tune~$\lambda$, we suppose that the sparsity level of the real transition matrix~$\theta$ is known.
We then use this target sparsity~$\widehat{s}$ to guide a dichotomy search on~$\lambda$, until we find a transition matrix estimate~$\widehat{\theta}$ whose row sparsity level is sufficiently close to~$\widehat{s}$.

\subsection{Results}

\begin{figure}[htbp]
  \begin{subfigure}{0.47\textwidth}
    \includegraphics[width=\textwidth]{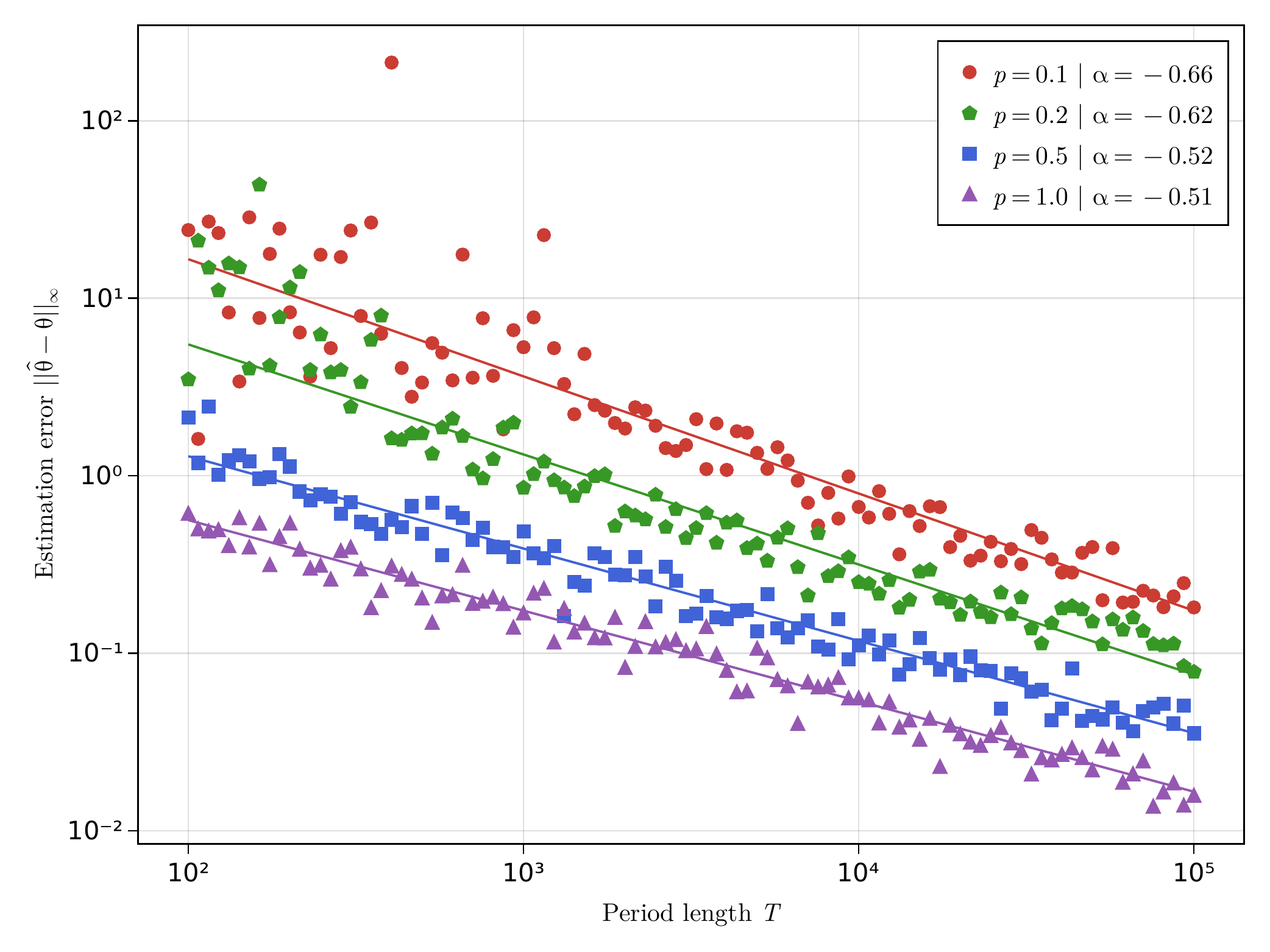}
    \subcaption{Influence of~$T$}
    \label{fig:influence_T}
  \end{subfigure}
  \hfill
  \begin{subfigure}{0.47\textwidth}
    \includegraphics[width=\textwidth]{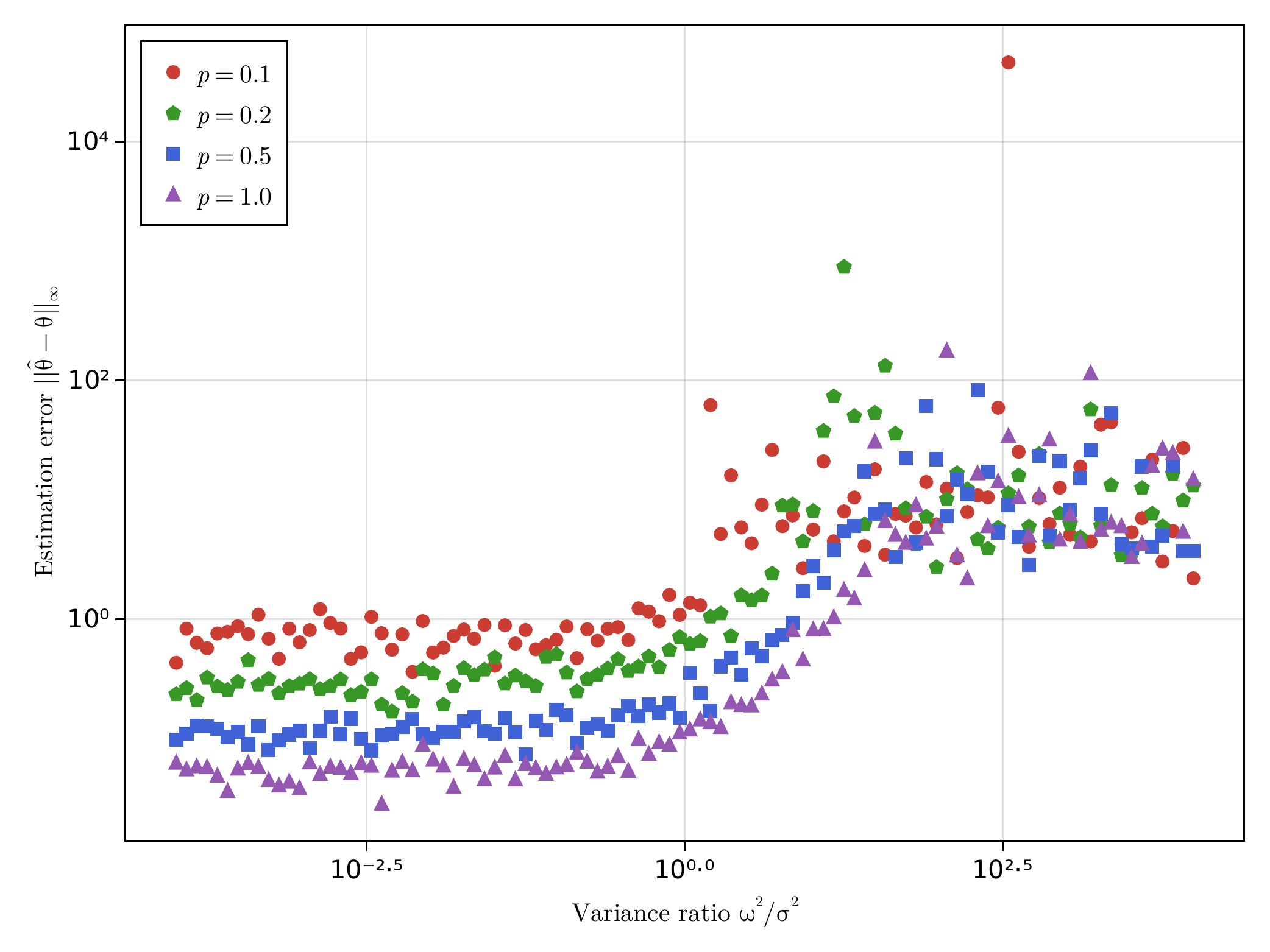}
    \subcaption{Influence of $\omega$}
    \label{fig:influence_omega}
  \end{subfigure}

  \begin{subfigure}{0.47\textwidth}
    \includegraphics[width=\textwidth]{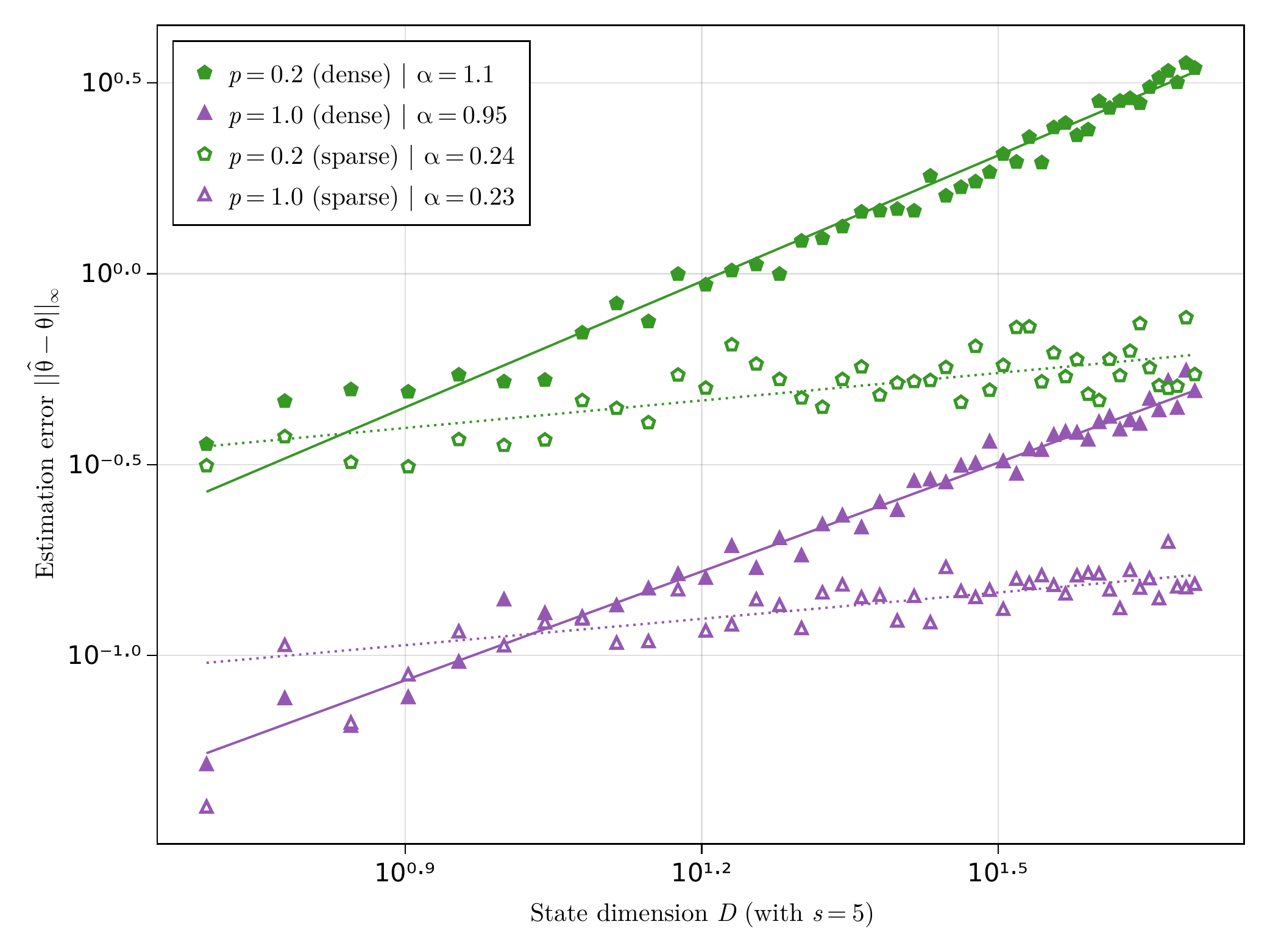}
    \subcaption{Influence of~$D$ with fixed~$s$}
    \label{fig:influence_D_fixed_s}
  \end{subfigure}
  \hfill
  \begin{subfigure}{0.47\textwidth}
    \includegraphics[width=\textwidth]{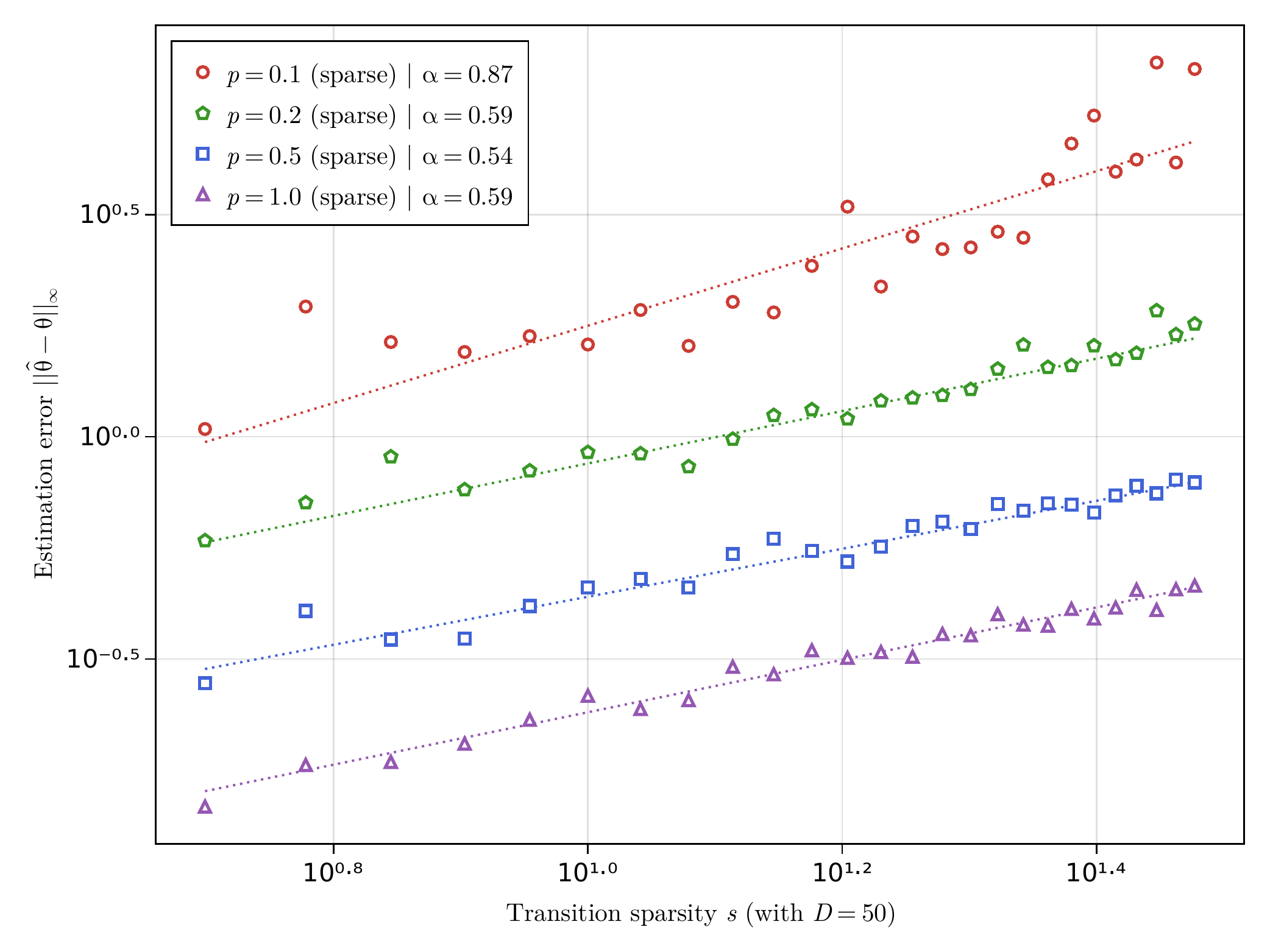}
    \subcaption{Influence of~$s$ with fixed~$D$}
    \label{fig:influence_s_fixed_D}
  \end{subfigure}

  \begin{subfigure}{0.47\textwidth}
    \includegraphics[width=\textwidth]{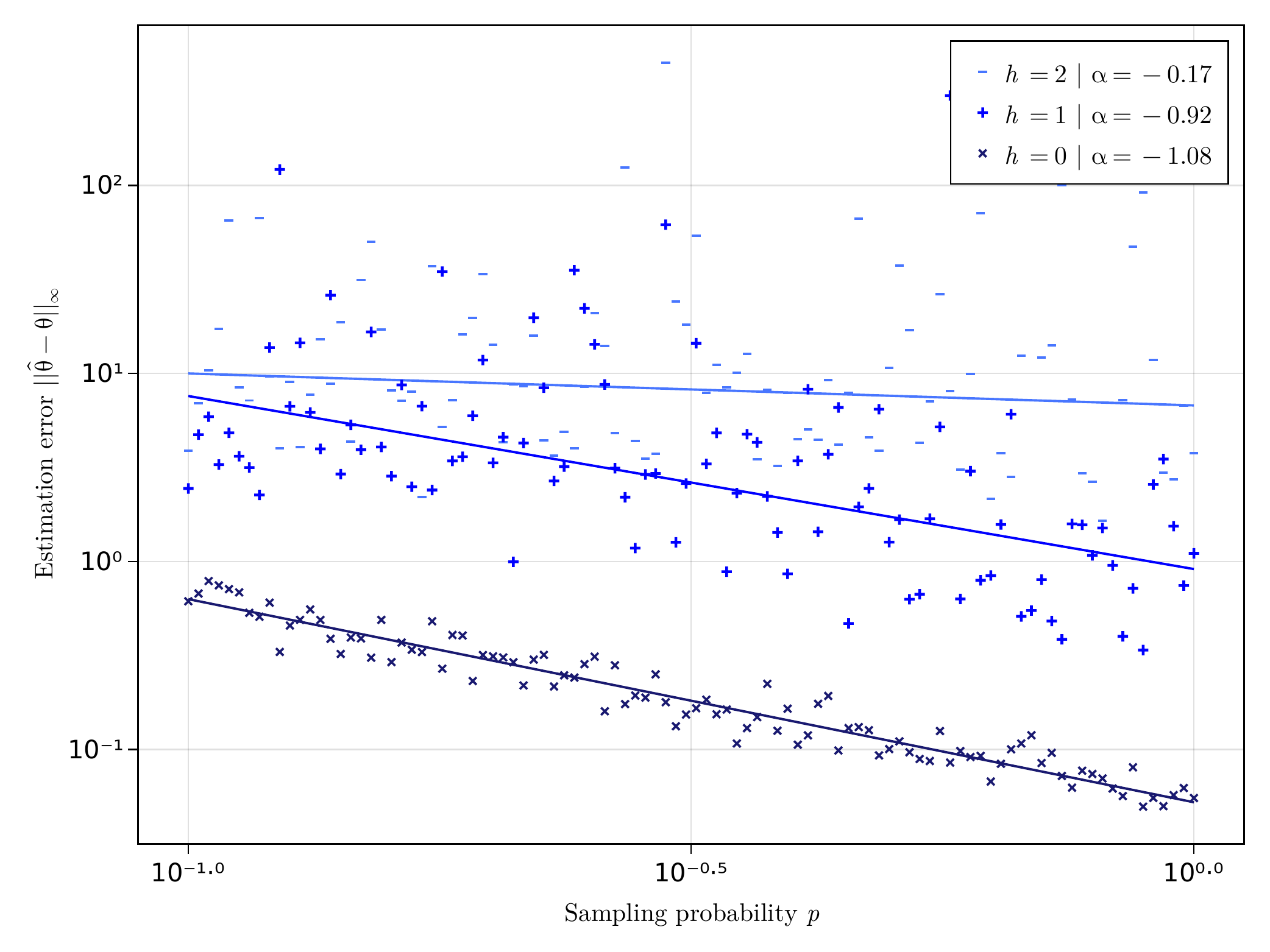}
    \subcaption{Influence of~$p$ and~$h_0$}
    \label{fig:influence_h0}
  \end{subfigure}
  \hfill
  \begin{subfigure}{0.47\textwidth}
    \includegraphics[width=\textwidth]{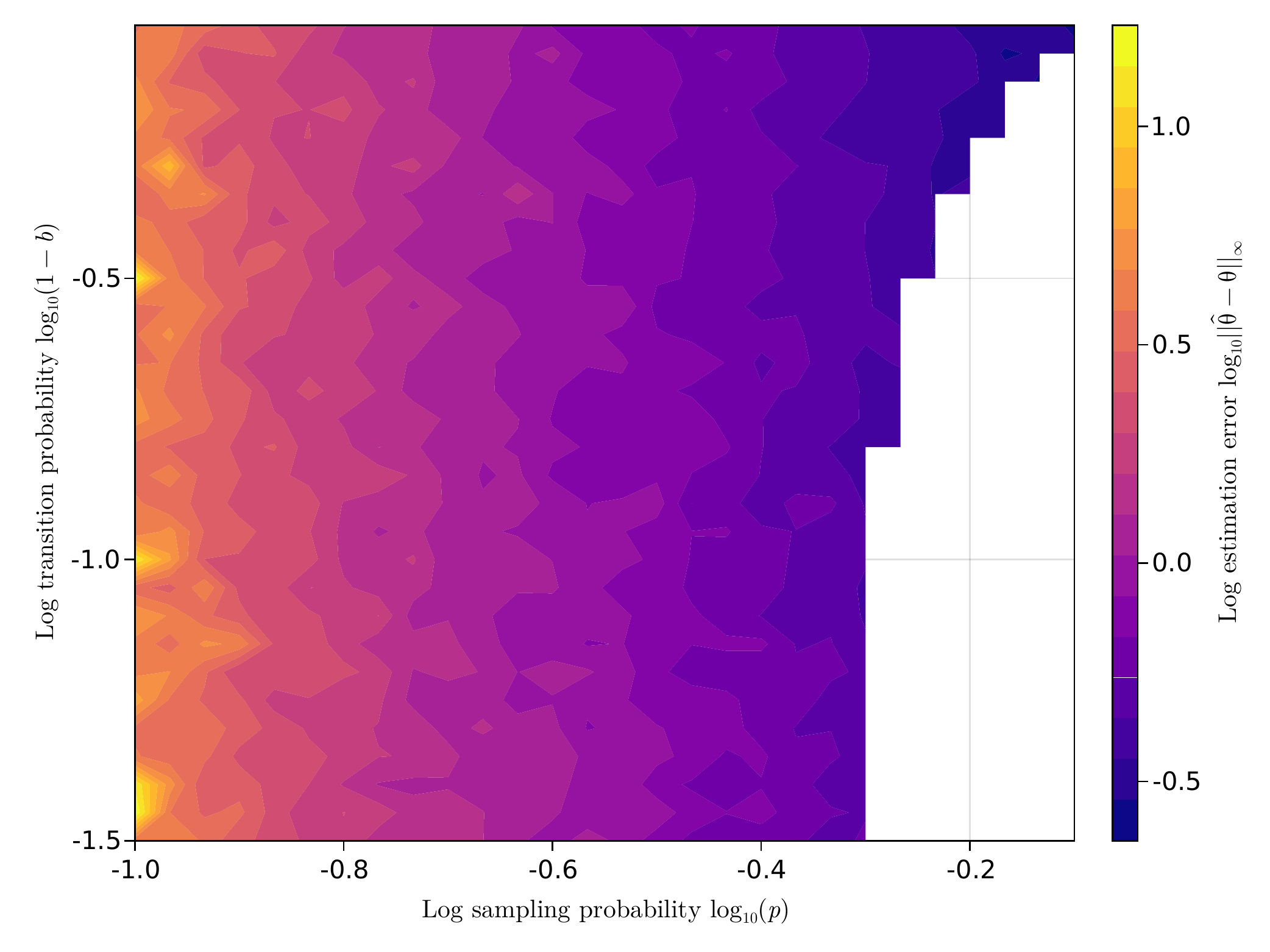}
    \subcaption{Influence of~$(p, b)$}
    \label{fig:influence_pb}
  \end{subfigure}

  \caption{Impact of model parameters on the estimation error}
  \label{fig:simulations}
\end{figure}

The main results are presented on Figure~\ref{fig:simulations}.
With the exception of~\ref{fig:influence_pb}, all plots have the estimation error~$\lVert \widehat{\theta} - \theta \rVert_{\infty}$ on their~$y$-axis, and some parameter of interest on their~$x$-axis.
The axes are displayed with logarithmic scaling, in order to highlight the exponent of the dependencies.
Each point corresponds to one run of the algorithm, aimed at estimating a single random value of~$\theta$.
When a straight line is added to a scatter plot, it is the result of a Theil-Sen regression \citep{senEstimatesRegressionCoefficient1968} applied to the points of the same color: its slope is denoted by~$\alpha$ in the legend.

\medskip

Figure~\ref{fig:influence_T} confirms that the error decreases as~$1/\sqrt{T}$.
This is only true because the sampling probability~$p$ remains constant.
If instead we had a limited observation budget but an increasing temporal precision, we would have~$p \propto 1/T$, in which case the error would increase as~$\sqrt{T}$ instead of decreasing.

Figure~\ref{fig:influence_omega} exhibits three clearly identifiable regimes with respect to the noise variance.
In the first one, corresponding to~$\omega / \sigma \ll 1$, the error remains small and constant.
Then, the error increases when~$\omega / \sigma \simeq 1$.
In the third phase, corresponding to~$\omega / \sigma \gg 1$, the error remains high and volatile.
This is consistent with the theoretical dependency in~$1 + \omega^2 / \sigma^2$.

Figure~\ref{fig:influence_D_fixed_s} compares the respective benefits of sparse and dense estimation by increasing the ambient dimension~$D$ while keeping the true sparsity level~$s$ constant.
The error for~$\widehat{\theta}^{\text{dense}}$ scales linearly with~$D$, while its sparse counterpart~$\widehat{\theta}$ achieves a much slower error growth.
As a side note, the fact that the error grows with~$D$ is not surprising. Indeed, we measure it with the~$\ell_{\infty}$ operator norm, which scales with the dimension of the matrix.

Figure~\ref{fig:influence_s_fixed_D} takes the opposite perspective by increasing the number of nonzero coefficients in a space of fixed dimension.
In this case, the theory predicts that the error should scale linearly with~$s$, but the slope we observe is below~$1$.
Our interpretation is that the function~$\gamma_u(\theta)$ also depends on the sparsity level in complicated ways through~$\theta$, especially since the real values are renormalized to satisfy~$\lVert \theta \rVert_2 = \frac12$.

Figure~\ref{fig:influence_h0} shows that the error evolves as~$1/p$, which is consistent with our upper bound.
It is also informative w.r.t. the choice of~$h_0$.
Choosing~$h_0 = 0$ means we need to know~$\omega$ to perform estimation.
If this parameter is unknown, we can choose~$h_0 \geq 1$, which leads to a much higher variance of the estimator (this is not visible in our results since we wrote the proof in the case where~$h_0 = 0$).
An alternate solution would be to keep~$h_0 = 0$ and plug in a guess such as~$\omega = 0$, effectively trading lower variance for a higher bias.

Figure~\ref{fig:influence_pb} takes a closer look at the role of the Markov sampling parameter~$b$.
The white region corresponds to values of~$b$ for which there is no~$a$ such that~$p = a / (a+b)$.
On this logarithmic heatmap, we see regularly-spaced and nearly vertical contour lines, which is consistent with a convergence rate of~$1/p$ that does not depend on~$b$.
We conjecture that~$1/p$ is the true order of magnitude for the optimal error, and that the dependencies~$1/\sqrt{pq_u}$ and~$1/\sqrt{p q_\ell}$ from our Theorems could be refined and brought together with a more careful theoretical analysis.

\section{Conclusion and Perspectives} \label{sec:conclusion}

In this paper, we studied a partially-observed VAR process, whose latent state components are randomly projected and corrupted with noise before being observed.
The temporal correlations within the sampling process are a novel feature, and combining both sources of randomness (discrete and continuous) required the use of tailored probabilistic methods.
We provided upper and lower bounds for the optimal estimation error on the transition matrix, and found that these bounds roughly match.
Our analysis, supported by empirical results, sheds light on the intrinsic difficulty of such statistical problems, which arise naturally when analyzing several types of network processes.

However, our study leaves many questions open for future work.
On the theoretical side, bridging the gap between our bounds will probably require more sophisticated tools to capture the precise behavior of Markov sampling.
Going from the uniform case, where the sampling probability equals~$p$ everywhere, to more realistic heterogeneous settings, is also a worthy avenue to explore.
On the practical side, this linear Gaussian model may not perform well when applied to real prediction problems.
Finding ways to enhance it will be necessary if we want to gather insights on complex high-dimensional dynamics, especially for graph-structured data.

\section*{Acknowledgements}

The authors would like to thank their colleague Axel Parmentier for his collaboration and careful proofreading. Clément Mantoux, Éloïse Berthier, Maxime Godin and Pierre Marion (by alphabetical order of first names) provided more support and advice than can be described in such a constrained space. We also thank Emeline Luirard for her help with a critical Lemma, and Mathieu Besançon for his last-minute look at the draft.

We are very grateful to the SNCF, especially its departments DGEX Solutions (SNCF Réseau) and Transilien (SNCF Voyageurs) for providing us with the inspiration behind this work.

\bibliography{POVAR}

\appendix

\section{Proof of the Estimator's Convergence Rate} \label{sec:convergence_proof}

Here we present the detailed proof of Theorem~\ref{thm:convergence_rate_theta}.

\subsection{Overview}

The main steps of the argument are the following:
\begin{enumerate}
  \item Prove the Yule-Walker Equation~\eqref{eq:yule_walker} and deduce an expression for the covariance matrix of~$X$ (Lemmas~\ref{lem:x_covariance} and~\ref{lem:norm_gamma}).
  \item Justify the formula of Equation~\eqref{eq:gamma_estimator} for~$\widehat{\Gamma}_h$ by showing that it defines an unbiased estimator of~$\Gamma_h$ (Lemmas~\ref{lem:gamma_estimator_unbiased} and~\ref{lem:proj_moments}).
  \item Fixing two indices~$d_1$ and~$d_2$, rewrite~$(\widehat{\Gamma}_h - \Gamma_h)_{d_1, d_2}$ using quadratic forms~$g_a' \Psi_a' L \Psi_b g_b$ of standard Gaussian vectors (Lemma~\ref{lem:proba_split}).
  \item Control the deviation of the matrix~$L$ using discrete concentration inequalities (Lemmas~\ref{lem:L_norm_reformulation},~\ref{lem:L_spectral_bound},~\ref{lem:proj_concentration}, ~\ref{lem:L_frob_bound} and~\ref{lem:L_trace}).
  \item Apply a conditional version of the Hanson-Wright inequality (Lemma~\ref{lem:conditional_hanson_wright}) to the quadratic forms~$g_a' \Psi_a' L \Psi_b g_b$ (Lemma~\ref{lem:hanson_wright_application}).
  \item Obtain a high-probability control on~$\lVert \widehat{\Gamma}_h - \Gamma_h \rVert_{\max}$ with a union bound (Lemma~\ref{lem:convergence_rate_gamma}).
  \item Deduce the error of~$\widehat{\theta}$ from the error of~$\widehat{\Gamma}_{h_0}$ and~$\widehat{\Gamma}_{h_0+1}$ by drawing inspiration from \citet{hanDirectEstimationHigh2015} (Lemmas~\ref{lem:feasibility_theta} and ~\ref{lem:error_max_norm}).
\end{enumerate}

\subsection{Covariance Matrices}

The Yule-Walker equation is a direct consequence of the VAR recursion, as can be seen from this Lemma.

\begin{lemma}[VAR covariance matrices] \label{lem:x_covariance}
  The autocovariance matrices of the stationary VAR process defined by Equation~\eqref{eq:x_model} have the following expressions:
  \begin{align*}
    \Gamma_0(\theta) & = \Cov_\theta[X_t] = \sum_{k=0}^{\infty}{\theta^k \Sigma \theta'^k} \\
    \Gamma_h(\theta) & =  \Cov_\theta[X_{t+h}, X_{t}] = \theta^h \Gamma_0(\theta).
  \end{align*}
\end{lemma}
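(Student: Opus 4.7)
The plan is to unroll the VAR recursion into a moving-average representation and then compute covariances term by term, exploiting the independence of the innovations.

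First I would establish the infinite moving-average form. Iterating the defining relation $X_t = \theta X_{t-1} + \varepsilon_t$ backwards $K$ times gives $X_t = \sum_{k=0}^{K-1}\theta^k \varepsilon_{t-k} + \theta^K X_{t-K}$. The spectral bound $\lVert \theta \rVert_2 \leq \vartheta < 1$ implies $\lVert \theta^K \rVert_2 \to 0$ geometrically, so the residual term vanishes in $L^2$ (using stationarity, which makes $\bbE[\lVert X_{t-K}\rVert_2^2]$ a bounded constant). Hence $X_t = \sum_{k=0}^{\infty}\theta^k \varepsilon_{t-k}$ as a convergent series in $L^2$. The same spectral bound ensures absolute convergence in operator norm of any series of the form $\sum_k \theta^k M (\theta')^k$ whenever $M$ is a fixed matrix.

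Next I would plug this representation into the covariance. Since the innovations $(\varepsilon_s)_{s\in\mathbb{Z}}$ are i.i.d.\ with covariance $\Sigma$, the cross-covariance of two summands $\theta^j \varepsilon_{t-j}$ and $\theta^k \varepsilon_{t-k}$ vanishes unless $j = k$, in which case it equals $\theta^k \Sigma (\theta')^k$. Summing yields the first claim $\Gamma_0(\theta) = \sum_{k=0}^{\infty} \theta^k \Sigma (\theta')^k$.

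For $\Gamma_h(\theta)$ with $h \geq 1$, I would avoid redoing the series computation and instead exploit the recursion directly. Writing $X_{t+h} = \theta X_{t+h-1} + \varepsilon_{t+h}$ and observing that $\varepsilon_{t+h}$ depends only on innovations strictly later than those appearing in $X_t$ (which lives in the $\sigma$-algebra generated by $\{\varepsilon_s : s \leq t\}$), independence gives $\Cov[\varepsilon_{t+h}, X_t] = 0$. Therefore $\Gamma_h(\theta) = \theta\,\Gamma_{h-1}(\theta)$, and a trivial induction on $h$ yields $\Gamma_h(\theta) = \theta^h \Gamma_0(\theta)$.

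The only delicate step is the justification that the truncated-series remainder $\theta^K X_{t-K}$ vanishes in $L^2$; this is where stationarity and the strict spectral bound $\vartheta < 1$ assumed in~\eqref{eq:bigtheta_definition} are essential, but all other calculations are mechanical.
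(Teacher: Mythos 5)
Your proof is correct. For the lag-$h$ part it coincides with the paper's argument: write $X_{t+h} = \theta X_{t+h-1} + \innov_{t+h}$, note that $\innov_{t+h}$ is independent of $X_t$, and induct to get $\Gamma_h(\theta) = \theta^h \Gamma_0(\theta)$. For $\Gamma_0(\theta)$, however, you take a genuinely different route: you unroll the recursion into the moving-average representation $X_t = \sum_{k\geq 0} \theta^k \innov_{t-k}$ (justifying the $L^2$ convergence of the remainder via $\lVert \theta \rVert_2 \leq \thetamax < 1$ and stationarity) and compute the covariance term by term, whereas the paper stays at the level of second moments, derives the discrete Lyapunov fixed-point equation $\Gamma_0(\theta) = \theta \Gamma_0(\theta) \theta' + \Sigma$ from the one-step recursion, and asserts that the series is its unique stationary solution. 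Your expansion buys something the paper leaves implicit — an explicit construction of the stationary covariance together with a convergence proof, rather than an appeal to uniqueness of the solution of the matrix equation — at the cost of implicitly working with the two-sided extension of the process (innovations $\innov_s$ for $s \leq 0$), which is the standard reading of \enquote{stationary VAR process} and is harmless here. The paper's fixed-point argument is shorter and needs only the quantities already defined on $[T]$; both are valid.
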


\begin{proof}
  We start by noting that according to Equation~\eqref{eq:x_model}, the stacked vector~$X = (X_t)_{t \in [T]}$ follows a~$TD$-dimensional centered multivariate Gaussian distribution.
  The covariance matrix of~$X_t$ can be deduced from the recursion:
  \begin{equation*}
    \Gamma_0(\theta) = \Cov_\theta[X_t] = \Cov_\theta[\theta X_{t-1} + \innov_t] = \theta \Cov_\theta[X_{t-1}] \theta' + \Sigma = \theta \Gamma_0(\theta) \theta' + \Sigma.
  \end{equation*}
  There is a unique stationary solution:
  \begin{equation*}
    \Gamma_0(\theta) = \sum_{k=0}^{\infty}{\theta^k \Sigma \theta'^k}.
  \end{equation*}
  The covariance matrix between~$X_{t + h}$ and~$X_t$ is obtained similarly:
  \begin{align*}
    \Gamma_h(\theta) & = \Cov_\theta[X_{t + h}, X_t] = \bbE[X_{t + h} X_t']  = \bbE[(\theta X_{t + h-1} + \innov_{t + h}) X_t'] \\
                     & = \theta \Cov_\theta[X_{t + h-1}, X_t] = \theta^h \Cov_\theta[X_t, X_t] = \theta^h \Gamma_0(\theta).
  \end{align*}
  And~$\Cov_\theta[X_t, X_{t + h}] = \Cov_\theta[X_{t + h}, X_t]'$.
  In other words, we just proved that
  \begin{equation*}
    \Cov_\theta [X] =    \begin{bmatrix}
      \Gamma_0(\theta)              & \Gamma_0(\theta) \theta'^1 & \Gamma_0(\theta) \theta'^2 & \cdots & \Gamma_0(\theta) \theta'^{T-1} \\
      \theta^1 \Gamma_0(\theta)     & \Gamma_0(\theta)           & \Gamma_0(\theta) \theta'^1 &        &                                \\
      \theta^2 \Gamma_0(\theta)     & \theta^1 \Gamma_0(\theta)  & \Gamma_0(\theta)           &        &                                \\
      \vdots                        &                            &                            & \ddots &                                \\
      \theta^{T-1} \Gamma_0(\theta) &                            &                            &        & \Gamma_0(\theta)
    \end{bmatrix}
  \end{equation*}
\end{proof}

The following result will come in handy later.

\begin{lemma}[Norm of~$\Gamma_0(\theta)$] \label{lem:norm_gamma}
  The covariance matrix~$\Gamma_0(\theta)$ satisfies
  \begin{equation*}
    \lVert \Gamma_0(\theta) \rVert_2 \leq \frac{\sigma_{\max}^2}{1 - \thetamax^2}
  \end{equation*}
\end{lemma}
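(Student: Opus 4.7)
The plan is to apply the triangle inequality for the spectral norm directly to the series representation of~$\Gamma_0(\theta)$ provided by Lemma~\ref{lem:x_covariance}, and then to bound each term via sub-multiplicativity of~$\lVert \cdot \rVert_2$.

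First I would write
\begin{equation*}
  \lVert \Gamma_0(\theta) \rVert_2 = \left\lVert \sum_{k=0}^{\infty} \theta^k \Sigma \theta'^k \right\rVert_2 \leq \sum_{k=0}^{\infty} \lVert \theta^k \Sigma \theta'^k \rVert_2.
\end{equation*}
Sub-multiplicativity of the spectral norm together with $\lVert \theta' \rVert_2 = \lVert \theta \rVert_2$ and $\lVert \Sigma \rVert_2 = \sigma_{\max}^2$ gives $\lVert \theta^k \Sigma \theta'^k \rVert_2 \leq \lVert \theta \rVert_2^{2k} \sigma_{\max}^2$. Using the assumption $\lVert \theta \rVert_2 \leq \thetamax < 1$ from the definition of~$\Theta_s$ in Equation~\eqref{eq:bigtheta_definition}, each term is bounded by~$\thetamax^{2k} \sigma_{\max}^2$.

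Finally, I would sum the geometric series $\sum_{k=0}^{\infty} \thetamax^{2k} = 1/(1-\thetamax^2)$, which is finite precisely because $\thetamax < 1$, yielding the announced bound. The convergence of the original matrix series is guaranteed a posteriori by the same geometric estimate, so no separate justification is needed.

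There is no real obstacle here: the argument is essentially a one-line computation, once one recognizes that the stationarity assumption~$\thetamax<1$ makes the geometric sum converge. The only subtlety worth a remark is that we replace the problem-dependent quantity~$\lVert \theta \rVert_2$ by the uniform bound~$\thetamax$, which is what makes the resulting bound uniform over~$\theta \in \Theta_s$ and thus usable in the statements of Section~\ref{sec:lower_upper_bound}.
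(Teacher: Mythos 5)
Your argument is correct and coincides with the paper's own proof: triangle inequality on the series from Lemma~\ref{lem:x_covariance}, sub-multiplicativity of the spectral norm, and summation of the geometric series using~$\lVert \theta \rVert_2 \leq \thetamax < 1$. The only cosmetic difference is that you bound~$\lVert \theta \rVert_2$ by~$\thetamax$ before summing rather than after, which changes nothing.
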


\begin{proof}
  By Lemma~\ref{lem:x_covariance},
  \begin{equation*}
    \lVert \Gamma_0(\theta) \rVert_2 \leq \sum_{k=0}^{\infty} \lVert \theta^k \Sigma \theta'^k \rVert_2 \leq \sum_{k=0}^{\infty} \lVert \theta \rVert_2^k \lVert \Sigma \rVert_2 \lVert \theta \rVert_2^k = \frac{\lVert \Sigma \rVert_2}{1 - \lVert \theta \rVert_2^2} \leq \frac{\sigma_{\max}^2}{1 - \thetamax^2}.
  \end{equation*}
\end{proof}

\subsection{Construction of the Covariance Estimator}

Now we justify the construction of our covariance estimator.
Let~$h_0 = 0$: for most of the proof, we fix a lag value~$h \in \{h_0, h_0+1\} = \{0, 1\}$.

\begin{lemma}[Bias of the covariance estimator] \label{lem:gamma_estimator_unbiased}
  The estimator~$\widehat{\Gamma}_h$ given by Equation~\eqref{eq:gamma_estimator} for the covariance matrix~$\Gamma_h$ is unbiased.
\end{lemma}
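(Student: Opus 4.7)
The plan is to show $\mathbb{E}[\widehat{\Gamma}_h] = \Gamma_h(\theta)$ entrywise, by exploiting the mutual independence of the innovations $\varepsilon$, the projections $\pi$, and the observation noise $\eta$. The first reduction I would make is on the proxy variable: since $\Pi_t = \diag(\pi_t)$ is a diagonal matrix with entries in $\{0,1\}$, its Moore-Penrose pseudoinverse equals itself and $\Pi_t \Pi_t = \Pi_t$, which gives $\Pi_t^\dagger Y_t = \Pi_t X_t + \Pi_t \eta_t$. Expanding the outer product $(\Pi_{t+h}^\dagger Y_{t+h})(\Pi_t^\dagger Y_t)'$ then produces four pieces: a signal-signal term $\Pi_{t+h} X_{t+h} X_t' \Pi_t$, two signal-noise cross terms, and a noise-noise term $\Pi_{t+h} \eta_{t+h} \eta_t' \Pi_t$.

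Next, I would compute the expectation of each piece coordinate by coordinate. Because $\eta$ is zero-mean and independent of $(X, \pi)$, both cross terms vanish. For the signal-signal piece, the $(d_1, d_2)$ entry factors, by independence of $\pi$ from $X$, as
\[
\mathbb{E}\bigl[\pi_{t+h, d_1} \pi_{t, d_2}\bigr]\, \mathbb{E}\bigl[X_{t+h, d_1} X_{t, d_2}\bigr] = S(h)_{d_1, d_2}\, \Gamma_h(\theta)_{d_1, d_2},
\]
so the elementwise division by $S(h)$ in the definition of $\widehat{\Gamma}_h$ combines with the temporal average over $t$ to return exactly $\Gamma_h(\theta)$.

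The last piece to handle is the noise-noise term. When $h \geq 1$, independence of $\eta_{t+h}$ and $\eta_t$ makes it vanish. When $h = 0$, only its diagonal survives, since $\eta_{t, d_1}$ and $\eta_{t, d_2}$ are independent for $d_1 \neq d_2$; the diagonal entries contribute $S(0)_{d,d}\, \omega^2 = p \omega^2$ in expectation, which after the elementwise division by $S(0)_{d,d} = p$ yields a spurious $\omega^2 I$ that is canceled exactly by the explicit $-\mathbf{1}_{\{h=0\}}\, \omega^2 I$ correction in Equation~\eqref{eq:gamma_estimator}. I would rely on Lemma~\ref{lem:proj_moments} for the closed-form expression of $S(h)$ and to confirm that its entries are strictly positive, so the elementwise inverse is well defined. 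There is no real obstacle here; the only subtlety is that $S(h)_{d_1, d_2}$ is not simply $p^2$ along the diagonal, because the Markov sampling within each coordinate produces nontrivial two-time marginals, so unbiasedness relies on performing the rescaling entry by entry rather than globally.
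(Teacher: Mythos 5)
Your proposal is correct and follows essentially the same route as the paper: expand $(\Pi_{t+h}^\dagger Y_{t+h})(\Pi_t^\dagger Y_t)'$ using $\Pi_t^\dagger=\Pi_t$, kill the cross terms by independence and zero mean of $\noise$, factor the signal term into $S(h)\odot\Gamma_h$ via independence of $\pi$ and $X$, and observe that the $h=0$ noise term rescales to exactly $\omega^2 I$ (since $\bbE[\pi_{t,d}]=\bbE[\pi_{t,d}^2]$), which the explicit correction cancels. The only cosmetic difference is that the paper first takes the conditional expectation given $\Pi$ and then averages over $\Pi$ matrix-wise with Hadamard products, whereas you compute entrywise directly; the substance is identical.
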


\begin{proof}
  First, let us remember that since~$\Pi_t = \diag(\pi_t)$ is diagonal and binary, we also have~$\Pi_t^{\dagger} = \Pi_t' = \Pi_t$.
  By Equation~\eqref{eq:y_model},
  \begin{equation} \label{eq:estimator_decomp}
    \begin{aligned}
      (\Pi_{t+h}^\dagger Y_{t+h})(\Pi_t^\dagger Y_t)'
       & = \Pi_{t+h}^\dagger (\Pi_{t+h} X_{t+h} + \noise_{t+h})(X_t' \Pi_t' + \noise_t') \Pi_t^\dagger{}'                              \\
       & = \diag(\pi_{t+h}) \left( X_{t+h} X_t' + X_{t+h} \noise_t' + \noise_{t+h} X_t' + \noise_{t+h} \noise_t' \right) \diag(\pi_t).
    \end{aligned}
  \end{equation}
  Taking the conditional expectation and removing the cross-product terms (by independence of~$X$ and~$\Pi$), we get:
  \begin{equation*}
    \bbE[(\Pi_{t+h}^\dagger Y_{t+h})(\Pi_t^\dagger Y_t)'  |  \Pi] = \diag(\pi_{t+h}) \left( \bbE[X_{t+h} X_t'] + \bbE[\noise_{t+h} \noise_t'] \right) \diag(\pi_t).
  \end{equation*}
  Since~$\bbE \left[ X_{t+h} X_t' \right] = \Gamma_h$ and~$\bbE[\noise_{t+h} \noise_t] = \one_{\{h=0\}} \omega^2 I$, we are left with:
  \begin{equation*}
    \bbE[(\Pi_{t+h}^\dagger Y_{t+h})(\Pi_t^\dagger Y_t)'  |  \Pi]
    = (\pi_{t+h} \pi_t') \odot \Gamma_h + \one_{\{h=0\}} \omega^2 \diag(\pi_t).
  \end{equation*}
  where~$\odot$ is the elementwise Hadamard product.
  We now take the expectation w.r.t.~$\Pi$:
  \begin{equation*}
    \bbE[(\Pi_{t+h}^\dagger Y_{t+h})(\Pi_t^\dagger Y_t)']
    = \bbE[\pi_{t+h} \pi_t'] \odot \Gamma_h + \one_{\{h=0\}} \omega^2 \bbE[\diag(\pi_t)].
  \end{equation*}
  Dividing elementwise by the scaling matrix~$S(h) = \bbE[\pi_{t+h} \pi_t']$, we get
  \begin{align*}
    \bbE\left[ \frac{1}{\bbE[\pi_{t+h} \pi_t']} \odot (\Pi_{t+h}^\dagger Y_{t+h})(\Pi_t^\dagger Y_t)'\right]
     & = \Gamma_h + \one_{\{h=0\}} \omega^2 \bbE[\diag(\pi_t)] \odot \frac{1}{\bbE[\pi_t \pi_t']} \\
     & = \Gamma_h + \one_{\{h=0\}} \omega^2 \diag \left(\frac{\bbE [\pi_t]}{\bbE[\pi_t^2]}\right) \\
     & = \Gamma_h + \one_{\{h=0\}} \omega^2 I
  \end{align*}
  which shows that our estimator
  \begin{equation*}
    \widehat{\Gamma}_h = \frac{1}{T-h} \sum_{t=1}^{T-h} \frac{1}{\bbE[\pi_{t+h} \pi_t']} \odot  (\Pi_{t+h}^\dagger Y_{t+h})(\Pi_t^\dagger Y_t)' - \one_{\{h=0\}} \omega^2 I
  \end{equation*}
  is unbiased.
\end{proof}

Note that since the process~$(\Pi_t)$ is stationary, the coefficients of~$S(h)$ do not depend on~$t$.
They are computed in the next Lemma.

\begin{lemma} \label{lem:proj_moments}
  The second-order moments of~$\pi$ are given by
  \begin{equation*}
    S(h)_{d_1, d_2} = \bbE \left[\pi_{t+h,d_1} \pi_{t,d_2}\right] = \begin{cases}
      p^2                    & \text{if~$d_1 \neq d_2$}             \\
      p                      & \text{if~$d_1 = d_2$ and~$h = 0$}    \\
      p^2 + p(1-p) (1-a-b)^h & \text{if~$d_1 = d_2$ and~$h \geq 1$}
    \end{cases}
  \end{equation*}
  In particular, every coefficient of the scaling matrix~$S(h)$ is lower-bounded by
  \begin{equation*}
    \min_{d_1, d_2, h} S(h)_{d_1, d_2} = \min \{p^2, p(1-b)\} = p q_u \quad \text{where} \quad q_u = \min \{p, 1-b\}.
  \end{equation*}
\end{lemma}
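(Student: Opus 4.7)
The plan is to split the computation by cases on whether $d_1 = d_2$ and on the value of $h$, and then to invoke a spectral analysis of the two-state chain $\mathcal{T}$ for the only non-trivial case.

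For $d_1 \neq d_2$, I would use independence across dimensions (one of the standing modeling assumptions) to factor $\bbE[\pi_{t+h,d_1}\pi_{t,d_2}] = \bbE[\pi_{t+h,d_1}]\bbE[\pi_{t,d_2}] = p \cdot p = p^2$. For $d_1 = d_2$ and $h = 0$, the Bernoulli identity $\pi_{t,d}^2 = \pi_{t,d}$ combined with stationarity immediately gives $p$.

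For $d_1 = d_2 = d$ and $h \geq 1$, I would condition on $\pi_{t,d}$ and use the Markov property together with stationarity to write
\[
\bbE[\pi_{t+h,d}\pi_{t,d}] = \bbP(\pi_{t,d}=1)\,(\mathcal{T}^h)_{1,1} = p\,(\mathcal{T}^h)_{1,1}.
\]
To compute $(\mathcal{T}^h)_{1,1}$, I would diagonalize $\mathcal{T}$, whose eigenvalues are $1$ (with left eigenvector $(1-p,p)$, the invariant distribution) and $1-a-b$ (the trace minus $1$). The standard two-state Markov chain formula then yields $(\mathcal{T}^h)_{1,1} = p + (1-p)(1-a-b)^h$, and multiplying by $p$ gives the claimed $S(h)_{d,d} = p^2 + p(1-p)(1-a-b)^h$.

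For the minimum bound, I would first observe that $p \geq p^2$, so the diagonal $h=0$ entry is never the smallest. Comparing $p^2$ with $p^2 + p(1-p)(1-a-b)^h$ for $h \geq 1$: when $1-a-b \geq 0$, the latter exceeds $p^2$, so the minimum of that family is $p^2$; when $1-a-b < 0$, the sequence $(1-a-b)^h$ is negative with magnitude decreasing in $h$ (since $|1-a-b| < 1$ under the assumption $\chi \leq a, b \leq 1-\chi$), so its minimum is attained at $h = 1$. Substituting $p = a/(a+b)$ and using the factorization $a + b - ab - b^2 = (1-b)(a+b)$, the $h=1$ value simplifies cleanly to $p(1-b)$, so the overall minimum becomes $\min\{p^2, p(1-b)\} = p\min\{p, 1-b\} = p q_u$.

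The main obstacle, rather mild in this case, is organizing the case split on the sign of $1-a-b$ and recognizing that $|(1-a-b)^h|$ is maximized at $h = 1$; everything else is routine two-state Markov chain bookkeeping plus the algebraic identity that reduces $p^2 + p(1-p)(1-a-b)$ to $p(1-b)$.
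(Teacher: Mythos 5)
Your proposal is correct and follows essentially the same route as the paper: independence across dimensions for $d_1 \neq d_2$, stationarity plus the two-state formula $(\mathcal{T}^h)_{11} = p + (1-p)(1-a-b)^h$ for $d_1 = d_2$, and the identity $p + (1-p)(1-a-b) = 1-b$ to identify the minimum as $\min\{p^2, p(1-b)\} = pq_u$. The only nitpick is the phrase \enquote{$(1-a-b)^h$ is negative with magnitude decreasing in $h$}, which holds only for odd $h$ when $1-a-b<0$; the correct statement is that the minimum over $h \geq 1$ is attained at $h=1$, which is what you use anyway.
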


\begin{proof}
  Let~$i = (t+h, d_1)$ and~$j = (t, d_2)$ be two indices in~$[T] \times [D]$.
  We have~$\bbE [\pi_i] = \bbE [\pi_i^2] = p$.
  If~$d_1 \neq d_2$, then the variables~$\pi_i$ and~$\pi_j$ belong to independent Markov chains, and thus~$\bbE[\pi_i \pi_j] = p^2$.
  Otherwise, we have~$i = (t+h, d)$ and~$j = (t, d)$, which means these two variables are part of the same Markov chain. Stationarity yields
  \begin{equation*}
    \bbE[\pi_i \pi_j] = \bbP(\pi_{t,d} = 1) \times \bbP(\pi_{t+h,d} = 1  |  \pi_{t,d} = 1) =  p (\mathcal{T}^h)_{11}.
  \end{equation*}
  When diagonalizing the transition matrix~$\mathcal{T}$, we see that the bottom-right coefficient of~$\mathcal{T}^h$ is
  \begin{equation*}
    (\mathcal{T}^h)_{11} = \frac{a + b(1 - a - b)^h}{a + b} = p + (1-p) (1 - a - b)^h.
  \end{equation*}
  Plugging this in, we get
  \begin{equation*}
    \bbE[\pi_i \pi_j] = p^2 + p(1-p)(1-a-b)^h.
  \end{equation*}
  Among all the possible values of~$S(h)_{d_1, d_2}$, the smallest one is~$p^2$ if~$1-a-b \geq 0$, and~$p^2 + p(1-p)(1-a-b)$ otherwise.
  But since
  \begin{align*}
    p + (1-p)(1-a-b) & = \frac{a}{a+b} + \frac{b}{a+b}(1-a-b)                       \\
                     & = \frac{a + b - ab - b^2}{a+b} = \frac{a(1-b) + b(1-b)}{a+b} \\
                     & = 1-b,
  \end{align*}
  we conclude
  \begin{equation*}
    \min_{d_1, d_2, h} S(h)_{d_1, d_2} = \min \{p^2, p^2 + p(1-p)(1-a-b)\} = \min \{p^2, p(1-b)\}.
  \end{equation*}
\end{proof}

\subsection{Gaussian Concentration, Episode 1} \label{sec:gaussian_concentration_ep1}

From now on, we will study the concentration of~$\widehat{\Gamma}_h$, coefficient by coefficient.
Let us fix two indices~$d_1$ and~$d_2$: our goal is to control the deviation of~$(\widehat{\Gamma}_h)_{d_1, d_2}$ around its mean.

\begin{lemma}[Deviation of~$(\widehat{\Gamma}_h)_{d_1, d_2}$] \label{lem:proba_split}
  The deviation probability for~$(\widehat{\Gamma}_h)_{d_1, d_2}$ can be decomposed as follows:
  \begin{align*}
    \bbP(|(\widehat{\Gamma}_h - \Gamma_h)_{d_1, d_2}| \geq u)
     & \leq \phantomplus \bbP\left(|g_{\innov}' \Psi_{\innov}' L \Psi_{\innov} g_{\innov} - \bbE\left[g_{\innov}' \Psi_{\innov}' L \Psi_{\innov} g_{\innov}\right]| \geq u/4 \right) \\
     & \phantomleq + \bbP\left(|g_\noise' \Psi_{\noise}' L \Psi_{\innov} g_{\innov} - \bbE\left[g_\noise' \Psi_{\noise}' L \Psi_{\innov} g_{\innov}\right]| \geq u/4 \right)         \\
     & \phantomleq + \bbP\left(|g_{\innov}' \Psi_{\innov}' L \Psi_{\noise} g_\noise - \bbE\left[g_{\innov}' \Psi_{\innov}' L \Psi_{\noise} g_\noise\right]| \geq u/4 \right)         \\
     & \phantomleq + \bbP\left(|g_\noise' \Psi_{\noise}' L \Psi_{\noise} g_\noise - \bbE\left[g_\noise' \Psi_{\noise}' L \Psi_{\noise} g_\noise\right]| \geq u/4 \right)
  \end{align*}
  where the random matrix~$L$ is defined in Equation~\eqref{eq:L_def},~$\Psi_{\innov}$ and~$\Psi_{\noise}$ are defined in Equation~\eqref{eq:psi_definition}, and~$g_{\innov}$ and~$g_{\noise}$ are standard Gaussian vectors.
\end{lemma}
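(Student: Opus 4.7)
The plan is to start from the expansion in Equation~\eqref{eq:estimator_decomp} that was already used to prove unbiasedness in Lemma~\ref{lem:gamma_estimator_unbiased}. That identity decomposes each summand $(\Pi_{t+h}^\dagger Y_{t+h})(\Pi_t^\dagger Y_t)'$ into four bilinear pieces in $\{X, \noise\}$. Summing over $t$, dividing coefficientwise by $S(h)$, and subtracting the noise correction $\mathbf{1}_{\{h=0\}}\omega^2 I$ yields a natural decomposition $\widehat{\Gamma}_h = \widehat{\Gamma}_h^{XX} + \widehat{\Gamma}_h^{X\noise} + \widehat{\Gamma}_h^{\noise X} + \widehat{\Gamma}_h^{\noise\noise}$, in which the $XX$-term is the weighted empirical covariance of $X$, the $\noise\noise$-term contributes exactly $\mathbf{1}_{\{h=0\}}\omega^2 I$ in expectation (hence the subtraction), and the two cross-terms are already centered. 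Reading this identity off at position $(d_1, d_2)$ writes $(\widehat{\Gamma}_h - \Gamma_h)_{d_1,d_2}$ as a sum of four centered scalar random variables.

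The next step is to express each of those four scalars as a centered Gaussian bilinear form with a \emph{common} weight matrix $L$. Since the stacked innovation vector $\innov = (\innov_t)_{t \in [T]}$ and the stacked noise vector $\noise = (\noise_t)_{t \in [T]}$ are jointly Gaussian, mutually independent, and independent of $\Pi$, I would write $\innov = A_\innov g_\innov$ and $\noise = A_\noise g_\noise$ with $g_\innov, g_\noise$ independent standard Gaussian vectors, and note that $X$ itself is a fixed linear function of $g_\innov$ via the recursion of Lemma~\ref{lem:x_covariance}. Bundling all the $\Pi$-dependent weights $\pi_{t+h,d_1}\pi_{t,d_2}/\bigl((T-h)\,S(h)_{d_1,d_2}\bigr)$ into a single random matrix $L$ (the definition referred to in Equation~\eqref{eq:L_def}), and bundling the coordinate selectors for $d_1$ and $d_2$ along time into $\Psi_\innov$ and $\Psi_\noise$ (Equation~\eqref{eq:psi_definition}), each of the four pieces becomes exactly $g_a' \Psi_a' L \Psi_b g_b$ for the appropriate $a, b \in \{\innov, \noise\}$.

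From there the lemma follows by a triangle inequality and a union bound: if the sum of four centered quantities has absolute value at least $u$, at least one of the summands must have absolute value at least $u/4$, and the union bound over those four events produces exactly the four probabilities on the right-hand side of the claim. The centering of the two cross-terms is automatic (since $g_\innov \perp g_\noise$), and the centering of the $XX$ and $\noise\noise$ pieces is exactly what the subtraction of $\Gamma_h$ encodes.

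The only real obstacle is the algebraic bookkeeping that forces the \emph{same} matrix $L$ to appear in all four bilinear forms. Getting this uniformity right is what makes the subsequent pipeline possible: once we condition on $\Pi$, the matrix $L$ becomes deterministic, and each of the four probabilities on the right-hand side is then a tail bound for a Gaussian quadratic or bilinear form with fixed weight matrix $\Psi_a' L \Psi_b$ — exactly the setup that the conditional Hanson–Wright inequality of Lemma~\ref{lem:conditional_hanson_wright} will be applied to in the next stages of the proof.
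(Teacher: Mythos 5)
Your proposal follows essentially the same route as the paper's proof: expand via Equation~\eqref{eq:estimator_decomp}, collect the $\Pi$-dependent weights (together with the selectors $\basis_{d_1}, \basis_{d_2}$ and the time projectors) into the single random matrix $L$, represent $X = \Psi_{\innov} g_{\innov}$ and $\noise = \Psi_{\noise} g_{\noise}$, note that subtracting $\Gamma_h$ and the $\one_{\{h=0\}}\omega^2 I$ correction exactly centers the four bilinear forms, and conclude with the triangle inequality and a union bound at level $u/4$. The only slip is descriptive rather than substantive: the coordinate selectors are absorbed into $L$, not into $\Psi_{\innov}$ and $\Psi_{\noise}$, which are the covariance square roots $\Cov[X]^{1/2}$ and $\omega I$ from Equation~\eqref{eq:psi_definition}.
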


\begin{proof}
  We denote by~$\basis_d$ the basis vector filled with zeros except for a~$1$ in position~$d$.
  By Equation~\eqref{eq:gamma_estimator},
  \begin{align*}
    (\widehat{\Gamma}_h + \one_{\{h=0\}} \omega^2 I)_{d_1, d_2}
     & = \frac{1}{T-h} \sum_{t=1}^{T-h} \left( \frac{1}{S(h)} \odot (\Pi_{t+h}^\dagger Y_{t+h})(\Pi_t^\dagger Y_t)' \right)_{d_1, d_2}                        \\
     & = \frac{1}{T-h} \sum_{t=1}^{T-h} \frac{1}{S(h)_{d_1, d_2}} \basis_{d_1}' (\Pi_{t+h}^\dagger Y_{t+h}) (\Pi_t^\dagger Y_t)' \basis_{d_2}                 \\
     & = \frac{1}{T-h} \sum_{t=1}^{T-h} \Tr \left[\frac{\basis_{d_2} \basis_{d_1}'}{S(h)_{d_1, d_2}} (\Pi_{t+h}^\dagger Y_{t+h}) (\Pi_t^\dagger Y_t)' \right]
  \end{align*}
  Equation~\eqref{eq:estimator_decomp} allows us to rewrite~$(\Pi_{t+h}^\dagger Y_{t+h}) (\Pi_t^\dagger Y_t)'$:
  \begin{align*}
    (\widehat{\Gamma}_h + \one_{\{h=0\}} \omega^2 I)_{d_1, d_2}
     & = \frac{1}{T-h} \sum_{t=1}^{T-h} X_t' \diag(\pi_t) \frac{\basis_{d_2} \basis_{d_1}'}{S(h)_{d_1, d_2}} \diag(\pi_{t+h})  X_{t+h}                     \\
     & \phantomeq + \frac{1}{T-h} \sum_{t=1}^{T-h} \noise_t'  \diag(\pi_t) \frac{\basis_{d_2} \basis_{d_1}'}{S(h)_{d_1, d_2}} \diag(\pi_{t+h}) X_{t+h}     \\
     & \phantomeq + \frac{1}{T-h} \sum_{t=1}^{T-h} X_t' \diag(\pi_t) \frac{\basis_{d_2} \basis_{d_1}'}{S(h)_{d_1, d_2}} \diag(\pi_{t+h}) \noise_{t+h}      \\
     & \phantomeq + \frac{1}{T-h} \sum_{t=1}^{T-h} \noise_t' \diag(\pi_t) \frac{\basis_{d_2} \basis_{d_1}'}{S(h)_{d_1, d_2}} \diag(\pi_{t+h}) \noise_{t+h}
  \end{align*}
  Let us denote by~$P_t$ the projection of~$\bbR^{TD}$ keeping only the components associated with time~$t$, i.e. such that ~$X_t = P_t X$ and~$\noise_t = P_t \noise$.
  We recognize the following matrix~$L$ in all four lines of the expression above:
  \begin{equation} \label{eq:L_def}
    \begin{aligned}
      L & = \frac{1}{T-h} \sum_{t=1}^{T-h} P_t' \diag(\pi_t) \frac{\basis_{d_2} \basis_{d_1}'}{S(h)_{d_1, d_2}} \diag(\pi_{t+h}) P_{t+h} \\
        & = \frac{1}{T-h} \sum_{t=1}^{T-h} P_t' \frac{\pi_{t+h, d_1} \pi_{t, d_2} \basis_{d_2} \basis_{d_1}'}{S(h)_{d_1, d_2}} P_{t+h}
    \end{aligned}
  \end{equation}
  This leads to:
  \begin{equation*}
    (\widehat{\Gamma}_h + \one_{\{h=0\}} \omega^2 I)_{d_1, d_2} = X' L X + \noise' L X + X' L \noise + \noise' L \noise
  \end{equation*}
  Since~$X$ and~$\noise$ both follow centered multivariate Gaussian distributions, we can express them as linear combinations of standard Gaussian vectors~$g_\innov$ and~$g_\noise$ of dimension~$TD$ (indexed by the source of randomness):
  \begin{equation}
    X = \Psi_{\innov} g_{\innov} \qquad \text{and} \qquad \noise = \Psi_{\noise} g_\noise
  \end{equation}
  where~$\Psi_{\innov}$ and~$\Psi_{\noise}$ are the square roots of the respective covariance matrices
  \begin{equation} \label{eq:psi_definition}
    \Psi_{\innov} = \Cov[X]^{1/2} \qquad \text{and} \qquad
    \Psi_{\noise} = \Cov[\noise]^{1/2} = \omega I.
  \end{equation}
  We substitute~$X$ and~$\noise$ to get:
  \begin{equation*}
    (\widehat{\Gamma}_h + \one_{\{h=0\}} \omega^2 I)_{d_2, d_1} =
    g_{\innov}' \Psi_{\innov}' L \Psi_{\innov} g_{\innov}
    + g_\noise' \Psi_{\noise}' L \Psi_{\innov} g_{\innov}
    + g_{\innov}' \Psi_{\innov}' L \Psi_{\noise} g_\noise
    + g_\noise' \Psi_{\noise}' L \Psi_{\noise} g_\noise,
  \end{equation*}
  which implies
  \begin{align*}
    ( \widehat{\Gamma}_h - \Gamma_h )_{d_1, d_2}
     & =  g_{\innov}' \Psi_{\innov}' L \Psi_{\innov} g_{\innov} - \bbE[g_{\innov}' \Psi_{\innov}' L \Psi_{\innov} g_{\innov}]       \\
     & \phantomeq + g_\noise' \Psi_{\noise}' L \Psi_{\innov} g_{\innov} - \bbE[g_\noise' \Psi_{\noise}' L \Psi_{\innov} g_{\innov}] \\
     & \phantomeq + g_{\innov}' \Psi_{\innov}' L \Psi_{\noise} g_\noise - \bbE[g_{\innov}' \Psi_{\innov}' L \Psi_{\noise} g_\noise] \\
     & \phantomeq + g_\noise' \Psi_{\noise}' L \Psi_{\noise} g_\noise - \bbE[g_\noise' \Psi_{\noise}' L \Psi_{\noise} g_\noise].
  \end{align*}
  The union bound gives us the expected result.
\end{proof}

Now, our goal is to apply a Gaussian concentration inequality to these deviation probabilities.
However, since~$L$ is generated by the discrete sampling process~$\pi$, it is random, and so are the products~$\Psi_a' L \Psi_b$ (where~$a,b \in \{\innov, \noise\}$).
We thus need a conditional version of the Hanson-Wright inequality (Lemma~\ref{lem:conditional_hanson_wright}), in which the following random variables will come into play:
\begin{itemize}
  \item The spectral norm~$\lVert \Psi_a' L \Psi_b \rVert_2$
  \item The Frobenius norm~$\lVert \Psi_a' L \Psi_b \rVert_F^2$
  \item The shifted trace~$\Tr(\Psi_a' L \Psi_b - \bbE[\Psi_a' L \Psi_b])$
\end{itemize}

\subsection{Interlude: Discrete Concentration}

We exploit discrete concentration results to bound the deviations of the three quantities we just mentioned, starting with the norms.

\begin{lemma}[Norm reformulation for~$L$] \label{lem:L_norm_reformulation}
  The spectral and Frobenius norms of~$L$ are given by
  \begin{equation*}
    \lVert L \rVert_2 = \frac{\max_{t \in [T-h]} \pi_{t+h, d_1} \pi_{t, d_2}}{(T-h) S(h)_{d_1, d_2}}   \quad \text{and} \quad
    \lVert L \rVert_F^2 = \frac{1}{(T-h)^2 S(h)_{d_1, d_2}} \sum_{t=1}^{T-h} \pi_{t+h, d_1} \pi_{t, d_2}.
  \end{equation*}
\end{lemma}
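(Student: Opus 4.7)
The plan is to exploit the very sparse structure of~$L$: I would show that it is a scaled sum of rank-1 matrices whose supports are pairwise disjoint, which turns~$L$ into a (weighted, possibly partial) permutation matrix whose singular values are exactly its non-zero entries.

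First, I would rewrite each summand of Equation~\eqref{eq:L_def} as a single ``brick.'' Since~$P_t' \basis_{d_2}$ is the vector in~$\bbR^{TD}$ with a~$1$ at coordinate~$(t, d_2)$ and zeros elsewhere, and similarly~$P_{t+h}' \basis_{d_1}$ has its unique~$1$ at coordinate~$(t+h, d_1)$, the product~$P_t' \basis_{d_2} \basis_{d_1}' P_{t+h}$ equals the $TD \times TD$ matrix $E_t$ carrying a single~$1$ at position $\bigl((t, d_2), (t+h, d_1)\bigr)$ and zeros elsewhere. Consequently
\[
L \;=\; \frac{1}{(T-h)\, S(h)_{d_1, d_2}} \sum_{t=1}^{T-h} \pi_{t+h, d_1}\, \pi_{t, d_2}\, E_t.
\]

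Next, I would verify that the bricks~$(E_t)_{t \in [T-h]}$ have pairwise disjoint row and column supports: for~$t \neq t'$, both~$(t, d_2) \neq (t', d_2)$ and~$(t+h, d_1) \neq (t'+h, d_1)$. Hence every row and every column of~$L$ contains at most one non-zero coefficient, and this coefficient is either~$0$ or the common value $\kappa := \bigl[(T-h)\, S(h)_{d_1, d_2}\bigr]^{-1}$. Up to permuting rows and columns, $L$ is therefore the diagonal matrix with entries $\kappa\, \pi_{t+h, d_1}\, \pi_{t, d_2}$, and its singular values are exactly these entries.

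From there, both identities follow immediately. The spectral norm~$\lVert L \rVert_2$ is the largest singular value, i.e.\ $\kappa \cdot \max_t \pi_{t+h, d_1}\, \pi_{t, d_2}$, while the squared Frobenius norm~$\lVert L \rVert_F^2$ is the sum of squared singular values, and the idempotence identity~$\pi^2 = \pi$ for binary variables collapses the square to yield the claimed sum. The only conceptual step worth flagging is the disjoint-support observation; a minor subtlety is the degenerate case~$d_1 = d_2$ with~$h = 0$, where the bricks lie on the diagonal, but since~$t$ still ranges over~$[T-h]$ the supports remain pairwise disjoint and the argument carries through unchanged.
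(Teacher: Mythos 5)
Your argument is correct and is essentially the paper's own proof: the paper exploits the block-superdiagonal structure of~$L$ (one~$D \times D$ block~$L_{[t,t+h]}$ per summand, in distinct block rows and columns), while you work one level finer with the rank-one bricks~$E_t$ having pairwise disjoint row and column supports; both reduce the norms of~$L$ to the norms of its (at most one per row/column) nonzero entries, so this is the same idea, not a genuinely different route. One point to flag: your computation of the Frobenius norm, done correctly, gives~$\lVert L \rVert_F^2 = \frac{1}{(T-h)^2 S(h)_{d_1,d_2}^2} \sum_t \pi_{t+h,d_1}\pi_{t,d_2}$, with the \emph{square} of~$S(h)_{d_1,d_2}$ in the denominator (each entry of~$L$ is~$\pi_{t+h,d_1}\pi_{t,d_2}/((T-h)S(h)_{d_1,d_2})$, so squaring it squares the scaling even after~$\pi^2=\pi$). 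This matches the formula derived in the paper's proof and the way the result is used in Lemma~\ref{lem:L_frob_bound}, but not the lemma statement as printed, which carries only one power of~$S(h)_{d_1,d_2}$ and is a typo; your closing claim that the binarity of~$\pi$ ``yields the claimed sum'' silently passes over this mismatch, so you should either state the corrected formula or note the typo explicitly.
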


\begin{proof}
  We first notice that~$L$ has a block-superdiagonal structure of rank~$h$:
  \begin{equation} \label{eq:L_formula}
    L = \frac{1}{T-h} \sum_{t=1}^{T-h} P_t' L_{[t, t+h]} P_{t+h} \qquad \text{with} \qquad L_{[t, t+h]} = \frac{\pi_{t+h, d_1} \pi_{t, d_2}}{S(h)_{d_1, d_2}} \basis_{d_2} \basis_{d_1}'.
  \end{equation}
  The spectral and Frobenius norms of such a matrix can easily be deduced from those of its blocks.
  Since~$\lVert \basis_{d_2} \basis_{d_1}' \rVert_2 = \lVert \basis_{d_2} \basis_{d_1}' \rVert_F = 1$ and the~$\pi_t$ are binary-valued, this leads to the following formulas:
  \begin{align*}
    \lVert L \rVert_2   & = \frac{1}{T-h} \max_{t \in [T-h]} \lVert L_{[t, t+h]} \rVert_2 = \frac{1}{(T-h) S(h)_{d_1, d_2}} \max_{t \in [T-h]} \pi_{t+h, d_1} \pi_{t, d_2}        \\
    \lVert L \rVert_F^2 & = \frac{1}{(T-h)^2} \sum_{t=1}^{T-h} \lVert L_{[t, t+h]} \rVert_F^2 = \frac{1}{(T-h)^2 S(h)_{d_1, d_2}^2} \sum_{t=1}^{T-h} \pi_{t+h, d_1} \pi_{t, d_2}.
  \end{align*}
\end{proof}

We can bound the spectral norm for free.

\begin{lemma}[Spectral norm bound for~$L$] \label{lem:L_spectral_bound}
  With probability~$1$, the spectral norm~$\lVert L \rVert_2$ satisfies
  \begin{equation*}
    \lVert L \rVert_2 \leq \frac{c}{T p q_u}
  \end{equation*}
\end{lemma}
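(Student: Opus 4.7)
The plan is to observe that this bound is essentially immediate from the two preceding lemmas, as signaled by the author's remark that we can bound the spectral norm "for free". First I would invoke Lemma~\ref{lem:L_norm_reformulation}, which gives the exact expression
\[
\lVert L \rVert_2 \;=\; \frac{\max_{t \in [T-h]} \pi_{t+h, d_1} \pi_{t, d_2}}{(T-h)\, S(h)_{d_1, d_2}}.
\]
The numerator is a maximum of products of Bernoulli variables, and hence is deterministically bounded by~$1$; there is no concentration argument needed here, which is why the estimate holds with probability~$1$.

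Next I would lower bound the denominator using Lemma~\ref{lem:proj_moments}, which states exactly that every entry of the scaling matrix satisfies $S(h)_{d_1, d_2} \geq p q_u$ with $q_u = \min\{p, 1-b\}$. Combining these two observations yields
\[
\lVert L \rVert_2 \;\leq\; \frac{1}{(T-h)\, p q_u}.
\]
Since we are working in the regime $h \in \{h_0, h_0+1\} = \{0, 1\}$ and $T$ is assumed large (in particular $T \geq 2$), we have $T - h \geq T/2$, so the~$T-h$ factor can be replaced by~$T$ at the cost of a universal constant absorbed into~$c$. This gives $\lVert L \rVert_2 \leq c/(T p q_u)$, which is the desired bound.

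The main (and only) conceptual point to watch is ensuring that the bound indeed holds almost surely rather than just with high probability: everything hinges on the deterministic fact that $\pi_{t+h,d_1}\pi_{t,d_2} \in \{0,1\}$, so no probabilistic tools (Chernoff or otherwise) enter at this stage. The harder discrete concentration work is deferred to the Frobenius norm and trace controls (Lemmas~\ref{lem:L_frob_bound} and~\ref{lem:L_trace}), where the averages of~$\pi_{t+h,d_1}\pi_{t,d_2}$ over~$t$ need to be concentrated around their means using a Markov-chain Chernoff bound.
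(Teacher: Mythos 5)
Your proof is correct and matches the paper's argument exactly: bound the numerator of the expression from Lemma~\ref{lem:L_norm_reformulation} deterministically by~$1$, lower-bound~$S(h)_{d_1,d_2}$ by~$p q_u$ via Lemma~\ref{lem:proj_moments}, and absorb the~$T-h$ versus~$T$ discrepancy (with~$h \in \{0,1\}$) into the universal constant. No differences worth noting.
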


\begin{proof}
  Note that~$S(h)_{d_1, d_2} \geq p q_u$, and since~$h \in \{0, 1\}$, we can state that~$T-h \geq cT$.
  By Lemma~\ref{lem:L_norm_reformulation}, we deduce
  \begin{equation*}
    \lVert L \rVert_2 = \frac{\max_{t \in [T-h]} \pi_{t+h, d_1} \pi_{t, d_2}}{(T-h) S(h)_{d_1, d_2}} \leq \frac{1}{(T-h) S(h)_{d_1, d_2}} \leq \frac{1}{(T-h) p q_u} \leq \frac{1}{c T p q_u}.
  \end{equation*}
\end{proof}

The Frobenius norm requires a little more work because of the sum it contains.

\begin{lemma}[Concentration of the sampling Bernoullis] \label{lem:proj_concentration}
  For all~$u \in [0, 1]$,
  \begin{equation*}
    \bbP \left( \left\lvert \frac{1}{T-h} \sum_{t=1}^{T-h} \pi_{t+h,d_1} \pi_{t,d_2} - S(h)_{d_1, d_2} \right\rvert \geq u S(h)_{d_1, d_2} \right) \leq c_1 \exp(-c_2 u^2 T S(h)_{d_1, d_2}).
  \end{equation*}
\end{lemma}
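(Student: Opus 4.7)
The sum $\sum_{t=1}^{T-h} \pi_{t+h, d_1} \pi_{t, d_2}$ is an additive functional of a finite-state Markov chain. Introduce the joint state $Z_t = (\pi_{t, d_1}, \pi_{t, d_2})$. Because the one-dimensional chains $(\pi_{t, d})_t$ are independent across $d$, $(Z_t)_t$ is itself Markov on $\{0, 1\}^2$, with transition kernel $\mathcal{T} \otimes \mathcal{T}$ when $d_1 \neq d_2$ and with kernel $\mathcal{T}$ restricted to the diagonal when $d_1 = d_2$. For $h \in \{0, 1\}$, the summand is $f(W_t)$ with $W_t = (Z_t, Z_{t+h})$, and $(W_t)_t$ is still Markov on a finite state space. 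By Lemma~\ref{lem:proj_moments}, its stationary mean is $\bar f := \mathbb{E}[f(W_t)] = S(h)_{d_1, d_2}$, and since $f$ takes values in $\{0, 1\}$, its stationary variance satisfies $\mathrm{Var}(f(W_t)) \leq \bar f$.

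The two-state chain $\mathcal{T}$ has eigenvalues $\{1,\, 1-a-b\}$, and the assumption $\chi \leq a, b \leq 1 - \chi$ forces $|1 - a - b| \leq 1 - 2\chi$, so its absolute spectral gap is at least $2\chi$. Both tensor products and the lift $Z_t \mapsto (Z_t, Z_{t+h})$ preserve this gap up to universal constants, so $(W_t)_t$ has a pseudo-spectral gap and a mixing time uniformly bounded by a universal constant.

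We then apply a Bernstein-type Chernoff inequality for bounded functionals of finite-state Markov chains (for instance Paulin, \emph{Ann.\ Appl.\ Probab.}~2015, Theorem~3.4, or Lezaud 1998). For a $[0,1]$-valued function with asymptotic variance at most $\bar f$, such an inequality yields
\begin{equation*}
  \bbP\!\left(\left|\frac{1}{T-h}\sum_{t=1}^{T-h} f(W_t) - \bar f\right| \geq u\, \bar f\right) \leq c_1 \exp\!\left(-\frac{c_2\, (T-h)\, u^2\, \bar f}{1 + u}\right).
\end{equation*}
Restricting to $u \in [0, 1]$ absorbs the $1 + u$ factor into a universal constant, and $T - h \geq T/2$ since $h \in \{0, 1\}$, which gives the claimed bound with $\bar f = S(h)_{d_1, d_2}$.

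The main obstacle is securing the \emph{multiplicative} form of the exponent: a naive Hoeffding-type application to a bounded $[0,1]$ summand would produce an exponent proportional to $u^2 S(h)^2$ instead of $u^2 S(h)$, which is insufficient in the regime where $p$ (and hence $S(h)$) is small. We therefore need a variance-sensitive Markov chain Chernoff bound, combined with the uniform lower bound on the spectral gap granted by the assumption $\chi \leq a, b \leq 1 - \chi$.
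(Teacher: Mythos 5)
Your proposal is correct and follows essentially the same route as the paper: the paper likewise reduces the summand to an additive functional of a finite-state Markov chain (distinguishing the cases $d_1=d_2$ with $h=0$, $d_1\neq d_2$, and $d_1=d_2$ with $h\geq 1$), argues its mixing time is bounded by a universal constant thanks to $\chi \leq a,b \leq 1-\chi$, and applies a multiplicative (variance-sensitive) Chernoff bound for Markov chains, exactly the point you flag as the crux. The only difference is which off-the-shelf ingredients are used: you certify constant mixing via the spectral/pseudo-spectral gap and cite Paulin or Lezaud, whereas the paper verifies a Doeblin minorization condition (with $r=h+1$ and $\delta=\chi^{h+1}$) and invokes Lemma~\ref{lem:chernoff_doeblin}, built on the Chung--Lam--Liu--Mitzenmacher bound via Lemmas~\ref{lem:doeblin} and~\ref{lem:chernoff_markov}.
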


\begin{proof}
  We distinguish three cases:
  \begin{itemize}
    \item When~$d_1 = d_2$ and~$h=0$, we have~$\pi_{t+h,d_1} = \pi_{t,d_2}$, which is a~$2$-state Markov chain with transition matrix~$\mathcal{T} \otimes I$, depicted on Figure~\ref{fig:state_space_1}.
    \item When~$d_1 \neq d_2$, the couple~$(\pi_{t,d_2}, \pi_{t+h,d_1})$ is a~$4$-state Markov chain with transition matrix~$\mathcal{T} \otimes \mathcal{T}$ since the chains~$\pi_{t+h, d_1}$ and~$\pi_{t, d_2}$ evolve along independent dimensions. It is shown on Figure~\ref{fig:state_space_2}.
    \item When ~$d_1 = d_2$ and~$h \geq 1$, we must study the~$(h+1)$-tuple~$(\pi_{t, d_1}, \pi_{t+1, d_1}, ..., \pi_{t+h,d_1})$.
          It is a~$2^{h+1}$-state Markov chain with transition matrix~$\mathcal{S}(h)$, whose non-reversible transition diagram can be seen on Figure~\ref{fig:state_space_3}.
  \end{itemize}

  \begin{figure}[htbp]
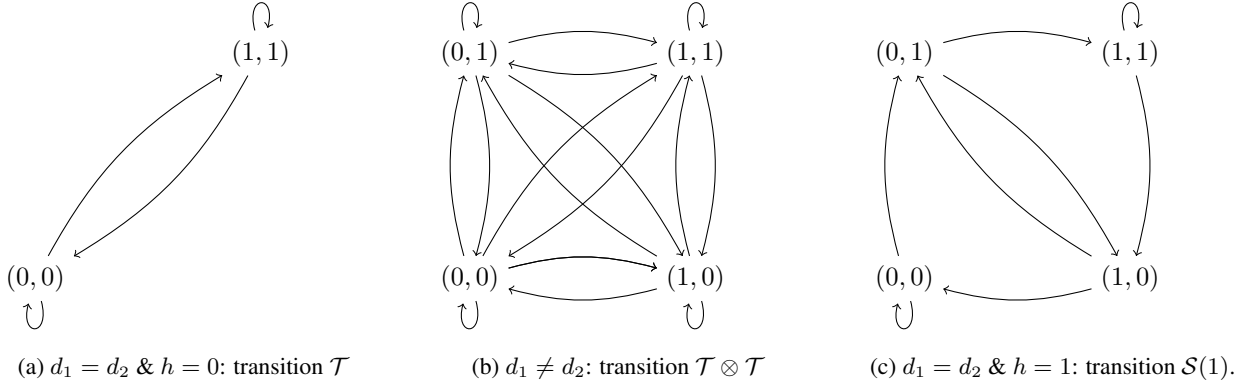

    \begin{subfigure}{0.3\linewidth}
      \tikz{
        \node (00) at (0, 0) {$(0, 0)$};
        \node (11) at (3, 3) {$(1, 1)$};
        \draw[->] (00) to [bend left = 15] (11);
        \draw[->] (11) to [bend left = 15] (00);
        \path (00) edge[loop below] (00);
        \path (11) edge[loop above] (11);
      }
      \subcaption{$d_1 = d_2$~\&~$h=0$: transition~$\mathcal{T}$}
      \label{fig:state_space_1}
    \end{subfigure}
    \hfill
    \begin{subfigure}{0.3\linewidth}
      \tikz{
        \node (00) at (0, 0) {$(0, 0)$};
        \node (01) at (0, 3) {$(0, 1)$};
        \node (10) at (3, 0) {$(1, 0)$};
        \node (11) at (3, 3) {$(1, 1)$};
        \draw[->] (00) to [bend left = 15] (01);
        \draw[->] (00) to [bend left = 15] (10);
        \draw[->] (00) to [bend left = 15] (10);
        \draw[->] (00) to [bend left = 15] (11);
        \draw[->] (01) to [bend left = 15] (00);
        \draw[->] (01) to [bend left = 15] (10);
        \draw[->] (01) to [bend left = 15] (11);
        \draw[->] (10) to [bend left = 15] (00);
        \draw[->] (10) to [bend left = 15] (01);
        \draw[->] (10) to [bend left = 15] (11);
        \draw[->] (11) to [bend left = 15] (00);
        \draw[->] (11) to [bend left = 15] (01);
        \draw[->] (11) to [bend left = 15] (10);
        \path (00) edge[loop below] (00);
        \path (01) edge[loop above] (01);
        \path (10) edge[loop below] (10);
        \path (11) edge[loop above] (11);
      }
      \subcaption{$d_1 \neq d_2$: transition~$\mathcal{T} \otimes \mathcal{T}$}
      \label{fig:state_space_2}
    \end{subfigure}
    \hfill
    \begin{subfigure}{0.3\linewidth}
      \tikz{
        \node (00) at (0, 0) {$(0, 0)$};
        \node (01) at (0, 3) {$(0, 1)$};
        \node (10) at (3, 0) {$(1, 0)$};
        \node (11) at (3, 3) {$(1, 1)$};
        \draw[->] (00) to [bend left = 15] (01);
        \draw[->] (01) to [bend left = 15] (11);
        \draw[->] (11) to [bend left = 15] (10);
        \draw[->] (10) to [bend left = 15] (00);
        \draw[->] (01) to [bend left = 15] (10);
        \draw[->] (10) to [bend left = 15] (01);
        \path (00) edge[loop below] (00);
        \path (11) edge[loop above] (11);
      }
      \subcaption{$d_1 = d_2$~\&~$h=1$: transition~$\mathcal{S}(1)$.}
      \label{fig:state_space_3}
    \end{subfigure}

    \caption{State space and transitions for the Markov chains used in the discrete concentration result} 
    \label{fig:state_space}
  \end{figure}

  In all of these cases, our variable of interest~$\pi_{t+h, d_1} \pi_{t, d_2}$ is a function of the underlying Markov chain.
  The relevant functions are:
  \begin{equation*}
    f_1: x \mapsto x \qquad f_2: (x,y) \mapsto yx \qquad f_3: (x_0, ..., x_h) \mapsto x_h x_0.
  \end{equation*}
  We note that since~$\chi \leq a, b \leq 1-\chi$, all the coefficients of~$\mathcal{T}$ are greater than~$\chi$.
  Furthermore, all the coefficients of~$\mathcal{T} \otimes \mathcal{T}$ are greater than~$\chi^2$.
  Finally, all the coefficients of~$\mathcal{S}(h)^{h+1}$ are greater than~$\chi^{h+1}$, because all pairs of states are connected after~$h+1$ steps.
  Let us illustrate this with~$h=1$:
  \begin{equation*}
    \mathcal{S}(1) =  \begin{pmatrix}
      1-a & a & 0 & 0   \\
      0   & 0 & b & 1-b \\
      1-a & a & 0 & 0   \\
      0   & 0 & b & 1-b
    \end{pmatrix}
    \qquad
    \mathcal{S}(1)^2 =  \begin{pmatrix}
      (1-a)^2 & a(1-a) & ab     & a(1-b)  \\
      (1-a)b  & ab     & (1-b)b & (1-b)^2 \\
      (1-a)^2 & a(1-a) & ab     & a(1-b)  \\
      (1-a)b  & ab     & (1-b)b & (1-b)^2
    \end{pmatrix}.
  \end{equation*}
  Subsequently, all the transition matrices~$\mathcal{R}$ we are interested in, namely~$\mathcal{R} \in \{\mathcal{T}, \mathcal{T} \otimes \mathcal{T}, \mathcal{S}(h)^{h+1}\}$, satisfy the Doeblin condition with~$r = h+1$ and~$\delta = \chi^{h+1}$:
  \begin{equation*}
    \mathcal{R}^{h+1} \geq \chi^{h+1} \begin{pmatrix}
      1      & \cdots & 1      \\
      \vdots & \ddots & \vdots \\
      1      & \cdots & 1
    \end{pmatrix}.
  \end{equation*}
  Since we only consider~$h \in \{0, 1\}$ and since~$\chi$ is fixed for our purposes, these chains fulfill the assumptions of Lemma~\ref{lem:chernoff_doeblin}.
  We thus conclude:
  \begin{equation*}
    \bbP\left( \left\lvert \frac{1}{T-h} \sum_{t=1}^{T-h} \pi_{t+h,d_1} \pi_{t,d_2} - S(h)_{d_1, d_2} \right\rvert \geq u S(h)_{d_1, d_2} \right) \leq c_1 \exp\left(-c_2 u^2 (T-h) S(h)_{d_1, d_2} \right).
  \end{equation*}
  We finally replace~$T-h$ with~$cT$ in the exponential, leading to the result we announced.
\end{proof}

Based on this concentration property, we can now bound the norms of the random matrix~$L$ with high probability.

\begin{lemma}[Frobenious norm bound for~$L$] \label{lem:L_frob_bound}
  For any~$\delta$ such that Equation~\eqref{eq:T_constraint_convergence1} holds, with probability at least~$1-\delta$, the Frobenius norm~$\lVert L \rVert_F^2$ satisfies
  \begin{equation*}
    \lVert L \rVert_F^2 \leq \frac{c}{T p q_u}.
  \end{equation*}
\end{lemma}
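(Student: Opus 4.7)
The plan is to combine the explicit Frobenius-norm formula from Lemma~\ref{lem:L_norm_reformulation} with the Markov-chain Chernoff bound of Lemma~\ref{lem:proj_concentration} to control the random sum appearing inside that formula. Everything else is routine algebra once the sum is controlled with high probability.

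First I would invoke Lemma~\ref{lem:L_norm_reformulation}, which gives
\[
  \lVert L \rVert_F^2 = \frac{1}{(T-h)^2\, S(h)_{d_1,d_2}^{\,2}} \sum_{t=1}^{T-h} \pi_{t+h, d_1}\,\pi_{t, d_2}.
\]
Thus it suffices to upper bound the Bernoulli-product sum by a constant multiple of its expectation $(T-h)\, S(h)_{d_1,d_2}$. Applying Lemma~\ref{lem:proj_concentration} with the constant deviation $u = 1$, we obtain
\[
  \bbP\!\left( \frac{1}{T-h} \sum_{t=1}^{T-h} \pi_{t+h, d_1}\pi_{t, d_2} \;\geq\; 2\, S(h)_{d_1,d_2} \right) \leq c_1 \exp\bigl(-c_2\, T\, S(h)_{d_1,d_2}\bigr).
\]
The hypothesis that $T$ is large enough (namely Equation~\eqref{eq:T_constraint_convergence1}, which should read $T p q_u \geq c \log(1/\delta)$ for an appropriate constant) is exactly what is needed so that $c_1 \exp(-c_2\, T\, p q_u) \leq \delta$, since $S(h)_{d_1,d_2} \geq p q_u$ by Lemma~\ref{lem:proj_moments}.

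On this high-probability event, I would plug the bound back in:
\[
  \lVert L \rVert_F^2 \;\leq\; \frac{2 (T-h)\, S(h)_{d_1,d_2}}{(T-h)^2\, S(h)_{d_1,d_2}^{\,2}} \;=\; \frac{2}{(T-h)\, S(h)_{d_1,d_2}}.
\]
Finally, since $h \in \{0,1\}$ gives $T-h \geq c T$ and $S(h)_{d_1,d_2} \geq p q_u$, we conclude $\lVert L \rVert_F^2 \leq c / (T p q_u)$, as required. The only subtle point — and the one to flag as the main obstacle — is correctly threading the Doeblin-condition-dependent constants through Lemma~\ref{lem:proj_concentration} so that the exponent $c_2\, T\, S(h)_{d_1,d_2}$ dominates $\log(1/\delta)$; this is precisely what the assumption Equation~\eqref{eq:T_constraint_convergence1} is designed to encode. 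Everything else is direct substitution.
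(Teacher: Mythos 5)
Your proposal is correct and follows essentially the same route as the paper: rewrite $\lVert L \rVert_F^2$ via Lemma~\ref{lem:L_norm_reformulation}, control the Bernoulli-product sum with the Markov-chain Chernoff bound of Lemma~\ref{lem:proj_concentration}, and finish with $S(h)_{d_1,d_2}\geq p q_u$ and $T-h\geq cT$. The only cosmetic difference is that you fix the deviation level $u=1$ and invoke Equation~\eqref{eq:T_constraint_convergence1} to make the failure probability at most $\delta$, whereas the paper chooses $u \asymp \sqrt{\log(1/\delta)/(T p q_u)}$ and uses the same condition to guarantee $u\leq 1$; the two uses of the constraint are equivalent.
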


\begin{proof}
  By Lemma~\ref{lem:proj_concentration}: for all~$u \in [0, 1]$,
  \begin{equation*}
    \bbP \left( \frac{1}{T-h} \sum_{t=1}^{T-h} \pi_{t+h, d_1} \pi_{t, d_2} \geq (1+u) S(h)_{d_1, d_2}  \right)
    \leq c_1 \exp(-c_2 u^2 T S(h)_{d_1, d_2}).
  \end{equation*}
  We remember the expression of Lemma~\ref{lem:L_norm_reformulation} for~$\lVert L \rVert_F^2$ and notice that:
  \begin{align*}
     & \bbP\left( \lVert L \rVert_F^2 \geq \frac{1+u}{(T-h)S(h)_{d_1, d_2}}\right)                                                                                                                          \\
     & \quad = \bbP \left( \frac{1}{(T-h)S(h)_{d_1, d_2}} \left(\frac{1}{T-h} \sum_{t=1}^{T-h} \frac{\pi_{t+h, d_1} \pi_{t, d_2}}{S(h)_{d_1, d_2}} \right) \geq \frac{1}{(T-h)S(h)_{d_1, d_2}}(1+u) \right) \\
     & \quad \leq c_1 \exp(-c_2 u^2 T S(h)_{d_1, d_2})
  \end{align*}
  We finally recall that~$S(h)_{d_1, d_2} \geq p q_u$ and~$T-h \geq cT$, so that
  \begin{align*}
    \bbP \left( \lVert L \rVert_F^2 \geq \frac{1+u}{c T p q_u}  \right)
     & \leq \bbP \left( \lVert L \rVert_F^2 \geq \frac{1+u}{(T-h) p q_u}  \right)           \\
     & \leq \bbP \left( \lVert L \rVert_F^2 \geq \frac{1+u}{(T-h) S(h)_{d_1, d_2}}  \right) \\
     & \leq c_1 \exp \left(-c_2 u^2 T S(h)_{d_1, d_2} \right)                               \\
     & \leq c_1 \exp \left(-c_2 u^2 T p q_u \right).
  \end{align*}
  All we need to make sure that~$\bbP \left( \lVert L \rVert_F^2 \geq \frac{1+u}{c T p q_u}  \right) \leq \delta$ is to choose~$u$ such that
  \begin{equation*}
    c_1 \exp \left(-c_2 u^2 T p q_u \right) \leq \delta \quad \iff \quad u \geq \sqrt{\frac{\log(c_1/\delta)}{c_2 T p q_u}}
  \end{equation*}
  Note that we can replace~$\log(c_1 / \delta)$ by a constant times~$\log(1/\delta)$ to simplify expressions: this is possible as long as~$\delta$ is chosen \enquote{small enough} (i.e. smaller than some universal constant).
  We will assume this fairly often in the rest of the proof.

  For Lemma~\ref{lem:proj_concentration} to apply, we must ensure that our choice of~$u$ is smaller than~$1$.
  With the previous discussion in mind, $u \leq 1$ is implied by
  \begin{equation} \label{eq:T_constraint_convergence1}
    \sqrt{\frac{\log(1/\delta)}{T p q_u}} \leq c.
  \end{equation}
  If this holds, then we have
  \begin{equation*}
    \bbP \left( \lVert L \rVert_F^2 \geq \frac{2}{c T p q_u} \right) \leq \bbP \left( \lVert L \rVert_F^2 \geq \frac{1+u}{c T p q_u}  \right) \leq \delta.
  \end{equation*}
  This yields the result we wanted.
\end{proof}

We now move on to studying the shifted trace of~$\Psi_a' L \Psi_b$, which is the last ingredient we need for our application of Lemma~\ref{lem:conditional_hanson_wright}.

\begin{lemma}[Trace bound for the~$L$ matrices] \label{lem:L_trace}
  For all~$u \in [0, 1]$,
  \begin{align*}
    \bbP(|\Tr(\Psi_{\innov}' L \Psi_{\innov} - \bbE[\Psi_{\innov}' L \Psi_{\innov}])| \geq u)
     & \leq c_1 \exp\left(-\frac{c_2 u^2 T p q_u}{\lVert \Gamma_h \rVert_2^2} \right) \\
    \bbP(|\Tr(\Psi_{\noise}' L \Psi_{\noise} - \bbE[\Psi_{\noise}' L \Psi_{\noise}])| \geq u)
     & \leq c_1 \exp\left(-\frac{c_2 u^2 T p q_u}{\omega^4} \right).
  \end{align*}
\end{lemma}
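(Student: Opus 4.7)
The plan is to reduce each of the two shifted traces to a centered sum of Bernoulli products, at which point Lemma~\ref{lem:proj_concentration} applies essentially out of the box.

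For the innovation term, I first invoke the cyclic property of the trace to rewrite $\Tr(\Psi_\innov' L \Psi_\innov) = \Tr(L\, \Psi_\innov \Psi_\innov') = \Tr(L\, \Cov[X])$. Substituting the block expression~\eqref{eq:L_formula} for $L$ and using stationarity, the time block $(t+h,t)$ of $\Cov[X]$ equals $\Gamma_h$ independently of $t$, so the common scalar $(\Gamma_h)_{d_1,d_2}$ factors out, leaving a multiple of $\frac{1}{T-h}\sum_{t=1}^{T-h}\pi_{t+h,d_1}\pi_{t,d_2}$. Its expectation is exactly $(\Gamma_h)_{d_1,d_2}$, so the shifted trace is $(\Gamma_h)_{d_1,d_2}/S(h)_{d_1,d_2}$ times the centered Bernoulli average. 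I then apply Lemma~\ref{lem:proj_concentration} with $v = u\, S(h)_{d_1,d_2}/|(\Gamma_h)_{d_1,d_2}|$, and weaken the resulting exponent using $|(\Gamma_h)_{d_1,d_2}| \leq \lVert \Gamma_h \rVert_2$ together with $S(h)_{d_1,d_2} \geq p q_u$ (from Lemma~\ref{lem:proj_moments}) to reach the announced form.

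For the noise term, I exploit $\Psi_\noise = \omega I$, so that the quantity of interest reduces to $\omega^2\, \Tr(L)$. Expanding via~\eqref{eq:L_formula}, the trace of each summand involves $\basis_{d_1}' P_{t+h} P_t' \basis_{d_2}$. Since $P_{t+h} P_t'=0$ whenever $h\ne 0$ (different time slots are orthogonal), the whole trace vanishes in that case and the inequality is trivial; it also vanishes when $h=0$ with $d_1\ne d_2$. The only non-trivial subcase is $h=0$, $d_1=d_2$, where $S(0)_{d_1,d_1}=p$ and the expression collapses to $\frac{\omega^2}{Tp}\sum_{t=1}^T \pi_{t,d_1}$ with expectation $\omega^2$. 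A last application of Lemma~\ref{lem:proj_concentration} with $v=u/\omega^2$, followed by the crude bound $p\geq p q_u$, yields the stated $\omega^4$ exponent.

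I do not expect any real obstacle here: both halves are essentially algebraic bookkeeping that exposes the Bernoulli sum to which Lemma~\ref{lem:proj_concentration} already applies, together with the simple observation that the noise contribution is supported on the diagonal at lag zero. The only minor subtlety is ensuring that the value of $v$ fed into Lemma~\ref{lem:proj_concentration} stays in $[0,1]$; this amounts to a mild upper bound on the admissible $u$ of the same order as the existing hypothesis on $T$ (Equation~\eqref{eq:T_constraint_convergence1}), and will be absorbed later when the present tail bound is plugged into the Hanson--Wright application of Lemma~\ref{lem:hanson_wright_application}.
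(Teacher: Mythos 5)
Your proposal is correct and follows essentially the same route as the paper: both reduce the shifted traces, via the block formula~\eqref{eq:L_formula} and the fact that the relevant block of $\Psi_a\Psi_a'$ is constant in $t$ (equal to $\Gamma_h$, resp.\ $\one_{\{h=0\}}\omega^2 I$), to a constant multiple of the centered Bernoulli average, and then invoke Lemma~\ref{lem:proj_concentration} together with $\lvert(\Gamma_h)_{d_1,d_2}\rvert\leq\lVert\Gamma_h\rVert_2$ and $S(h)_{d_1,d_2}\geq pq_u$. The only differences are cosmetic (you treat the noise term through $\omega^2\Tr(L)$ and the orthogonality $P_{t+h}P_t'=0$ for $h\neq 0$ instead of the unified superdiagonal argument, and the threshold fed to Lemma~\ref{lem:proj_concentration} should be the relative deviation $u/\lvert(\Gamma_h)_{d_1,d_2}\rvert$, which is clearly what your subsequent weakening uses), so the argument goes through as in the paper.
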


\begin{proof}
  We can compute an explicit formula thanks to Equation~\eqref{eq:L_formula}: if~$a \in \{\innov, \noise\}$ then
  \begin{align*}
    \Tr(\Psi_{a}' L \Psi_{a})
     & = \Tr \left(\frac{1}{T-h} \sum_{t=1}^{T-h} \Psi_{a}' P_t' \frac{\pi_{t+h, d_1} \pi_{t, d_2}}{S(h)_{d_1, d_2}} \basis_{d_2} \basis_{d_1}' P_{t+h} \Psi_{a} \right)   \\
     & = \frac{1}{T-h} \sum_{t=1}^{T-h}  \frac{\pi_{t+h, d_1} \pi_{t, d_2}}{S(h)_{d_1, d_2}} \Tr \left(\Psi_{a}' P_t' \basis_{d_2} \basis_{d_1}' P_{t+h} \Psi_{a}  \right) \\
     & = \frac{1}{T-h} \sum_{t=1}^{T-h}  \frac{\pi_{t+h, d_1} \pi_{t, d_2}}{S(h)_{d_1, d_2}}  \left(\basis_{d_1}' P_{t+h} \Psi_{a} \Psi_{a}' P_t' \basis_{d_2} \right)     \\
     & = \frac{1}{T-h} \sum_{t=1}^{T-h}  \frac{\pi_{t+h, d_1} \pi_{t, d_2}}{S(h)_{d_1, d_2}} \left( (\Psi_{a} \Psi_{a}')_{[t+h, t]} \right)_{d_1, d_2}
  \end{align*}
  where~$\left( (\Psi_{a} \Psi_{a}')_{[t+h, t]} \right)_{d_1, d_2}$ denotes the~$(d_1, d_2)$ coefficient of the~$(t+h, t)$ block of~$\Psi_{a} \Psi_{a}'$.
  Now is the time to look back on Equation~\eqref{eq:psi_definition}, which tells us that both~$\Psi_\innov \Psi_\innov'$ and~$\Psi_\noise \Psi_\noise'$ are constant along their superdiagonal of rank~$h$.
  We thus find that
  \begin{align*}
    \Tr(\Psi_{\innov}' L \Psi_{\innov} - \bbE[\Psi_{\innov}' L \Psi_{\innov}])
     & = (\Gamma_h)_{d_1, d_2} \left( \frac{1}{T-h} \sum_{t=1}^{T-h} \frac{\pi_{t+h,d_1} \pi_{t, d_2}}{S(h)_{d_1, d_2}} - 1 \right)                  \\
    \Tr(\Psi_{\noise}' L \Psi_{\noise} - \bbE[\Psi_{\noise}' L \Psi_{\noise}])
     & = (\one_{\{h=0\}} \omega^2 I)_{d_1, d_2} \left( \frac{1}{T-h} \sum_{t=1}^{T-h} \frac{\pi_{t+h,d_1} \pi_{t, d_2}}{S(h)_{d_1, d_2}} - 1 \right)
  \end{align*}
  Like before, we can apply Lemma~\ref{lem:proj_concentration}: for all~$u \in [0, 1]$,
  \begin{equation*}
    \bbP \left( \left\lvert \frac{1}{T-h} \sum_{t=1}^{T-h} \frac{\pi_{t+h, d_1} \pi_{t, d_2}}{S(h)_{d_1, d_2}} - 1 \right\rvert \geq u \right) \leq c_1 \exp(-c_2 u^2TS(h)_{d_1, d_2}) \leq c_1 \exp(-c_2 u^2 T p q_u).
  \end{equation*}
  Since~$|(\Gamma_h)_{d_1, d_2}| \leq \lVert \Gamma_h \rVert_2$ and~$(\one_{\{h=0\}} \omega^2 I)_{d_1, d_2} \leq \omega^2$, we can deduce
  \begin{align*}
    \bbP\left(|\Tr(\Psi_{\innov}' L \Psi_{\innov} - \bbE[\Psi_{\innov}' L \Psi_{\innov}])| \geq u \lVert \Gamma_h \rVert_2 \right)
     & \leq c_1 \exp(-c_2 u^2 T p q_u) \\
    \bbP\left(|\Tr(\Psi_{\noise}' L \Psi_{\noise} - \bbE[\Psi_{\noise}' L \Psi_{\noise}])| \geq u \omega^2 \right)
     & \leq c_1 \exp(-c_2 u^2 T p q_u)
  \end{align*}
  which, after rescaling, yields the result we announced.
\end{proof}

\subsection{Gaussian Concentration, Episode 2} \label{sec:convergence_gamma_proof}

We are now ready to apply our conditional concentration result.

\begin{lemma}[Applying Hanson-Wright] \label{lem:hanson_wright_application}
  Let~$\delta > 0$ and~$u \in [0, 1]$.
  Assume that Equations~\eqref{eq:T_constraint_convergence1} and~\eqref{eq:minimum_quadratic} hold.
  Then the deviation probability for~$(\widehat{\Gamma}_h)_{d_1, d_2}$ satisfies
  \begin{equation*}
    \bbP(|(\widehat{\Gamma}_h - \Gamma_h)_{d_1, d_2}| \geq u)
    \leq 4 \delta + c_1 \exp \left(- \frac{c_2 u^2 T p q_u}{\max \left\{(\lVert \Psi_{\innov} \rVert_2^2 + \omega^2)^2, \lVert \Gamma_h \rVert_2^2, \omega^4 \right\}} \right).
  \end{equation*}
\end{lemma}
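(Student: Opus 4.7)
The plan is to combine Lemma~\ref{lem:proba_split} with the conditional Hanson--Wright inequality (Lemma~\ref{lem:conditional_hanson_wright}), feeding in the discrete concentration estimates for $L$ from Lemmas~\ref{lem:L_spectral_bound}, \ref{lem:L_frob_bound}, and \ref{lem:L_trace}. First, I would invoke Lemma~\ref{lem:proba_split} to bound $\bbP(|(\widehat{\Gamma}_h - \Gamma_h)_{d_1, d_2}| \geq u)$ by four terms, each controlling a centered quadratic or bilinear form in the independent standard Gaussians $g_\innov$ and $g_\noise$ at threshold $u/4$. Conditioning on the sampling process $\Pi$ makes each matrix $\Psi_a' L \Psi_b$ deterministic and leaves only the Gaussian randomness; stacking $(g_\innov', g_\noise')'$ into a single standard Gaussian recasts the two cross terms as quadratic forms in a block off-diagonal matrix whose trace vanishes, so that the only nontrivial conditional mean contribution occurs for the two diagonal cases $(\innov,\innov)$ and $(\noise,\noise)$, where it reduces to $\Tr(\Psi_a' L \Psi_a)$.

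Then, applying Lemma~\ref{lem:conditional_hanson_wright} yields, for each of the four forms, a conditional exponential tail bound depending on $\|\Psi_a' L \Psi_b\|_2$, $\|\Psi_a' L \Psi_b\|_F^2$, and, for the two diagonal cases, the shifted trace $\Tr(\Psi_a' L \Psi_a - \bbE[\Psi_a' L \Psi_a])$. By submultiplicativity together with $\|\Psi_\noise\|_2 = \omega$, these norms satisfy $\|\Psi_a' L \Psi_b\|_2 \leq (\|\Psi_\innov\|_2^2 + \omega^2) \|L\|_2$ and $\|\Psi_a' L \Psi_b\|_F^2 \leq (\|\Psi_\innov\|_2^2 + \omega^2)^2 \|L\|_F^2$, up to universal constants. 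Lemma~\ref{lem:L_spectral_bound} bounds $\|L\|_2$ deterministically by $c/(Tpq_u)$; Lemma~\ref{lem:L_frob_bound} bounds $\|L\|_F^2$ by $c/(Tpq_u)$ with probability at least $1-\delta$; Lemma~\ref{lem:L_trace} bounds the two relevant shifted traces by $\|\Gamma_h\|_2 \sqrt{c\log(1/\delta)/(Tpq_u)}$ and $\omega^2 \sqrt{c\log(1/\delta)/(Tpq_u)}$ respectively, each with probability at least $1-\delta$. A union bound over the four quadratic-form tail events and the (at most four) random-$L$ bad events absorbs a total failure probability of $4\delta$ and leaves a single exponential term.

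The main obstacle is ensuring that the bound falls in the sub-Gaussian regime of Hanson--Wright, which produces the $u^2$-shaped exponent in the conclusion. Indeed, Hanson--Wright yields an exponent of the form $-c \min\{t^2/\|M\|_F^2,\ t/\|M\|_2\}$ for the relevant symmetrized matrix $M$, and one must check that for our scale $t = u/4$ the quadratic branch dominates. Using the bounds above, this amounts to an inequality of the form $u/(Tpq_u) \lesssim (\|\Psi_\innov\|_2^2 + \omega^2)$ up to constants, which is precisely the role of the yet-to-be-stated Equation~\eqref{eq:minimum_quadratic} combined with the hypothesis $u \leq 1$. Once the quadratic branch is secured, the Frobenius contribution scales as $(\|\Psi_\innov\|_2^2+\omega^2)^2/(Tpq_u)$, while squaring the two trace deviations produces $\|\Gamma_h\|_2^2/(Tpq_u)$ and $\omega^4/(Tpq_u)$; taking their maximum yields exactly the denominator $\max\{(\|\Psi_\innov\|_2^2+\omega^2)^2, \|\Gamma_h\|_2^2, \omega^4\}$ appearing in the statement.
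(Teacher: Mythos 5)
Your proposal is correct and follows essentially the same route as the paper: Lemma~\ref{lem:proba_split} to split into four quadratic forms, the conditional Hanson--Wright inequality (Lemma~\ref{lem:conditional_hanson_wright}) fed with the deterministic spectral bound of Lemma~\ref{lem:L_spectral_bound}, the high-probability Frobenius bound of Lemma~\ref{lem:L_frob_bound} under Equation~\eqref{eq:T_constraint_convergence1}, and the trace concentration of Lemma~\ref{lem:L_trace}, with the denominators unified into $\max\{(\lVert \Psi_{\innov} \rVert_2^2+\omega^2)^2, \lVert \Gamma_h \rVert_2^2, \omega^4\}$. Two minor slips to fix: the condition securing the sub-Gaussian branch is $\frac{u/4}{\lVert \Psi_{\innov} \rVert_2^2+\omega^2} \leq \frac{1}{2}$ --- the factor $Tpq_u$ cancels when comparing the two branches, so no $T$ appears and this is exactly Equation~\eqref{eq:minimum_quadratic} --- and the trace corrections should be kept as tail probabilities at threshold $u/8$ and bounded by exponentials via Lemma~\ref{lem:L_trace} (as your final paragraph effectively does), not as events of probability $1-\delta$ at a $\delta$-dependent threshold, since the latter would require an extra hypothesis linking $u$, $\delta$ and $T$ and would push the additive constant beyond $4\delta$.
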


\begin{proof}
  The conclusion we had reached before our discrete interlude is given by Lemma~\ref{lem:proba_split}, and we can rewrite it as
  \begin{equation*}
    \bbP(|(\widehat{\Gamma}_h - \Gamma_h)_{d_1, d_2}| \geq u) \leq p_{\innov \innov} + p_{\noise \innov} + p_{\innov \noise} + p_{\noise \noise},
  \end{equation*}
  where each~$p_{ab}$ represents a deviation probability for a specific quadratic form~$g_a' \Psi_a' L \Psi_b g_b$.
  Let us choose~$\delta$ such that Equation~\eqref{eq:T_constraint_convergence1} holds.
  By Lemmas~\ref{lem:L_spectral_bound} and~\ref{lem:L_frob_bound}, with probability at least~$1-\delta$, the following eight inequalities occur at the same time (we use Lemma~\ref{lem:frobenius_spectral_product} to split the products):
  \begin{align*}
    \lVert \Psi_{\innov}' L \Psi_{\innov} \rVert_F^2 & \leq \frac{c \lVert \Psi_{\innov} \rVert_2^4}{T p q_u}                                 &
    \lVert \Psi_{\innov}' L \Psi_{\innov} \rVert_2   & \leq \frac{c \lVert \Psi_{\innov} \rVert_2^2}{T p q_u}                                   \\
    \lVert \Psi_{\noise}' L \Psi_{\innov} \rVert_F^2 & \leq \frac{c \lVert \Psi_{\noise} \rVert_2^2 \lVert \Psi_{\innov} \rVert_2^2}{T p q_u} &
    \lVert \Psi_{\noise}' L \Psi_{\innov} \rVert_2   & \leq \frac{c \lVert \Psi_{\noise} \rVert_2 \lVert \Psi_{\innov} \rVert_2}{T p q_u}       \\
    \lVert \Psi_{\innov}' L \Psi_{\noise} \rVert_F^2 & \leq \frac{c \lVert \Psi_{\innov} \rVert_2^2 \lVert \Psi_{\noise} \rVert_2^2}{T p q_u} &
    \lVert \Psi_{\innov}' L \Psi_{\noise} \rVert_2   & \leq \frac{c \lVert \Psi_{\innov} \rVert_2 \lVert \Psi_{\noise} \rVert_2}{T p q_u}       \\
    \lVert \Psi_{\noise}' L \Psi_{\noise} \rVert_F^2 & \leq \frac{c \lVert \Psi_{\noise} \rVert_2^4}{T p q_u}                                 &
    \lVert \Psi_{\noise}' L \Psi_{\noise} \rVert_2   & \leq \frac{c \lVert \Psi_{\noise} \rVert_2^2}{T p q_u}.
  \end{align*}
  The spectral norm of~$\Psi_{\noise}$ is easily seen to equal~$\lVert \Psi_{\noise} \rVert_2 = \lVert \omega^2 I \rVert_2^{1/2} = \omega$, which allows us to lighten these expressions.
  From there, Lemma~\ref{lem:conditional_hanson_wright} (applied with~$X=g_a$,~$Y=g_b$ and~$A = \Psi_a' L \Psi_b$) provides the concentration bounds we need\footnote{
    The additional trace terms that appear when applying Lemma~\ref{lem:conditional_hanson_wright} (as opposed to the non-conditional version of Lemma~\ref{lem:hanson_wright}) are absent from the papers by \citet{raoEstimationAutoregressiveProcesses2017, raoFundamentalEstimationLimits2017}, which is why we think their upper bound proofs are incomplete.
  }:
  \begin{align*}
    p_{\innov \innov} & \leq \delta + 2 \exp \left( -c T p q_u \min \left\{\frac{(u/4)^2}{\lVert \Psi_{\innov} \rVert_2^4}, \frac{(u/4)}{\lVert \Psi_{\innov} \rVert_2^2} \right\} \right) + \bbP\left(|\Tr(\Psi_{\noise}' L \Psi_{\innov}) - \bbE[\Psi_{\innov}' L \Psi_{\innov}])| \geq u/8\right) \\
    p_{\noise \innov} & \leq \delta + 2 \exp \left( -c T p q_u \min \left\{\frac{(u/4)^2}{\omega^2 \lVert \Psi_{\innov} \rVert_2^2}, \frac{(u/4)}{\omega \lVert \Psi_{\innov} \rVert_2} \right\} \right)                                                                                             \\
    p_{\innov \noise} & \leq \delta + 2 \exp \left( -c T p q_u \min \left\{\frac{(u/4)^2}{\lVert \Psi_{\innov} \rVert_2^2 \omega^2}, \frac{(u/4)}{\lVert \Psi_{\innov} \rVert_2 \omega} \right\} \right)                                                                                             \\
    p_{\noise \noise} & \leq \delta + 2 \exp \left( -c T p q_u \min \left\{\frac{(u/4)^2}{\omega^4}, \frac{(u/4)}{\omega^2} \right\} \right) + \bbP\left(|\Tr(\Psi_{\noise}' L \Psi_{\noise}) - \bbE[\Psi_{\noise}' L \Psi_{\noise}])| \geq u/8\right).
  \end{align*}
  The denominators inside the minima can be unified: for the left column,
  \begin{equation*}
    \max \left\{ \lVert \Psi_{\innov} \rVert_2^4, \lVert \Psi_{\innov} \rVert_2^2 \omega^2, \omega^4 \right\} \leq \left(\lVert \Psi_{\innov} \rVert_2^2 + \omega^2\right)^2,                                                                   \end{equation*}
  and for the right column,
  \begin{equation*}
    \max \left\{ \lVert \Psi_{\innov} \rVert_2^2, \lVert \Psi_{\innov} \rVert_2 \omega, \omega^2 \right\} \leq \left(\lVert \Psi_{\innov} \rVert_2 + \omega\right)^2 \leq 2\left(\lVert \Psi_{\innov} \rVert_2^2 + \omega^2\right).
  \end{equation*}
  This means we can upper bound each of the four minima by
  \begin{equation*}
    \min \left\{\left(\frac{u/4}{\lVert \Psi_{\innov} \rVert_2^2 + \omega^2}\right)^2, \frac{u/8}{\lVert \Psi_{\innov} \rVert_2^2 + \omega^2} \right\}.
  \end{equation*}
  From now on, we additionally suppose that
  \begin{equation} \label{eq:minimum_quadratic}
    \frac{u/4}{\lVert \Psi_{\innov} \rVert_2^2 + \omega^2} \leq \frac{1}{2}
  \end{equation}
  This enables us to get rid of these minima by reducing them to the (smaller) quadratic term on the left.
  We end up with
  \begin{align*}
    p_{\innov \innov} & \leq \delta + 2 \exp \left( -\frac{c u^2 T p q_u}{\left(\lVert \Psi_{\innov} \rVert_2^2 + \omega^2\right)^2} \right) + \bbP\left(|\Tr(\Psi_{\innov}' L \Psi_{\innov}) - \bbE[\Psi_{\innov}' L \Psi_{\innov}])| \geq u/8\right)  \\
    p_{\noise \innov} & \leq \delta + 2 \exp \left( -\frac{c u^2 T p q_u}{\left(\lVert \Psi_{\innov} \rVert_2^2 + \omega^2\right)^2} \right)                                                                                                            \\
    p_{\innov \noise} & \leq \delta + 2 \exp \left( -\frac{c u^2 T p q_u}{\left(\lVert \Psi_{\innov} \rVert_2^2 + \omega^2\right)^2} \right)                                                                                                            \\
    p_{\noise \noise} & \leq \delta + 2 \exp \left( -\frac{c u^2 T p q_u}{\left(\lVert \Psi_{\innov} \rVert_2^2 + \omega^2\right)^2} \right) + \bbP\left(|\Tr(\Psi_{\noise}' L \Psi_{\noise}) - \bbE[\Psi_{\noise}' L \Psi_{\noise}])| \geq u/8\right).
  \end{align*}
  As for the trace terms, they are taken care of by Lemma~\ref{lem:L_trace}:
  \begin{align*}
    \bbP\left(|\Tr(\Psi_{\noise}' L \Psi_{\innov}) - \bbE[\Psi_{\innov}' L \Psi_{\innov}])| \geq u/8\right)
     & \leq c_3 \exp\left(-c_4 \frac{(u/8)^2 T p q_u}{\lVert \Gamma_h \rVert_2^2} \right) \\
    \bbP\left(|\Tr(\Psi_{\noise}' L \Psi_{\noise}) - \bbE[\Psi_{\noise}' L \Psi_{\noise}])| \geq u/8\right)
     & \leq c_3 \exp\left(-c_4 \frac{(u/8)^2 T p q_u}{\omega^4} \right)
  \end{align*}
  We plug this in and rearrange to get:
  \begin{align*}
    p_{\innov \innov} + p_{\noise \innov} + p_{\innov \noise} + p_{\noise \noise}
     & \leq 4 \delta + c_1 \exp \left(- \frac{c_2 u^2 T p q_u}{\max \left\{(\lVert \Psi_{\innov} \rVert_2^2 + \omega^2)^2, \lVert \Gamma_h \rVert_2^2, \omega^4 \right\}} \right).
  \end{align*}
\end{proof}

The following result will simplify the denominator inside the exponential.

\begin{lemma}[Spectral norms of~$\Psi_{\innov}$ and~$\Gamma_h$] \label{lem:psi_norms}
  The matrices~$\Psi_{\innov}$ and~$\Gamma_h$ satisfy:
  \begin{equation*}
    \lVert \Psi_{\innov} \rVert_2^2 \leq \frac{\sigma_{\max}^2}{(1 - \thetamax)^2} \qquad \text{and} \qquad \lVert \Gamma_h \rVert_2 \leq \frac{\thetamax^{h} \sigma_{\max}^2}{1-\thetamax} .
  \end{equation*}
  As a consequence,
  \begin{equation*}
    \max \left\{(\lVert \Psi_{\innov} \rVert_2^2 + \omega^2)^2, \lVert \Gamma_h \rVert_2^2, \omega^4 \right\} \leq \frac{(\sigma_{\max}^2 + \omega^2)^2}{(1-\thetamax)^4}
  \end{equation*}
\end{lemma}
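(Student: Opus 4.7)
The three claims all reduce to spectral-norm facts about~$\Gamma_0$, which are already summarized in Lemma~\ref{lem:norm_gamma}. My plan is to derive the first inequality via a block-shift decomposition of the block-Toeplitz covariance matrix of the stacked process~$X$, then obtain the second inequality in one line from submultiplicativity, and finally perform elementary algebra to handle the consequence.

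For~$\lVert \Psi_\innov \rVert_2^2 = \lVert \Cov[X] \rVert_2$, I would exploit the block-Toeplitz structure of~$\Cov[X]$ given explicitly in Lemma~\ref{lem:x_covariance}: in position~$(s,t)$ the block is~$\Gamma_{s-t}$ for~$s \geq t$ and~$\Gamma_{t-s}'$ for~$s < t$. Write~$\Cov[X] = \sum_{k=-(T-1)}^{T-1} B_k$, where each~$B_k$ carries shifted copies of~$\Gamma_{|k|}$ (or its transpose) on a single block-diagonal and zeros elsewhere. Each~$B_k$ factors as a truncated shift matrix of spectral norm at most~$1$ tensored with~$\Gamma_{|k|}$, so~$\lVert B_k \rVert_2 \leq \lVert \Gamma_{|k|} \rVert_2 \leq \thetamax^{|k|} \lVert \Gamma_0 \rVert_2$, where the last inequality uses~$\Gamma_h = \theta^h \Gamma_0$ together with~$\lVert \theta \rVert_2 \leq \thetamax$. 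The triangle inequality and the geometric sum~$\sum_{k} \thetamax^{|k|} \leq (1+\thetamax)/(1-\thetamax)$ then combine with Lemma~\ref{lem:norm_gamma}'s bound~$\lVert \Gamma_0 \rVert_2 \leq \sigma_{\max}^2/(1-\thetamax^2)$ so that the two factors of~$(1+\thetamax)$ cancel, leaving the target~$\sigma_{\max}^2/(1-\thetamax)^2$.

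The bound on~$\lVert \Gamma_h \rVert_2$ follows in one step from~$\Gamma_h = \theta^h \Gamma_0$, submultiplicativity of the spectral norm,~$\lVert \theta \rVert_2 \leq \thetamax$, Lemma~\ref{lem:norm_gamma}, and the trivial simplification~$1/(1-\thetamax^2) \leq 1/(1-\thetamax)$ coming from~$1+\thetamax \geq 1$.

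For the consequence, I would use the fact that~$0 < 1-\thetamax < 1$ implies~$x \leq x/(1-\thetamax)^2$ for any~$x \geq 0$. Applied to~$\omega^2$ and~$\sigma_{\max}^2$ separately, this gives~$\lVert \Psi_\innov \rVert_2^2 + \omega^2 \leq (\sigma_{\max}^2 + \omega^2)/(1-\thetamax)^2$; squaring handles the first term inside the maximum. The two remaining candidates~$\lVert \Gamma_h \rVert_2^2$ and~$\omega^4$ are each crudely smaller than the same target. The only delicate step is the first one: specifically, justifying that the truncated single-diagonal block matrices~$B_k$ really have spectral norm bounded by~$\lVert \Gamma_{|k|} \rVert_2$, which is what makes the geometric bound tight enough to produce the clean~$(1-\thetamax)^{-2}$ rate instead of a looser power. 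All subsequent steps are routine.
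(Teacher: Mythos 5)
Your proposal is correct and follows essentially the same route as the paper: the block-shift decomposition of $\Cov[X]$ is exactly the paper's Kronecker expansion $I \otimes \Gamma_0 + \sum_t \bigl[\Jmat_t \otimes \theta^t\Gamma_0 + \Jmat_t' \otimes \Gamma_0\theta'^t\bigr]$, bounded term by term via $\lVert \Jmat_t \rVert_2 \le 1$ and Lemma~\ref{lem:kron_singular_values}, with the same geometric sum and cancellation of $(1+\thetamax)$ against Lemma~\ref{lem:norm_gamma}. The $\Gamma_h$ bound and the final maximum are handled identically, so no gap remains.
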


\begin{proof}
  By Lemma~\ref{lem:x_covariance}, we can write~$\Psi_{\innov}^2$ as a sum of Kronecker products (one for each block).
  Let~$J_t$ he a matrix full of zeros, except for the subdiagonal of rank~$t$, which is full of ones.
  Then we have:
  \begin{equation*}
    \Psi_{\innov}^2 = \Cov[X] = I \otimes \Gamma_0(\theta) + \sum_{t=1}^{T-1} \left[ \Jmat_t \otimes \theta^t \Gamma_0(\theta) + \Jmat_t' \otimes \Gamma_0(\theta) \theta'^t \right]
  \end{equation*}
  This gives us control over its spectral norm thanks to Lemma~\ref{lem:kron_singular_values}:
  \begin{align*}
    \lVert \Psi_{\innov} \rVert_2^2 = \lVert \Psi_{\innov}^2 \rVert_2
     & \leq  \lVert I \rVert_2 \times \lVert \Gamma_0(\theta) \rVert_2 + \sum_{t=1}^{T-1} \left[ \lVert \Jmat_t \rVert_2 \times \lVert \theta^t \Gamma_0(\theta) \rVert_2 + \lVert \Jmat_t' \rVert_2 \times \lVert \Gamma_0(\theta) \theta'^t \rVert_2 \right] \\
     & \leq \lVert \Gamma_0(\theta) \rVert_2 \left( 1 + 2\sum_{t=1}^{T-1} \lVert \theta \rVert_2^t \right)
  \end{align*}
  We now use Lemma~\ref{lem:norm_gamma}:
  \begin{align*}
    \lVert \Psi_{\innov} \rVert_2^2
     & \leq \frac{\sigma_{\max}^2}{1 - \thetamax^2} \left(1 + 2 \frac{\thetamax}{1 - \thetamax} \right) = \frac{\sigma_{\max}^2}{(1-\thetamax)^2}.
  \end{align*}
  We now turn to~$\Gamma_h$ with Lemmas~\ref{lem:x_covariance} and~\ref{lem:norm_gamma}:
  \begin{equation*}
    \lVert \Gamma_h \rVert_2 = \lVert \theta^h \Gamma_0(\theta) \rVert_2 \leq \frac{\thetamax^h \sigma_{\max}^2}{1-\thetamax^2}.
  \end{equation*}
  In particular, we have
  \begin{align*}
    \max \left\{(\lVert \Psi_{\innov} \rVert_2^2 + \omega^2)^2, \lVert \Gamma_h \rVert_2^2, \omega^4 \right\}
     & \leq \max \left\{\left(\frac{\sigma_{\max}^2}{(1-\thetamax)^2} + \omega^2\right)^2, \left(\frac{\thetamax^h \sigma_{\max}^2}{1-\thetamax^2}\right)^2, \omega^4 \right\} \\
     & \leq \frac{(\sigma_{\max}^2 + \omega^2)^2}{(1-\thetamax)^4}
  \end{align*}
\end{proof}

We can now control the error of the covariance estimator:

\begin{lemma}[Max norm convergence rate of the covariance estimator] \label{lem:convergence_rate_gamma}
  Let~$\delta > 0$ be small enough.
  Assume that Equations~\eqref{eq:T_constraint_convergence1} and~\eqref{eq:T_constraint_convergence2} hold.
  Then the covariance estimator~$\widehat{\Gamma}_h$ from Equation~\eqref{eq:gamma_estimator} satisfies
  \begin{equation*}
    \lVert \widehat{\Gamma}_h - \Gamma_h \rVert_{\max} \leq c \frac{\sigma_{\max}^2 + \omega^2}{(1-\thetamax)^2} \frac{\sqrt{\log(D/\delta)}}{\sqrt{T p q_u}} = \err(\delta)
  \end{equation*}
  with probability greater than~$1 - \delta$.
\end{lemma}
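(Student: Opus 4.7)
The plan is to assemble the pieces already in place: Lemma~\ref{lem:hanson_wright_application} gives us a one-entry deviation bound, Lemma~\ref{lem:psi_norms} unifies the three denominators that appear there, and a simple union bound over $(d_1, d_2) \in [D]^2$ converts this into a max-norm statement. The only nontrivial work is the bookkeeping: tuning $u$ so that the final probability is $\delta$ rather than $4\delta$ per entry, and checking that the resulting choice of $u$ is compatible with the constraints~\eqref{eq:T_constraint_convergence1} and~\eqref{eq:minimum_quadratic} under which Lemma~\ref{lem:hanson_wright_application} is valid.

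Concretely, I would fix $(d_1, d_2)$ and start from Lemma~\ref{lem:hanson_wright_application}, which after substituting the uniform bound
\[
\max\left\{(\lVert \Psi_\innov \rVert_2^2 + \omega^2)^2,\ \lVert \Gamma_h \rVert_2^2,\ \omega^4\right\} \leq \frac{(\sigma_{\max}^2 + \omega^2)^2}{(1-\thetamax)^4}
\]
from Lemma~\ref{lem:psi_norms}, yields
\[
\bbP\bigl(|(\widehat{\Gamma}_h - \Gamma_h)_{d_1, d_2}| \geq u\bigr)
\leq 4\delta' + c_1 \exp\!\left(-\frac{c_2 u^2 T p q_u (1-\thetamax)^4}{(\sigma_{\max}^2 + \omega^2)^2}\right).
\]
A union bound over the $D^2$ entries then gives
\[
\bbP\bigl(\lVert \widehat{\Gamma}_h - \Gamma_h \rVert_{\max} \geq u\bigr)
\leq 4 D^2 \delta' + c_1 D^2 \exp\!\left(-\frac{c_2 u^2 T p q_u (1-\thetamax)^4}{(\sigma_{\max}^2 + \omega^2)^2}\right).
\]

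To land on probability at most $\delta$, I set $\delta' = \delta/(8 D^2)$ and solve the exponential inequality
\[
c_1 D^2 \exp\!\left(-\frac{c_2 u^2 T p q_u (1-\thetamax)^4}{(\sigma_{\max}^2 + \omega^2)^2}\right) \leq \delta/2
\]
for $u$. This gives
\[
u = c\,\frac{\sigma_{\max}^2 + \omega^2}{(1-\thetamax)^2}\,\sqrt{\frac{\log(c_1 D^2/\delta)}{T p q_u}},
\]
which, since $\delta$ is assumed small enough for the universal constants in $\log(c_1 D^2/\delta)$ to be absorbed into a $\log(D/\delta)$ factor, is exactly the claimed $\err(\delta)$. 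This is the choice that defines the value of $\lambda$ used later in Theorem~\ref{thm:convergence_rate_theta}.

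The only delicate point, and where I expect the bookkeeping to be fiddly rather than conceptually hard, is verifying the two hypotheses of Lemma~\ref{lem:hanson_wright_application} for this value of $u$. The constraint~\eqref{eq:T_constraint_convergence1} (used via Lemmas~\ref{lem:L_frob_bound} and~\ref{lem:L_spectral_bound}) is assumed in the statement; what must be added as Equation~\eqref{eq:T_constraint_convergence2} is precisely the requirement that $T$ be large enough for our chosen $u$ to satisfy both $u \leq 1$ and the quadratic-regime condition~\eqref{eq:minimum_quadratic}, i.e.\ roughly
\[
\frac{\sigma_{\max}^2 + \omega^2}{(1-\thetamax)^2}\sqrt{\frac{\log(D/\delta)}{T p q_u}} \leq c\,(\lVert \Psi_\innov \rVert_2^2 + \omega^2).
\]
Once this is recorded as the second largeness-of-$T$ hypothesis, all four quadratic-form probabilities are in the Gaussian regime simultaneously, the union bound closes, and the claimed max-norm bound holds with probability at least $1-\delta$.
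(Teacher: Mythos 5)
Your proposal is correct and follows essentially the same route as the paper: apply Lemma~\ref{lem:hanson_wright_application} entrywise, simplify the denominator with Lemma~\ref{lem:psi_norms}, tune~$u$ to the claimed $\err(\delta)$, take a union bound over the $D^2$ entries with a rescaled~$\delta$, and record the condition that~$u$ satisfies $u \leq 1$ and Equation~\eqref{eq:minimum_quadratic} as the second largeness-of-$T$ hypothesis, Equation~\eqref{eq:T_constraint_convergence2}. The only difference is the harmless reordering of the union bound and the tuning of~$u$.
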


\begin{proof}
  Let us plug Lemma~\ref{lem:psi_norms} into Lemma~\ref{lem:hanson_wright_application}
  \begin{equation*}
    \bbP(|(\widehat{\Gamma}_h - \Gamma_h)_{d_1, d_2}| \geq u)
    \leq 4 \delta + c_1 \exp \left(-\frac{c_2 (1-\thetamax)^4 u^2 T p q_u}{(\sigma_{\max}^2 + \omega^2)^2} \right).
  \end{equation*}
  All that is left to do is choose~$u$ such that
  \begin{equation*}
    \bbP(|(\widehat{\Gamma}_h - \Gamma_h)_{d_1, d_2}| \geq u) \leq 8\delta,
  \end{equation*}
  which will be true if
  \begin{equation*}
    c_1 \exp \left(- \frac{c_2 (1-\thetamax)^4 T p q_u}{(\sigma_{\max}^2 + \omega^2)^2} u^2 \right) \leq 4\delta \quad \iff \quad u \geq \sqrt{\frac{\log(c_1/4\delta) (\sigma_{\max}^2 + \omega^2)^2}{c_2 (1-\thetamax)^4 T p q_u}}.
  \end{equation*}
  As long as~$\delta$ is small enough, we can take
  \begin{equation}
    u = c \frac{\sqrt{\log(1/\delta)} (\sigma_{\max}^2 + \omega^2)}{(1-\thetamax)^2 \sqrt{T p q_u}}.
  \end{equation}
  For Lemma~\ref{lem:hanson_wright_application} to apply, we must verify that~$u \in [0, 1]$ and that Equation~\eqref{eq:minimum_quadratic} is satisfied.
  In other words, we have to ensure that
  \begin{equation*}
    c \frac{\sqrt{\log(1/\delta)} (\sigma_{\max}^2 + \omega^2)}{(1-\thetamax)^2 \sqrt{T p q_u}}
    \leq \min\{1, 2 (\lVert \Psi_{\innov} \rVert_2^2 + \omega^2)\}
  \end{equation*}
  Using Lemma~\ref{lem:psi_norms}, this is implied by the condition
  \begin{equation}  \label{eq:T_constraint_convergence2}
    \frac{\sqrt{\log(1/\delta)} \max \{1, (\sigma_{\max}^2 + \omega^2)^{-1} \}}{(1-\thetamax)^2 \sqrt{T p q_u}} \leq c
  \end{equation}
  Under these hypotheses, we just proved that with probability at least~$1-8\delta$,
  \begin{equation*}
    |(\widehat{\Gamma}_h - \Gamma_h)_{d_1, d_2}|  \leq c \frac{\sigma_{\max}^2 + \omega^2}{(1-\thetamax)^2} \frac{\sqrt{\log(1/\delta)}}{\sqrt{T p q_u}}.
  \end{equation*}
  We finish with a union bound, applying the previous result to all pairs~$(d_1, d_2) \in [D]^2$.
  With probability greater than~$1-8D^2\delta$, we have:
  \begin{equation*}
    \max_{d_1, d_2} |(\widehat{\Gamma}_h - \Gamma_h)_{d_1, d_2}| = \lVert \widehat{\Gamma}_h - \Gamma_h \rVert_{\max} \leq c \frac{\sigma_{\max}^2 + \omega^2}{(1-\thetamax)^2} \frac{\sqrt{\log(1/\delta)}}{\sqrt{T p q_u}}.
  \end{equation*}
  Replacing~$\delta$ with~$8 D^2 \delta$ gives us the result we wanted: with probability greater than~$1-\delta$,
  \begin{equation*}
    \lVert \widehat{\Gamma}_h - \Gamma_h \rVert_{\max} \leq c \frac{\sigma_{\max}^2 + \omega^2}{(1-\thetamax)^2} \frac{\sqrt{\log(D/\delta)}}{\sqrt{T p q_u}}.
  \end{equation*}
\end{proof}

\subsection{Behavior of the Dantzig selector} \label{sec:convergence_theta_proof}

We now walk the final steps from the error on~$\widehat{\Gamma}_h$ to the error on~$\widehat{\theta}$.
In order to recover Theorem~\ref{thm:convergence_rate_theta}, we adapt the convergence proof from \citet[Appendix A.1]{hanDirectEstimationHigh2015}.
However, we use our own notations and our custom concentration results for~$\widehat{\Gamma}_h$.
To make comparison between both papers easier, we provide a dictionary of the main notations in Table~\ref{tab:notations_trans}.

\begin{table}
  \centering
  \begin{tabular}{c|c|c|}
                            & This paper                                                                     & \citet{hanDirectEstimationHigh2015}              \\ \hline
    VAR def                 & ~$X_t = \theta X_{t-1} + \varepsilon_t$                                        & ~$X_t = A_1' X_{t-1} + Z_t$                      \\
    Covariance              & ~$\Gamma_h = \Cov(X_h, X_0)~$                                                  & ~$\Sigma_i = \Cov(X_0, X_i)$                     \\
    Yule-Walker             & ~$\Gamma_h = \theta^h \Gamma_0$                                                & ~$\Sigma_i = \Sigma_0 A_1^i$                     \\
    Covariance estimate     & ~$\widehat{\Gamma}_h$                                                          & ~$S_i$                                           \\
    Covariance error        & ~$\err(\delta)$                                                                & ~$\zeta_i$                                       \\
    Optimization constraint & ~$\lVert M \widehat{\Gamma}_0 - \widehat{\Gamma}_1 \rVert_{\max} \leq \lambda$ & ~$\lVert S_0 M - S_1 \rVert_{\max} \leq \lambda$ \\
    Optimization objective  & ~$\lVert \vecm(M) \rVert_1$                                                    & ~$\lVert \vecm(M) \rVert_1$                      \\
    Threshold in proof      & ~$\nu$                                                                         & ~$\lambda_1$                                     \\
  \end{tabular}
  \caption{Notation correspondence between this paper and \citet{hanDirectEstimationHigh2015}}
  \label{tab:notations_trans}
\end{table}

Our sparse transition estimator is defined as a solution to~\eqref{eq:theta_estimator}.
The end goal is to control the error~$\lVert\widehat{\theta} - \theta\rVert_1$, where~$\theta = \Gamma_1 \Gamma_0^{-1}$ is the true transition matrix.
We start by choosing a specific~$\lambda$ such that~$\theta$ is feasible with high probability.

\begin{lemma}[Feasibility of the real~$\theta$] \label{lem:feasibility_theta}
  If we select the penalization level
  \begin{equation*}
    \lambda = (\lVert \theta \rVert_{\infty} + 1) \err(\delta),
  \end{equation*}
  then with probability at least~$1-\delta$, the real~$\theta$ is a feasible solution to the optimization problem~\eqref{eq:theta_estimator}.
\end{lemma}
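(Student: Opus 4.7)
The plan is to decompose the quantity $\widehat{\theta}\widehat{\Gamma}_0 - \widehat{\Gamma}_1$ evaluated at the true $\theta$ into two pieces that we control via Lemma~\ref{lem:convergence_rate_gamma}, then assemble them using the sub-multiplicativity of the max-norm with respect to the induced $\infty$-norm.

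First, I would invoke the Yule-Walker identity of Lemma~\ref{lem:x_covariance} at lag $h=1$, which gives $\Gamma_1 = \theta \Gamma_0$. Adding and subtracting $\theta \Gamma_0 = \Gamma_1$ yields the deterministic decomposition
\begin{equation*}
  \theta \widehat{\Gamma}_0 - \widehat{\Gamma}_1
  \;=\; \theta(\widehat{\Gamma}_0 - \Gamma_0) \;-\; (\widehat{\Gamma}_1 - \Gamma_1).
\end{equation*}
Taking the max-norm, the triangle inequality and the elementary bound $\lVert \theta A\rVert_{\max} \leq \lVert \theta\rVert_{\infty}\lVert A\rVert_{\max}$ (which follows from $|(\theta A)_{ij}| \leq \sum_k |\theta_{ik}| |A_{kj}| \leq \lVert A\rVert_{\max} \lVert \theta\rVert_{\infty}$) give
\begin{equation*}
  \lVert \theta\widehat{\Gamma}_0 - \widehat{\Gamma}_1\rVert_{\max}
  \;\leq\; \lVert \theta\rVert_{\infty}\,\lVert \widehat{\Gamma}_0 - \Gamma_0\rVert_{\max}
  \;+\; \lVert \widehat{\Gamma}_1 - \Gamma_1\rVert_{\max}.
\end{equation*}

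Next, I would apply Lemma~\ref{lem:convergence_rate_gamma} twice, once at lag $h=0$ and once at lag $h=1$, each with failure level $\delta/2$. Under the hypotheses of Equations~\eqref{eq:T_constraint_convergence1} and~\eqref{eq:T_constraint_convergence2} (with $\delta$ replaced by $\delta/2$, which only changes universal constants), each event $\{\lVert \widehat{\Gamma}_h - \Gamma_h\rVert_{\max} \leq \err(\delta)\}$ holds with probability at least $1-\delta/2$. A union bound then ensures that both inequalities hold simultaneously with probability at least $1-\delta$.

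On that intersection, the displayed decomposition yields
\begin{equation*}
  \lVert \theta\widehat{\Gamma}_0 - \widehat{\Gamma}_1\rVert_{\max}
  \;\leq\; \lVert \theta\rVert_{\infty}\,\err(\delta) + \err(\delta)
  \;=\; (\lVert \theta\rVert_{\infty} + 1)\,\err(\delta)
  \;=\; \lambda,
\end{equation*}
so $\theta$ itself satisfies the constraint of~\eqref{eq:theta_estimator} and is therefore feasible. There is no real obstacle in this argument: the only care needed is to track constants so that the union bound produces exactly the advertised probability $1-\delta$, and to note that the factor $\lVert \theta\rVert_{\infty} + 1$ appearing in the definition of $\lambda$ is precisely what emerges from the two triangle-inequality terms above.
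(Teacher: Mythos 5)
Your proof is correct and follows essentially the same route as the paper: the same add-and-subtract of $\Gamma_1 = \theta\Gamma_0$, the same max-norm bound $\lVert \theta A\rVert_{\max} \leq \lVert\theta\rVert_{\infty}\lVert A\rVert_{\max}$ (Lemma~\ref{lem:maxnorm_l1linf}), and a double application of Lemma~\ref{lem:convergence_rate_gamma} at lags $0$ and $1$. Your splitting of the failure probability into $\delta/2 + \delta/2$ is in fact slightly tidier than the paper, which applies the covariance bound twice at level $\delta$ and quietly absorbs the resulting $1-2\delta$ into constants.
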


\begin{proof}
  \begin{align*}
    \lVert \theta \widehat{\Gamma}_0 - \widehat{\Gamma}_1 \rVert_{\max}
     & = \lVert \Gamma_1 \Gamma_0^{-1} \ \widehat{\Gamma}_0 - \widehat{\Gamma}_1 \rVert_{\max}                                                                      \\
     & = \lVert \Gamma_1 \Gamma_0^{-1} \ \widehat{\Gamma}_0 - \Gamma_1 + \Gamma_1 - \widehat{\Gamma}_1 \rVert_{\max}                                                \\
     & \leq \lVert \Gamma_1 \Gamma_0^{-1} \ \widehat{\Gamma}_0 - \Gamma_1 \Gamma_0^{-1} \Gamma_0 \rVert_{\max} + \lVert \Gamma_1 - \widehat{\Gamma}_1 \rVert_{\max} \\
     & = \lVert \theta (\widehat{\Gamma}_0 - \Gamma_0)  \rVert_{\max} + \lVert \Gamma_1 - \widehat{\Gamma}_1 \rVert_{\max}
  \end{align*}
  By Lemma~\ref{lem:maxnorm_l1linf},
  \begin{equation*}
    \lVert \theta (\widehat{\Gamma}_0 - \Gamma_0)  \rVert_{\max} \leq \lVert \theta \rVert_{\infty} \lVert \widehat{\Gamma}_0 - \Gamma_0  \rVert_{\max}
  \end{equation*}
  By Lemma~\ref{lem:convergence_rate_gamma}, with probability greater than~$1-2\delta$,
  \begin{equation*}
    \lVert \widehat{\Gamma}_0 - \Gamma_0 \rVert_{\max} \leq \err(\delta) \quad \text{and} \quad
    \lVert \widehat{\Gamma}_1 - \Gamma_1 \rVert_{\max} \leq \err(\delta)
  \end{equation*}
  which implies
  \begin{equation*}
    \lVert \theta \widehat{\Gamma}_0 - \widehat{\Gamma}_1 \rVert_{\max} \leq (\lVert \theta \rVert_{\infty} + 1) \err(\delta).
  \end{equation*}
  This is exactly the feasibility criterion for~\eqref{eq:theta_estimator} if~$\lambda = (\lVert \theta \rVert_{\infty} + 1) \err(\delta)$.
\end{proof}

\begin{lemma}[Error on~$\widehat{\theta}$ in max norm] \label{lem:error_max_norm}
  If we select~$\lambda = (\lVert \theta \rVert_{\infty} + 1) \err(\delta)$, then with probability at least~$1-\delta$, the max norm error of~$\widehat{\theta}$ satisfies
  \begin{equation*}
    \lVert \widehat{\theta} - \theta \rVert_{\max}
    \leq 2 \lambda \lVert \Gamma_0^{-1} \rVert_1.
  \end{equation*}
\end{lemma}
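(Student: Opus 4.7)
My plan is to run the classical Dantzig-selector argument. On the event $E$ given by Lemma~\ref{lem:feasibility_theta}, which has probability at least $1-\delta$, the true $\theta$ is feasible for~\eqref{eq:theta_estimator}. Since $\widehat\theta$ is the minimizer, it too satisfies the constraint, so a triangle inequality gives
\begin{equation*}
\lVert (\widehat\theta - \theta)\widehat\Gamma_0\rVert_{\max}
\leq \lVert \widehat\theta\widehat\Gamma_0 - \widehat\Gamma_1\rVert_{\max}
+ \lVert \theta\widehat\Gamma_0 - \widehat\Gamma_1\rVert_{\max}
\leq 2\lambda.
\end{equation*}

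Next, I would invoke the elementary matrix inequality $\lVert AB\rVert_{\max} \leq \lVert A\rVert_{\max}\,\lVert B\rVert_1$ (where $\lVert\cdot\rVert_1$ is the induced $\ell_1$ operator norm, i.e.\ the maximum column absolute sum) applied to the identity $\widehat\theta - \theta = \bigl((\widehat\theta - \theta)\Gamma_0\bigr)\Gamma_0^{-1}$, which yields
\begin{equation*}
\lVert \widehat\theta - \theta\rVert_{\max}
\leq \lVert (\widehat\theta - \theta)\Gamma_0\rVert_{\max}\,\lVert \Gamma_0^{-1}\rVert_1.
\end{equation*}
It therefore suffices to show $\lVert (\widehat\theta - \theta)\Gamma_0\rVert_{\max} \leq 2\lambda$, i.e.\ to replace the empirical $\widehat\Gamma_0$ by its deterministic counterpart $\Gamma_0$ up to a negligible error.

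The main obstacle is exactly this swap. The natural split
\begin{equation*}
(\widehat\theta - \theta)\Gamma_0
= (\widehat\theta - \theta)\widehat\Gamma_0
+ (\widehat\theta - \theta)(\Gamma_0 - \widehat\Gamma_0),
\end{equation*}
combined with $\lVert(\widehat\theta - \theta)(\Gamma_0-\widehat\Gamma_0)\rVert_{\max} \leq \lVert \widehat\theta - \theta\rVert_\infty \,\lVert \Gamma_0-\widehat\Gamma_0\rVert_{\max}$, reintroduces the row-$\ell_1$ quantity $\lVert \widehat\theta - \theta\rVert_\infty$, which is not directly controlled by the max-norm. To close the loop, I would exploit the fact that~\eqref{eq:theta_estimator} decouples into $D$ independent row-wise Dantzig selectors: for each row $i$, optimality of $\widehat\theta$ together with feasibility of $\theta$ implies $\lVert (\widehat\theta-\theta)_{i,\cdot}\rVert_1 \leq 2\lVert (\widehat\theta-\theta)_{i,S_i}\rVert_1$ where $S_i = \mathrm{supp}(\theta_{i,\cdot})$ has cardinality at most $s$, hence $\lVert \widehat\theta - \theta\rVert_\infty \leq 2s\,\lVert \widehat\theta - \theta\rVert_{\max}$. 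Plugging this back gives a self-improving inequality of the form $\lVert \widehat\theta - \theta\rVert_{\max}(1 - 2s\,\err(\delta)\,\lVert\Gamma_0^{-1}\rVert_1) \leq 2\lambda\,\lVert \Gamma_0^{-1}\rVert_1$, which collapses to the claimed bound as soon as $T$ is chosen large enough that the prefactor on the left is bounded below by a universal constant (this is what the "$T$ large enough" hypothesis of Theorem~\ref{thm:convergence_rate_theta} must ultimately secure).
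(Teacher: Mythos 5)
Your opening steps are fine (feasibility of $\theta$ plus feasibility of $\widehat{\theta}$ gives $\lVert(\widehat{\theta}-\theta)\widehat{\Gamma}_0\rVert_{\max}\leq 2\lambda$, and Lemma~\ref{lem:maxnorm_l1linf} reduces the problem to controlling $\lVert(\widehat{\theta}-\theta)\Gamma_0\rVert_{\max}$), but the way you close the loop does not deliver the statement. Your self-improving inequality only yields
\begin{equation*}
  \lVert \widehat{\theta} - \theta \rVert_{\max} \leq \frac{2\lambda \lVert \Gamma_0^{-1} \rVert_1}{1 - 2s\,\err(\delta)\,\lVert \Gamma_0^{-1} \rVert_1},
\end{equation*}
which is strictly weaker than the claimed $2\lambda\lVert\Gamma_0^{-1}\rVert_1$ (at best $4\lambda\lVert\Gamma_0^{-1}\rVert_1$ if you force the denominator above $1/2$), and it is valid only under the contraction condition $2s\,\err(\delta)\,\lVert\Gamma_0^{-1}\rVert_1<1$. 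That condition is \emph{not} secured by the paper's \enquote{$T$ large enough} hypotheses: Equations~\eqref{eq:T_constraint_convergence1} and~\eqref{eq:T_constraint_convergence2} involve neither $s$ nor $\lVert\Gamma_0^{-1}\rVert_1$ (nor the $\log D$ factor hidden in $\err(\delta)$), so your route requires adding a genuinely new assumption. (Your cone inequality $\lVert\widehat{\theta}-\theta\rVert_\infty \leq 2s\lVert\widehat{\theta}-\theta\rVert_{\max}$ is correct in itself; it is precisely what forces the bootstrap and the extra condition.)

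The paper sidesteps the bootstrap by never letting $\lVert\widehat{\theta}-\theta\rVert_\infty$ appear. Writing $\theta=\Gamma_1\Gamma_0^{-1}$, it bounds $\lVert\widehat{\theta}-\theta\rVert_{\max} = \lVert(\widehat{\theta}\Gamma_0-\Gamma_1)\Gamma_0^{-1}\rVert_{\max}$ and splits $\widehat{\theta}\Gamma_0-\Gamma_1 = \widehat{\theta}(\Gamma_0-\widehat{\Gamma}_0) + (\widehat{\theta}\widehat{\Gamma}_0-\widehat{\Gamma}_1) + (\widehat{\Gamma}_1-\Gamma_1)$: the middle term is at most $\lambda$ by feasibility of $\widehat{\theta}$, the last is at most $\err(\delta)$, and the first is at most $\lVert\widehat{\theta}\rVert_\infty\,\err(\delta) \leq \lVert\theta\rVert_\infty\,\err(\delta)$, because the row-wise $\ell_1$ minimality of $\widehat{\theta}$ against the feasible point $\theta$ gives $\lVert\widehat{\theta}\rVert_\infty\leq\lVert\theta\rVert_\infty$. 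With the specific choice $\lambda=(\lVert\theta\rVert_\infty+1)\err(\delta)$ the three terms sum to exactly $2\lambda$, so the bound $2\lambda\lVert\Gamma_0^{-1}\rVert_1$ follows with no sparsity argument, no extra condition on $T$, and no loss in the constant. Replacing your bootstrap step by this one-sided control of $\lVert\widehat{\theta}\rVert_\infty$ turns your argument into the paper's proof.
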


\begin{proof}
  \begin{align*}
    \lVert \widehat{\theta} - \theta \rVert_{\max}
     & = \lVert \widehat{\theta} - \Gamma_1 \Gamma_0^{-1} \rVert_{\max}                                                                                                                                                                                                               \\
     & = \lVert (\widehat{\theta} \Gamma_0 - \Gamma_1) \Gamma_0^{-1} \rVert_{\max}                                                                                                                                                                                                    \\
     & = \lVert (\widehat{\theta} \Gamma_0 - \widehat{\theta} \widehat{\Gamma}_0 + \widehat{\theta} \widehat{\Gamma}_0 - \widehat{\Gamma}_1 + \widehat{\Gamma}_1 - \Gamma_1) \Gamma_0^{-1} \rVert_{\max}                                                                              \\
     & \leq \lVert (\widehat{\theta} \Gamma_0 - \widehat{\theta} \widehat{\Gamma}_0) \Gamma_0^{-1} \rVert_{\max} + \lVert (\widehat{\theta} \widehat{\Gamma}_0 - \widehat{\Gamma}_1) \Gamma_0^{-1} \rVert_{\max} + \lVert (\widehat{\Gamma}_1 - \Gamma_1) \Gamma_0^{-1} \rVert_{\max}
  \end{align*}
  By Lemma~\ref{lem:maxnorm_l1linf},
  \begin{align*}
    \lVert \widehat{\theta} - \theta \rVert_{\max}
     & \leq \left(\lVert \widehat{\theta} (\Gamma_0 - \widehat{\Gamma}_0) \rVert_{\max} + \lVert \widehat{\theta} \widehat{\Gamma}_0 - \widehat{\Gamma}_1\rVert_{\max} + \lVert \widehat{\Gamma}_1 - \Gamma_1 \rVert_{\max} \right) \lVert \Gamma_0^{-1} \rVert_1                      \\
     & \leq \left(\lVert \widehat{\theta} \rVert_{\infty} \lVert \Gamma_0 - \widehat{\Gamma}_0 \rVert_{\max} + \lVert \widehat{\theta} \widehat{\Gamma}_0 - \widehat{\Gamma}_1\rVert_{\max} + \lVert \widehat{\Gamma}_1 - \Gamma_1 \rVert_{\max} \right) \lVert \Gamma_0^{-1} \rVert_1
  \end{align*}
  We want to control~$\lVert \widehat{\theta} \rVert_{\infty}$ using~$\lVert \theta \rVert_{\infty}$.
  Let us recall that the operator~$\ell_{\infty}$ norm is equal to the maximum~$\ell_{1}$ norm of the rows of a matrix.
  To control the rows of~$\widehat{\theta}$, we notice that the optimization problem defining~$\widehat{\theta}$, namely
  \begin{equation*}
    \min_{M \in \bbR^{D \times D}}   \lVert \vecm(M) \rVert_1 \quad \text{s.t.} \quad \lVert M \widehat{\Gamma}_0  - \widehat{\Gamma}_1 \rVert_{\max} \leq \lambda
  \end{equation*}
  is equivalent to the row-wise minimization
  \begin{align*}
    \forall i, \quad \min_{M_{i, \cdot} \in \bbR^{1\times D}} \lVert M_{i, \cdot} \rVert_1 \quad \text{s.t.} \quad \lVert M_{i, \cdot} \widehat{\Gamma}_0 - (\widehat{\Gamma}_1)_{i, \cdot} \rVert_{\max} \leq \lambda
  \end{align*}
  From this, we deduce that each row of the optimum~$\widehat{\theta}$ satisfies~$\lVert \widehat{\theta}_{i, \cdot} \rVert_1 \leq \lVert \theta_{i, \cdot} \rVert_1$, which implies~$\lVert \widehat{\theta} \rVert_{\infty} \leq \lVert \theta \rVert_{\infty}$.
  Going back to our error estimate, we get:
  \begin{equation*}
    \lVert \widehat{\theta} - \theta \rVert_{\max}
    \leq \left(\lVert \theta \rVert_{\infty} \lVert \Gamma_0 - \widehat{\Gamma}_0 \rVert_{\max} + \lVert \widehat{\theta} \widehat{\Gamma}_0 - \widehat{\Gamma}_1\rVert_{\max} + \lVert \widehat{\Gamma}_1 - \Gamma_1 \rVert_{\max} \right) \lVert \Gamma_0^{-1} \rVert_1
  \end{equation*}
  Note that the middle term is smaller than~$\lambda$ because the optimum~$\widehat{\theta}$ is a feasible solution.
  Meanwhile, the first and third term are smaller than~$\err(\delta)$ with probability~$1-\delta$:
  \begin{equation*}
    \lVert \widehat{\theta} - \theta \rVert_{\max}
    \leq \left(\lVert \theta \rVert_{\infty} \err(\delta) + \lambda + \err(\delta) \right) \lVert \Gamma_0^{-1} \rVert_1 = 2 \lambda \lVert \Gamma_0^{-1} \rVert_1
  \end{equation*}
\end{proof}

To complete the proof of Theorem~\ref{thm:convergence_rate_theta}, we simply need to go from the max norm to the~$\ell_{\infty}$ operator norm.

\begin{proof}
  Let~$\nu > 0$ be a threshold (to be chosen later). We define
  \begin{equation*}
    s_1 = \max_i \sum_j \min\left\{\frac{|\theta_{i,j}|}{\nu}, 1\right\} \quad \text{and} \quad \mathcal{I}_i = \{j: |\theta_{i,j}| \geq \nu\}
  \end{equation*}
  With high probability, the following holds for any row~$i$:
  \begin{align*}
    \lVert \widehat{\theta}_{i, \cdot} - \theta_{i, \cdot} \rVert_1
     & \leq \lVert \widehat{\theta}_{i, \mathcal{I}_i^c} - \theta_{i, \mathcal{I}_i^c} \rVert_1 + \lVert \widehat{\theta}_{i, \mathcal{I}_i} - \theta_{i, \mathcal{I}_i} \rVert_1                                                                                                      \\
     & \leq \lVert \widehat{\theta}_{i, \mathcal{I}_i^c} \rVert_1 + \lVert \theta_{i, \mathcal{I}_i^c} \rVert_1 + \lVert \widehat{\theta}_{i, \mathcal{I}_i} - \theta_{i, \mathcal{I}_i} \rVert_1                                                                                      \\
     & = (\lVert \widehat{\theta}_{i, \cdot} \rVert_1 - \lVert \widehat{\theta}_{i, \mathcal{I}_i} \rVert_1) + \lVert \theta_{i, \mathcal{I}_i^c} \rVert_1 + \lVert \widehat{\theta}_{i, \mathcal{I}_i} - \theta_{i, \mathcal{I}_i} \rVert_1                                           \\
     & \leq \lVert \theta_{i, \cdot} \rVert_1 - \lVert \widehat{\theta}_{i, \mathcal{I}_i} \rVert_1 + \lVert \theta_{i, \mathcal{I}_i^c} \rVert_1 + \lVert \widehat{\theta}_{i, \mathcal{I}_i} - \theta_{i, \mathcal{I}_i} \rVert_1                                                    \\
     & = (\lVert \theta_{i,\mathcal{I}_i} \rVert_1 + \lVert \theta_{i,\mathcal{I}_i^c} \rVert_1) - \lVert \widehat{\theta}_{i, \mathcal{I}_i} \rVert_1 + \lVert \theta_{i, \mathcal{I}_i^c} \rVert_1 + \lVert \widehat{\theta}_{i, \mathcal{I}_i} - \theta_{i, \mathcal{I}_i} \rVert_1 \\
     & = 2 \lVert \theta_{i, \mathcal{I}_i^c} \rVert_1 + (\lVert \theta_{i, \mathcal{I}_i} \rVert_1 - \lVert \widehat{\theta}_{i, \mathcal{I}_i} \rVert_1) + \lVert \widehat{\theta}_{i, \mathcal{I}_i} - \theta_{i, \mathcal{I}_i} \rVert_1                                           \\
     & \leq 2 \lVert \theta_{i, \mathcal{I}_i^c} \rVert_1 + 2 \lVert \widehat{\theta}_{i, \mathcal{I}_i} - \theta_{i, \mathcal{I}_i} \rVert_1
  \end{align*}
  By definition of~$\mathcal{I}_i$, for all~$j \in \mathcal{I}_i^c$,~$|\theta_{i,j}| \leq \nu$, hence
  \begin{equation*}
    \lVert \theta_{i, \mathcal{I}_i^c} \rVert_1 = \sum_{j \in \mathcal{I}_i^c} |\theta_{i,j}| = \sum_{j \in \mathcal{I}_i^c} \min\{|\theta_{i,j}|, \nu\} \leq \sum_{j} \min\{|\theta_{i,j}|, \nu\} \leq \nu s_1
  \end{equation*}
  Meanwhile, the second term satisfies
  \begin{equation*}
    \lVert \widehat{\theta}_{i, \mathcal{I}_i} - \theta_{i, \mathcal{I}_i} \rVert_1
    \leq |\mathcal{I}_i| \times \lVert \widehat{\theta} - \theta \rVert_{\max}
  \end{equation*}
  And by definition of~$\mathcal{I}_i$, for all~$j \in \mathcal{I}_i$,~$|\theta_{i,j}| \geq \nu$, hence
  \begin{equation*}
    |\mathcal{I}_i| = \sum_{j \in \mathcal{I}_i} 1 = \sum_{j \in \mathcal{I}_i} \min \left\{ \frac{|\theta_{i,j}|}{\nu}, 1\right\} \leq \sum_j \min \left\{ \frac{|\theta_{i,j}|}{\nu}, 1\right\} \leq s_1
  \end{equation*}
  Combining all of this, we get that with high probability,
  \begin{equation*}
    \lVert \widehat{\theta}_{i, \cdot} - \theta_{i, \cdot} \rVert_1 \leq 2 (\nu + 2\lambda \lVert \Gamma_0^{-1} \rVert_1) s_1
  \end{equation*}
  Judging by the last Equation, it makes sense to choose~$\nu = 2 \lambda \lVert \Gamma_0^{-1} \rVert_1$.
  Furthermore, our sparsity hypothesis on~$\theta$ implies that for all but~$s$ of the coefficients of any row~$i$,~$\min\{|\theta_{i,j}|, \nu\} = |\theta_{i,j}| = 0$.
  We deduce that for every~$i$,
  \begin{equation*}
    \sum_j \min\left\{|\theta_{i,j}|, \nu \right\} \leq s \max_j \min\left\{|\theta_{i,j}|, \nu \right\} \leq \nu s
  \end{equation*}
  which directly implies
  \begin{equation*}
    \nu s_1 = \max_i \sum_j \min\left\{|\theta_{i,j}|, \nu \right\} \leq \nu s
  \end{equation*}
  We finally find that with high probability,
  \begin{equation*}
    \lVert \widehat{\theta}_{i, \cdot} - \theta_{i, \cdot} \rVert_1 \leq 4 \nu s_1 \leq 4 \nu s = 8 \lambda \lVert \Gamma_0^{-1} \rVert_1 s
  \end{equation*}
  With the help of a union bound, again with high probability,
  \begin{equation*}
    \lVert \widehat{\theta} - \theta \rVert_{\infty} = \max_i \lVert \widehat{\theta}_{i, \cdot} - \theta_{i, \cdot} \rVert_1 \leq 8 \lambda \lVert \Gamma_0^{-1} \rVert_1 s
  \end{equation*}
  We substitute the value of~$\lambda$ and obtain
  \begin{equation*}
    \lVert \widehat{\theta} - \theta \rVert_{\infty}
    \leq 8 (\lVert \theta \rVert_{\infty} + 1) \err(\delta) \lVert \Gamma_0^{-1} \rVert_1 s
  \end{equation*}
  Once we plug in the value of~$\err(\delta)$, the resulting high-probability error bound reads
  \begin{equation*}
    \lVert \widehat{\theta} - \theta \rVert_{\infty} \leq c
    \frac{\lVert \theta \rVert_{\infty} + 1}{\lVert \Gamma_0^{-1} \rVert_1^{-1}}
    \frac{\sigma_{\max}^2 + \omega^2}{(1-\thetamax)^2}
    \frac{s \sqrt{\log(D/\delta)}}{\sqrt{T p q_u}}
  \end{equation*}
  Since~$\thetamax$ only acted as an upper bound on~$\lVert \theta \rVert_2$ in this proof, we can define
  \begin{equation*}
    \gamma_u(\theta) = \frac{\lVert \theta \rVert_{\infty} + 1}{(1-\lVert \theta \rVert_2)^2} \frac{\sigma_{\max}^2 + \omega^2}{\lVert \Gamma_0^{-1} \rVert_1^{-1}}
  \end{equation*}
  to obtain the compressed expression
  \begin{equation*}
    \lVert \widehat{\theta} - \theta \rVert_{\infty} \leq c \gamma_u(\theta) \frac{s \sqrt{\log(D/\delta)}}{\sqrt{T p q_u}}.
  \end{equation*}
\end{proof}

\section{Proof of the Minimax Lower Bound} \label{sec:lower_bound_proof}

We now present the detailed proof of Theorem~\ref{thm:lower_bound_sparse}.

\subsection{Overview}

Our argument is based on Fano's method, which we sum up in Lemma~\ref{lem:fano_method}. For a detailed presentation, we refer the reader to \citet[Chapter 2]{tsybakovIntroductionNonparametricEstimation2008}.
Note that \citet[Chapter 15]{wainwrightHighDimensionalStatisticsNonAsymptotic2019} and \citet[Chapter 7]{duchiInformationTheoryStatistics2019} also offer good treatments of the subject.

Fano's method relies on choosing a set of parameters~$\theta_0, \theta_1, ..., \theta_M$ satisfying two seemingly contradictory conditions: their induced distributions must be hard to distinguish, yet they must lie as fart apart from one another as possible.
In particular, the crucial requirement of Fano's method is a tight upper bound on the KL divergence between two distributions generated by different parameters~$\theta_i$ and~$\theta_0$.
Taking the latter to be~$0$, we actually want to bound
\begin{equation*}
  \frac{1}{M+1} \sum_{i=1}^{M} \KL{\bbP_{\theta_i}(\Pi, Y)}{\bbP_{0}(\Pi, Y)}  \leq \max_i \KL{\bbP_{\theta_i}(\Pi, Y)}{\bbP_{0}(\Pi, Y)}
\end{equation*}
By Lemma~\ref{lem:kl_chain},
\begin{equation*}
  \KL{\bbP_{\theta_i}(\Pi, Y)}{\bbP_{0}(\Pi, Y)} = \KL{\bbP_{\theta_i}(\Pi)}{\bbP_{0}(\Pi)} + \bbE_\Pi \left[\KL{\bbP_{\theta_i}(Y|\Pi)}{\bbP_{0}(Y|\Pi)}\right]
\end{equation*}
Since~$\theta_i$ does not affect the distribution of the sampling process~$\Pi$, the first term of the right-hand side is zero, and we will concentrate on the second term.
First, we will upper-bound the random variable inside the expectation for a fixed realization of~$\Pi$, and then we will average said bound over all possible projections.

\medskip

We now give the structure of the argument in a coherent order, along with the most important intermediate results:
\begin{enumerate}
  \item Compute the conditional covariance~$\Cov_\theta [Y | \Pi]$ and decompose it into a constant term~$Q_\Pi$ (corresponding to the independent case~$\theta=0$) plus a residual~$R_\Pi(\theta)$ (Lemma~\ref{lem:y_covariance_decomposition}).
  \item Upper-bound the conditional KL divergence~$\KL{\bbP_{\theta}(Y|\Pi)}{\bbP_{0}(Y|\Pi)}$ using the \enquote{deviations from the identity}~$\Delta_\Pi(\theta) = Q_\Pi^{-1/2} R_\Pi(\theta) Q_\Pi^{-1/2}$ (Lemma~\ref{lem:kl_close_gaussians_applied}).
  \item Control~$\Delta_\Pi(\theta)$ using features of~$R(\theta)$ scaled by sampling-related factors (Lemmas~\ref{lem:delta_to_R},~\ref{lem:submatrix_averaging} and~\ref{lem:full_R_control}).
  \item Deduce an upper bound on the KL divergence~$\bbE_\Pi[\KL{\bbP_{\theta}(Y|\Pi)}{\bbP_{0}(Y|\Pi)}]$ (Lemma~\ref{lem:kl_bound_avg}).
  \item Apply Fano's method to a set of parameters~$\theta_i$ constructed from a pruned binary hypercube of well-chosen radius.

\end{enumerate}

\subsection{Change of Notations}

For this part, we slightly modify the previous conventions: we now assume that all the rows of~$\Pi_t$ that contain only zeros are removed.
In other words,~$\Pi_t$ is no longer the diagonal matrix~$\diag(\pi_t)$ but instead becomes a wide rectangular matrix with exactly one~$1$ per row and at most one~$1$ per column.
We thus have~$\Pi \Pi' = I$ unless all of the~$\pi_{t, d}$ are zero, in which case the matrix~$\Pi$ is empty, and so are the observations~$Y$.
Let us denote this very unlikely event by~$E$, and its complement by~$E^c$.
If~$\Pi$ is such that~$E$ happens, we obviously have~$\KL{\bbP_{\theta_i}(Y|\Pi)}{\bbP_{0}(Y|\Pi)} = 0$, which means that
\begin{equation} \label{eq:effect_E}
  \bbE_\Pi \left[\KL{\bbP_{\theta_i}(Y|\Pi)}{\bbP_{0}(Y|\Pi)}\right] = \bbE_\Pi \left[·\one_{E^c} \KL{\bbP_{\theta_i}(Y|\Pi)}{\bbP_{0}(Y|\Pi)}\right]
\end{equation}
For the beginning of the proof, we consider a fixed, non-empty realization of~$\Pi$.

\subsection{Covariance Decomposition}

As we announced in the proof sketch, our reference parameter will be~$\theta_0 = 0$, which is why it makes sense to express the conditional covariance of~$Y$ as a deviation from the case without interactions.
This is the aim of the following result.

\begin{lemma}[Conditional covariance decomposition] \label{lem:y_covariance_decomposition}
  The covariance matrix of~$Y$ given~$\Pi$ decomposes as
  \begin{equation*}
    \Cov_\theta [Y | \Pi] = Q_\Pi + R_\Pi(\theta),
  \end{equation*}
  where~$Q_\Pi$ is a constant term and~$R_\Pi(\theta)$ is a residual which vanishes as~$\theta \to 0$.
  They are defined as follows: the constant term is
  \begin{equation*}
    Q_\Pi = \Pi (\bdiag_T \Sigma) \Pi' + \omega^2 I
  \end{equation*}
  whereas the residual equals
  \begin{equation*}
    R_\Pi(\theta) = \Pi R(\theta) \Pi' \qquad \text{with} \qquad  R(\theta) = \begin{bmatrix}
      \theta \Gamma_0(\theta) \theta' & \Gamma_0(\theta) \theta'^1      & \Gamma_0(\theta) \theta'^2      & \cdots \\
      \theta^1 \Gamma_0(\theta)       & \theta \Gamma_0(\theta) \theta' & \Gamma_0(\theta) \theta'^1      &        \\
      \theta^2 \Gamma_0(\theta)       & \theta^1 \Gamma_0(\theta)       & \theta \Gamma_0(\theta) \theta' &        \\
      \vdots                          &                                 &                                 & \ddots
    \end{bmatrix}.
  \end{equation*}
\end{lemma}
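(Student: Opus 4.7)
The plan is to derive the decomposition by passing through the unconditional covariance of the stacked latent state and then sandwiching with $\Pi$. First I would stack the observation equation: writing $\Pi = \bdiag(\Pi_1, \ldots, \Pi_T)$ with the new rectangular convention (one $1$ per row), we have $Y = \Pi X + \tilde{\noise}$ where $\tilde{\noise} \sim \mathcal{N}(0, \omega^2 I)$ lives in the (reduced) observation dimension. Since $\innov$, $\Pi$ and $\noise$ are mutually independent, conditioning on $\Pi$ does not change the law of $X$, so
\begin{equation*}
  \Cov_\theta[Y \mid \Pi] = \Pi \, \Cov_\theta[X] \, \Pi' + \omega^2 \, \Pi \Pi' = \Pi \, \Cov_\theta[X] \, \Pi' + \omega^2 I,
\end{equation*}
where the last equality uses the fact that each row of the compressed $\Pi$ contains exactly one nonzero entry, hence $\Pi \Pi' = I$.

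The next step is to decompose $\Cov_\theta[X]$ itself into a $\theta$-independent backbone plus a residual vanishing at $\theta = 0$. Lemma~\ref{lem:x_covariance} identifies $\Cov_\theta[X]$ as the block-Toeplitz matrix whose $(i,j)$-block equals $\theta^{i-j} \Gamma_0(\theta)$ for $i \geq j$ and $\Gamma_0(\theta) \theta'^{\,j-i}$ otherwise. The Yule-Walker identity established during the proof of that Lemma, namely $\Gamma_0(\theta) = \Sigma + \theta \Gamma_0(\theta) \theta'$, lets me split each diagonal block as $\Sigma + \theta \Gamma_0(\theta) \theta'$. Absorbing the $\Sigma$ terms into $\bdiag_T \Sigma$ and keeping everything else yields
\begin{equation*}
  \Cov_\theta[X] = \bdiag_T \Sigma + R(\theta),
\end{equation*}
where $R(\theta)$ is exactly the block matrix displayed in the statement (diagonal blocks $\theta \Gamma_0(\theta) \theta'$, off-diagonal blocks copied from $\Cov_\theta[X]$). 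One checks by inspection that $R(0) = 0$.

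Combining the two displays gives
\begin{equation*}
  \Cov_\theta[Y \mid \Pi] = \Pi (\bdiag_T \Sigma) \Pi' + \Pi R(\theta) \Pi' + \omega^2 I = Q_\Pi + R_\Pi(\theta),
\end{equation*}
which is the claim. There is no real obstacle in this step: the proof is essentially bookkeeping around Lemma~\ref{lem:x_covariance} and the Yule-Walker identity. The only small subtlety worth stating explicitly is the justification of $\Pi \Pi' = I$ under the new rectangular convention, and the fact that the observation noise contributes exactly $\omega^2 I$ in the compressed coordinates once the all-zero rows of $\diag(\pi_t)$ have been removed (as ensured by restricting to the event $E^c$ noted in Equation~\eqref{eq:effect_E}).
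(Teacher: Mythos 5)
Your proposal is correct and follows essentially the same route as the paper: compute the conditional Gaussian covariance $\Cov_\theta[Y \mid \Pi] = \omega^2 I + \Pi \Cov_\theta[X] \Pi'$, then split $\Cov_\theta[X]$ into $\bdiag_T \Sigma$ plus the residual via the identity $\Gamma_0(\theta) - \Sigma = \theta \Gamma_0(\theta) \theta'$ from Lemma~\ref{lem:x_covariance}. The only cosmetic difference is that you write the noise contribution as $\omega^2 \Pi \Pi'$ before invoking $\Pi\Pi' = I$, whereas the compressed noise covariance is already $\omega^2 I$; since the two coincide on $E^c$, this changes nothing.
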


\begin{proof}
  We use Equation~\eqref{eq:y_model} to see that the conditional distribution~$\bbP_{\theta} (Y | \Pi)$ is a centered multivariate Gaussian with covariance
  \begin{equation*}
    \Cov_\theta [Y | \Pi] = \omega^2 I + \Pi \Cov_\theta[X] \Pi'.
  \end{equation*}
  We then use Lemma~\ref{lem:x_covariance} to get an expression of~$\Cov_{\theta}[X]$ and deduce that its constant term (w.r.t to~$\theta$) is a block-diagonal matrix filled with copies of~$\Sigma$:
  \begin{equation*}
    \Cov_\theta [Y | \Pi] = \omega^2 I + \Pi \bdiag_T(\Sigma) \Pi' + \Pi \left(\Cov_\theta[X] - \bdiag_T(\Sigma)\right) \Pi'.
  \end{equation*}
  Finally, we define~$Q_\Pi = \omega^2 I + \Pi \bdiag_T(\Sigma) \Pi'$,~$R(\theta) = \Cov_\theta[X] - \bdiag_T (\Sigma)$ and~$R_\Pi(\theta) = \Pi R(\theta)\Pi'$ to obtain the decomposition we announced.
  The diagonal blocks of~$R(\theta)$ are easily computed by noticing that~$\Gamma_0(\theta) - \Sigma = \theta \Gamma_0(\theta) \theta'$.
\end{proof}

\subsection{From the KL Divergence to \texorpdfstring{$\Delta_\Pi(\theta)$}{Delta}}

Judging by Lemma~\ref{lem:y_covariance_decomposition}, choosing a parameter~$\theta$ close to~$0$ yields a conditional distribution for~$Y$ whose covariance is close to~$Q_\Pi$.
In the next result, we translate this into a bound on the KL divergence between~$\bbP_{\theta}(Y | \Pi)$ and~$\bbP_{0}(Y | \Pi)$.

\begin{lemma} \label{lem:kl_close_gaussians_applied}
  Let us define the deviation from the identity:
  \begin{equation*}
    \Delta_\Pi(\theta) = Q_\Pi^{-1/2} R_\Pi(\theta) Q_\Pi^{-1/2}.
  \end{equation*}
  Then the conditional KL divergence is upper-bounded by:
  \begin{equation*}
    \KL{\bbP_{\theta}(Y | \Pi)}{\bbP_{0}(Y | \Pi)} \leq \frac{\lVert \Delta_\Pi(\theta) \rVert_F^2}{2(1 + \lambda_{\min}(\Delta_\Pi(\theta)))}.
  \end{equation*}
\end{lemma}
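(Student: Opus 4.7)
The plan is to combine the closed-form KL formula for centered Gaussians with the covariance decomposition from Lemma~\ref{lem:y_covariance_decomposition}, and then reduce the resulting expression to an elementary scalar inequality applied to the eigenvalues of the symmetric matrix~$\Delta_\Pi(\theta)$. No probabilistic input is needed; the argument is purely algebraic.

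First I would recall that for any two positive-definite $n \times n$ matrices $\Sigma_0$ and $\Sigma_1$,
$$2\KL{\mathcal{N}(0,\Sigma_1)}{\mathcal{N}(0,\Sigma_0)} = \Tr(\Sigma_0^{-1}\Sigma_1) - n - \log\det(\Sigma_0^{-1}\Sigma_1).$$
Setting $\Sigma_0 = Q_\Pi$ and $\Sigma_1 = Q_\Pi + R_\Pi(\theta)$, I would observe that $\Sigma_0^{-1}\Sigma_1 = I + Q_\Pi^{-1}R_\Pi(\theta)$ is conjugate (via $Q_\Pi^{1/2}$) to the symmetric matrix $I + \Delta_\Pi(\theta) = Q_\Pi^{-1/2}\Sigma_1 Q_\Pi^{-1/2}$. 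Since trace and determinant are similarity-invariant, the formula collapses to
$$2\KL{\bbP_{\theta}(Y|\Pi)}{\bbP_{0}(Y|\Pi)} = \Tr(\Delta_\Pi(\theta)) - \log\det(I + \Delta_\Pi(\theta)).$$
The matrix $R(\theta)$ in Lemma~\ref{lem:y_covariance_decomposition} is symmetric (its diagonal blocks $\theta\Gamma_0(\theta)\theta'$ are symmetric and the off-diagonal blocks are mutual transposes), hence so is $\Delta_\Pi(\theta)$; let $\mu_1, \dots, \mu_n$ denote its eigenvalues. The positive-definiteness of $\Sigma_1$, guaranteed by the $\omega^2 I$ term inside $Q_\Pi$, forces $\mu_i > -1$, and the right-hand side becomes $\sum_i [\mu_i - \log(1+\mu_i)]$.

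The last step is the elementary scalar inequality $x - \log(1+x) \leq \frac{x^2}{1+x}$ for all $x > -1$, which I would verify by studying $f(x) = \frac{x^2}{1+x} - x + \log(1+x)$: direct differentiation yields $f'(x) = \frac{x}{(1+x)^2}$, which has the same sign as $x$, while $f(0) = 0$, so $f \geq 0$ on $(-1,\infty)$. Summing this termwise over the eigenvalues of $\Delta_\Pi(\theta)$ and lower-bounding each denominator by $1 + \lambda_{\min}(\Delta_\Pi(\theta))$ yields
$$\sum_i \bigl[\mu_i - \log(1+\mu_i)\bigr] \leq \sum_i \frac{\mu_i^2}{1+\mu_i} \leq \frac{\sum_i \mu_i^2}{1+\lambda_{\min}(\Delta_\Pi(\theta))} = \frac{\lVert \Delta_\Pi(\theta) \rVert_F^2}{1+\lambda_{\min}(\Delta_\Pi(\theta))},$$
and dividing by $2$ gives the stated bound. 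I do not anticipate a genuine obstacle: the computation is standard Gaussian KL bookkeeping followed by the scalar inequality above, and the only thing to check carefully is that $\lambda_{\min}(\Delta_\Pi(\theta)) > -1$ so that every denominator is strictly positive.
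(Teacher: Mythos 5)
Your proposal is correct and follows essentially the same route as the paper: the paper simply packages your Gaussian-KL computation and the scalar inequality $\log(1+x)\geq \tfrac{x}{1+x}$ into an auxiliary result (Lemma~\ref{lem:kl_around_id}) and invokes it after writing $\Cov_\theta(Y|\Pi)=Q_\Pi^{1/2}(I+\Delta_\Pi(\theta))(Q_\Pi^{1/2})'$. The only cosmetic difference is how the condition $\lambda_{\min}(\Delta_\Pi(\theta))>-1$ is checked — you argue directly that $I+\Delta_\Pi(\theta)=Q_\Pi^{-1/2}\Cov_\theta(Y|\Pi)\,Q_\Pi^{-1/2}$ is a congruence of the positive-definite matrix $\Cov_\theta(Y|\Pi)\succeq\omega^2 I$, while the paper routes the same fact through Ostrowski's theorem (Lemma~\ref{lem:ostrowski}); both are valid.
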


\begin{proof}
  The conditional KL divergence~$\KL{\bbP_{\theta} (Y | \Pi)}{\bbP_0(Y | \Pi)}$ can be bounded using Lemma~\ref{lem:kl_around_id}.
  Indeed, both conditional distributions are Gaussian and have the same expectation, and covariance matrices that are \enquote{close} in the following sense: by Lemma~\ref{lem:y_covariance_decomposition},
  \begin{align*}
    \Cov_{0}(Y | \Pi)
     & = Q_\Pi     = Q_\Pi^{1/2} (Q_\Pi^{1/2})'                                                                                                         \\
    \Cov_{\theta}(Y | \Pi)
     & = Q_\Pi + R_\Pi(\theta) = Q_\Pi^{1/2} \Big(I + \underbrace{Q_\Pi^{-1/2} R_\Pi(\theta) Q_\Pi^{-1/2}}_{\Delta_{\Pi}(\theta)} \Big) (Q_\Pi^{1/2})'.
  \end{align*}
  By Lemma~\ref{lem:ostrowski}, there exists a real number~$r_{\min} \geq s_{\min}\left(Q_\Pi^{1/2}\right)^2 = s_{\min}(Q_\Pi)$ such that
  \begin{equation*}
    \lambda_{\min}(\Cov_\theta(Y |\Pi)) = r_{\min} \lambda_{\min}(I + \Delta_\Pi(\theta)).
  \end{equation*}
  Since~$Q_\Pi \succeq \omega^2 I \succ 0$, its minimum singular value satisfies~$s_{\min}(Q_{\Pi}) > 0$, so that~$r_{\min} > 0$.
  In addition,~$\Cov_\theta(Y |\Pi) \succeq \omega^2 I \succ 0$, so that~$\lambda_{\min}(\Cov_\theta(Y |\Pi)) > 0$.
  Therefore,
  \begin{equation*}
    \lambda_{\min}(I + \Delta_\Pi(\theta)) = \frac{\lambda_{\min}(\Cov_\theta(Y |\Pi))}{r_{\min}} > 0 \quad \text{and} \quad \lambda_{\min}(\Delta_\Pi(\theta)) > -1,
  \end{equation*}
  which means we can apply Lemma~\ref{lem:kl_around_id} with~$\bbP_1 = \bbP_{\theta} (Y | \Pi)$ and~$\bbP_0 = \bbP_0(Y | \Pi)$.
\end{proof}

\subsection{From \texorpdfstring{$\Delta_\Pi(\theta)$}{Delta} to \texorpdfstring{$R_\Pi(\theta)$}{the Projection of R(theta)}}

Lemma~\ref{lem:kl_close_gaussians_applied} strongly suggests studying a certain fraction involving~$\Delta_\Pi(\theta)$.
In the following result, we boil it down to a function of the residual term~$R_\Pi (\theta)$.

\begin{lemma} \label{lem:delta_to_R}
  Assume~$\lVert R(\theta) \rVert_2 \leq (\sigma_{\min}^2 + \omega^2) / 2$. We have the following upper bound:
  \begin{equation*}
    \frac{\lVert \Delta_\Pi(\theta) \rVert_F^2}{2(1 + \lambda_{\min}(\Delta_\Pi(\theta)))}
    \leq \frac{\lVert R_\Pi(\theta) \rVert_F^2}{\left(\sigma_{\min}^2 + \omega^2\right)^2}.
  \end{equation*}
\end{lemma}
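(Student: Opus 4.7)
The plan is to bound the numerator and the denominator on the left-hand side separately in terms of $R_\Pi(\theta)$ (for the Frobenius part) and $R(\theta)$ (for the spectral part), and then use the assumption $\lVert R(\theta) \rVert_2 \le (\sigma_{\min}^2 + \omega^2)/2$ to control the denominator away from zero. The key observation that makes everything work is that, under the convention change just introduced, on the event $E^c$ the matrix $\Pi$ has orthonormal rows, i.e.\ $\Pi \Pi' = I$.

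First I would lower bound the spectrum of the reference covariance $Q_\Pi$. For any vector $v$, writing $u = \Pi' v$ and using $\Pi \Pi' = I$ together with $\bdiag_T \Sigma \succeq \sigma_{\min}^2 I$, one gets
\begin{equation*}
  v' Q_\Pi v = u' (\bdiag_T \Sigma)\, u + \omega^2 \lVert v \rVert_2^2 \ge \sigma_{\min}^2 \lVert u \rVert_2^2 + \omega^2 \lVert v \rVert_2^2 = (\sigma_{\min}^2 + \omega^2) \lVert v \rVert_2^2,
\end{equation*}
so $\lambda_{\min}(Q_\Pi) \ge \sigma_{\min}^2 + \omega^2$, or equivalently $\lVert Q_\Pi^{-1/2} \rVert_2^2 \le 1/(\sigma_{\min}^2 + \omega^2)$.

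Second, I would plug this into the definition $\Delta_\Pi(\theta) = Q_\Pi^{-1/2} R_\Pi(\theta) Q_\Pi^{-1/2}$ using the standard submultiplicativity inequalities $\lVert ABC \rVert_F \le \lVert A \rVert_2 \lVert B \rVert_F \lVert C \rVert_2$ and $\lVert ABC \rVert_2 \le \lVert A \rVert_2 \lVert B \rVert_2 \lVert C \rVert_2$. This yields simultaneously
\begin{equation*}
  \lVert \Delta_\Pi(\theta) \rVert_F^2 \le \frac{\lVert R_\Pi(\theta) \rVert_F^2}{(\sigma_{\min}^2 + \omega^2)^2}
  \qquad \text{and} \qquad
  \lVert \Delta_\Pi(\theta) \rVert_2 \le \frac{\lVert R_\Pi(\theta) \rVert_2}{\sigma_{\min}^2 + \omega^2}.
\end{equation*}
For the spectral bound, I would further use $R_\Pi(\theta) = \Pi R(\theta) \Pi'$ with $\lVert \Pi \rVert_2 = 1$ (again thanks to $\Pi \Pi' = I$) to get $\lVert R_\Pi(\theta) \rVert_2 \le \lVert R(\theta) \rVert_2$. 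Invoking the hypothesis then gives $\lVert \Delta_\Pi(\theta) \rVert_2 \le 1/2$.

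Third, since $\lambda_{\min}(\Delta_\Pi(\theta)) \ge -\lVert \Delta_\Pi(\theta) \rVert_2 \ge -1/2$, we have $2(1 + \lambda_{\min}(\Delta_\Pi(\theta))) \ge 1$, so that
\begin{equation*}
  \frac{\lVert \Delta_\Pi(\theta) \rVert_F^2}{2(1 + \lambda_{\min}(\Delta_\Pi(\theta)))} \le \lVert \Delta_\Pi(\theta) \rVert_F^2 \le \frac{\lVert R_\Pi(\theta) \rVert_F^2}{(\sigma_{\min}^2 + \omega^2)^2},
\end{equation*}
which is the desired conclusion. The only delicate step is the first one: one needs to remember the convention change making $\Pi$ a wide matrix with orthonormal rows on $E^c$, and to invoke it both to get $\lambda_{\min}(Q_\Pi) \ge \sigma_{\min}^2 + \omega^2$ and to control $\lVert R_\Pi(\theta) \rVert_2$ by $\lVert R(\theta) \rVert_2$. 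Everything else is routine operator-norm algebra.
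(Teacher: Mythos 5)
Your proof is correct and follows essentially the same route as the paper: lower-bound the spectrum of $Q_\Pi$ by $\sigma_{\min}^2+\omega^2$ using $\Pi\Pi'=I$ on $E^c$, split $\Delta_\Pi(\theta)$ with the mixed Frobenius/spectral submultiplicativity inequalities, and use the hypothesis $\lVert R(\theta)\rVert_2 \leq (\sigma_{\min}^2+\omega^2)/2$ to keep the denominator $1+\lambda_{\min}(\Delta_\Pi(\theta))$ bounded below by $1/2$. The only cosmetic difference is that you pass through $\lVert R_\Pi(\theta)\rVert_2 \leq \lVert R(\theta)\rVert_2$ via $\lVert \Pi\rVert_2 = 1$ where the paper bounds $\lVert Q_\Pi^{-1/2}\Pi\rVert_2^2$ directly; both give the same $\lVert \Delta_\Pi(\theta)\rVert_2 \leq 1/2$.
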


\begin{proof}
  Since the quantity~$\lambda_{\min}(\Delta_\Pi(\theta))$ in the denominator is hard to control, we will work with the spectral norm instead.
  Indeed, whenever~$\lVert \Delta_\Pi(\theta) \rVert_2 < 1$, we have the crude bound
  \begin{equation*}
    \frac{1}{1-\lambda_{\min}(\Delta_\Pi(\theta))} \leq \frac{1}{1-\lVert \Delta_\Pi(\theta) \rVert_2}.
  \end{equation*}
  Let us start by noticing that, thanks to Lemma~\ref{lem:frobenius_spectral_product},
  \begin{align*}
    \lVert \Delta_\Pi(\theta) \rVert_F^2
     & = \lVert Q_\Pi^{-1/2} R_\Pi(\theta) Q_\Pi^{-1/2} \rVert_F^2 \leq \lVert Q_\Pi^{-1/2} \rVert_2^4 \lVert R_\Pi(\theta) \rVert_F^2 = \lVert Q_\Pi^{-1} \rVert_2^2 \lVert R_\Pi(\theta) \rVert_F^2 \\
    \lVert \Delta_\Pi(\theta) \rVert_2
     & = \lVert Q_\Pi^{-1/2} \Pi R(\theta) \Pi' Q_\Pi^{-1/2} \rVert_2 \leq \lVert Q_\Pi^{-1/2} \Pi \rVert_2^2 \lVert R(\theta) \rVert_2.
  \end{align*}
  We will later see how the spectral and Frobenius norms of the residual~$R(\theta)$ can be controlled as a function of~$\theta$.
  For now, we must work to upper bound~$\lVert Q_\Pi^{-1} \rVert_2$ and~$\lVert Q_\Pi^{-1} \Pi \rVert_2^2$.

  To simplify the following proof, we write~$\Sigmad = \bdiag_T \Sigma$.
  Since~$\Sigmad$ is block-diagonal, its spectrum is the same as the spectrum of~$\Sigma$ repeated~$T$ times, hence~$\lambda_{\min}(\Sigmad) = \sigma_{\min}^2$.
  And since we assumed~$E^c$ happens (non empty projection), we have~$\Pi \Pi' = I$ and~$\Pi' \Pi = \diag(\pi)$, which has at least one entry equal to~$1$.

  \medskip

  We start with~$\lVert Q_\Pi^{-1} \rVert_2$.
  Since~$Q_\Pi \succeq \omega^2 I \succ 0$ is non-singular and symmetric,
  \begin{equation*}
    \lVert Q_\Pi^{-1} \rVert_2 = \lambda_{\max}(Q_{\Pi}^{-1}) = \frac{1}{\lambda_{\min}(Q_\Pi)} = \frac{1}{\lambda_{\min}(\Pi \Sigmad \Pi' + \omega^2 I)}.
  \end{equation*}
  Remembering that~$\Sigmad \succeq \sigma_{\min}^2 I$, we get
  \begin{equation*}
    \Pi \Sigmad \Pi' + \omega^2 I \succeq \sigma_{\min}^2 \Pi \Pi' + \omega^2 I = (\sigma_{\min}^2 + \omega^2) I
  \end{equation*}
  and thus
  \begin{equation*}
    \lVert Q_\Pi^{-1} \rVert_2
    \leq \frac{1}{\sigma_{\min}^2 + \omega^2}.
  \end{equation*}
  We now continue with~$\lVert Q_\Pi^{-1/2} \Pi \rVert_2^2$.
  By definition of the spectral norm,
  \begin{equation*}
    \lVert Q_\Pi^{-1/2} \Pi \rVert_2^2
    = \lambda_{\max} \left( \Pi' Q_\Pi^{-1} \Pi \right) = \lambda_{\max} \left( \Pi' (\Pi \Sigmad \Pi' + \omega^2 I)^{-1} \Pi \right).
  \end{equation*}
  Because matrix inversion is decreasing w.r.t. the Loewner order on positive semi-definite matrices,
  \begin{align*}
    (\Pi \Sigmad \Pi' + \omega^2 I)^{-1}          & \preceq (\sigma_{\min}^2 + \omega^2)^{-1} I^{-1}       \\
    \Pi' (\Pi \Sigmad \Pi' + \omega^2 I)^{-1} \Pi & \preceq \frac{1}{\sigma_{\min}^2 + \omega^2} \Pi' \Pi.
  \end{align*}
  It follows that
  \begin{equation*}
    \lVert Q_\Pi^{-1/2} \Pi \rVert_2^2 \leq \frac{1}{\sigma_{\min}^2 + \omega^2} \lambda_{\max} (\Pi' \Pi) =  \frac{1}{\sigma_{\min}^2 + \omega^2}.
  \end{equation*}
  The conclusion is within reach:
  \begin{align*}
    \frac{\lVert \Delta_\Pi(\theta) \rVert_F^2}{1 + \lambda_{\min}(\Delta_\Pi(\theta))}
     & \leq \frac{\lVert \Delta_\Pi(\theta) \rVert_F^2}{1 - \lVert \Delta_\Pi(\theta) \rVert_2} \leq \frac{\lVert Q_\Pi^{-1} \rVert_2^2 \lVert R_\Pi(\theta) \rVert_F^2}{1 - \lVert Q_\Pi^{-1/2} \Pi \rVert_2^2 \lVert R(\theta) \rVert_2}               \\
     & \leq \frac{\left(\frac{1}{\sigma_{\min}^2 + \omega^2}\right)^2 \lVert R_\Pi(\theta) \rVert_F^2}{1 - \frac{1}{\sigma_{\min}^2 + \omega^2} \lVert R(\theta) \rVert_2} \leq \frac{2 \lVert R_\Pi(\theta) \rVert_F^2}{(\sigma_{\min}^2 + \omega^2)^2}
  \end{align*}
  The last inequality is justified by our assumption~$\lVert R(\theta) \rVert_2 \leq (\sigma_{\min}^2 + \omega^2) / 2$.
  Another consequence of this assumption is that
  \begin{equation*}
    \lVert \Delta_\Pi(\theta) \rVert_2
    \leq \lVert Q_\Pi^{-1/2} \Pi \rVert_2^2 \lVert R(\theta) \rVert_2
    \leq \frac{1}{\sigma_{\min}^2 + \omega^2} \frac{\sigma_{\min}^2 + \omega^2}{2} = \frac{1}{2} < 1
  \end{equation*}
  which is sufficient for the first inequality to hold.
\end{proof}

\subsection{From \texorpdfstring{$R_\Pi(\theta)$}{the Projection of R(theta)} to \texorpdfstring{$R(\theta)$}{R(theta)}}

As the previous Lemma underlines, the last step we need to get rid of the dependency in~$\Pi$ is to study the average norm of~$R_\Pi(\theta)$.

\begin{lemma} \label{lem:submatrix_averaging}
  Let~$q_\ell = \max\{1-b, 2p-(1-b)\}$. Then
  \begin{equation*}
    \bbE_\Pi \left[ \one_{E^c} \lVert R_\Pi(\theta) \rVert_F^2 \right] \leq p \Tr[R(\theta) \odot R(\theta)] + pq_\ell \lVert R(\theta) \rVert_F^2.
  \end{equation*}
\end{lemma}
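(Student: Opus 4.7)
The plan is to expand the squared Frobenius norm of $R_\Pi(\theta)$ as an explicit double sum over sampled pairs of multi-indices, commute the expectation inside the sum, and then apply Lemma~\ref{lem:proj_moments} to control the sampling cross-moments.

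First I would observe that, in the new notation, $R_\Pi(\theta) = \Pi R(\theta) \Pi'$ simply extracts the sub-matrix of $R(\theta)$ whose rows and columns are indexed by the sampled pairs $(t,d)$. Writing multi-indices as $i = (t_i, d_i) \in [T] \times [D]$ and $\pi_i = \pi_{t_i, d_i}$, we get
\[
\lVert R_\Pi(\theta) \rVert_F^2 = \sum_{i, j} \pi_i \pi_j \, R(\theta)_{i, j}^2.
\]
Since this quantity already vanishes on the event $E$ that all~$\pi_{t,d}$ equal zero, the indicator $\one_{E^c}$ is redundant, and $\bbE_\Pi[\one_{E^c} \lVert R_\Pi(\theta) \rVert_F^2] = \sum_{i, j} \bbE[\pi_i \pi_j]\, R(\theta)_{i, j}^2$.

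Next I would split this sum into diagonal and off-diagonal contributions. For $i = j$, we have $\pi_i^2 = \pi_i$ and $\bbE[\pi_i] = p$, which contributes $p \sum_i R(\theta)_{i, i}^2 = p \Tr[R(\theta) \odot R(\theta)]$. For $i \neq j$ with $d_i \neq d_j$, the two underlying Markov chains are independent, so $\bbE[\pi_i \pi_j] = p^2$; for $i \neq j$ with $d_i = d_j$ (and thus $t_i \neq t_j$), Lemma~\ref{lem:proj_moments} yields $\bbE[\pi_i \pi_j] = p^2 + p(1-p)(1-a-b)^h$ where $h = |t_i - t_j| \geq 1$.

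The crux is to check that every off-diagonal moment is bounded by $p q_\ell$. Using $|(1-a-b)^h| \leq |1-a-b|$ for $h \geq 1$, the temporal case is dominated by $p^2 + p(1-p)|1-a-b|$. From the algebra already worked out in the proof of Lemma~\ref{lem:proj_moments}, $p + (1-p)(1-a-b) = 1-b$, hence $p^2 + p(1-p)(1-a-b) = p(1-b)$ and $p^2 - p(1-p)(1-a-b) = p(2p-(1-b))$. Therefore $p q_\ell = p \max\{1-b,\, 2p - (1-b)\} = p^2 + p(1-p)|1-a-b|$, confirming the bound; since $q_\ell \geq p$, the spatially independent case $d_i \neq d_j$ is also absorbed. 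Combining the two contributions gives
\[
\bbE_\Pi[\one_{E^c} \lVert R_\Pi(\theta) \rVert_F^2] \leq p \Tr[R(\theta) \odot R(\theta)] + p q_\ell \sum_{i \neq j} R(\theta)_{i, j}^2 \leq p \Tr[R(\theta) \odot R(\theta)] + p q_\ell \lVert R(\theta) \rVert_F^2.
\]
The only non-routine step is the algebraic identification of $p q_\ell$ with $p^2 + p(1-p)|1-a-b|$, which hinges on the closed-form second moment of the two-state Markov chain already derived in Lemma~\ref{lem:proj_moments}; the remainder is elementary bookkeeping of the double sum.
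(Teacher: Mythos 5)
Your proposal is correct and follows essentially the same route as the paper: expand $\lVert \Pi R(\theta)\Pi' \rVert_F^2$ as $\sum_{i,j}\bbE[\pi_i\pi_j]R(\theta)_{i,j}^2$, separate the diagonal term $p\Tr[R(\theta)\odot R(\theta)]$, bound $(1-a-b)^{|t_1-t_2|}$ by $|1-a-b|$, and identify $p^2+p(1-p)|1-a-b|=pq_\ell$ via the same algebraic identities $p+(1-p)(1-a-b)=1-b$ and $p-(1-p)(1-a-b)=2p-(1-b)$. The only cosmetic difference is that you absorb the $d_i\neq d_j$ case using $p\leq q_\ell$ explicitly, while the paper folds it directly into the common prefactor; both are fine.
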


\begin{proof}
  We first notice that for any matrix~$A$,
  \begin{align*}
    \bbE_\Pi \big[\one_{E^c} \lVert \Pi A \Pi' \rVert_F^2 \big]
     & = \bbE_{\Pi} \left[ \one_{E^c} \Tr \big[\Pi A \Pi' \Pi A' \Pi' \big] \right] \\
     & = \bbE_\Pi \left[ \Tr \big[ \diag(\pi) A \diag(\pi) A' \big] \right]         \\
     & = \sum_{i,j} \bbE_\Pi [\pi_i \pi_j] A_{i,j}^2.
  \end{align*}
  We can apply this to~$R_\Pi(\theta) = \Pi R(\theta) \Pi'$:
  \begin{equation*}
    \bbE_\Pi \big[ \one_{E^c} \lVert R_\Pi(\theta) \rVert_F^2 \big] = \sum_{i,j} \bbE_\Pi [\pi_i \pi_j] R(\theta)_{i,j}^2.
  \end{equation*}
  The rest of the proof consists in plugging in the moments~$\bbE_\Pi [\pi_i \pi_j]$ from Lemma~\ref{lem:proj_moments}:
  \begin{align*}
    \bbE_\Pi \big[ \lVert R_\Pi(\theta) \rVert_F^2 \big]
     & = \sum_{\substack{t_1,t_2,d_1,d_2            \\(t_1,d_1)=(t_2,d_2)}} p R(\theta)_{(t_1,d_1),(t_2,d_2)}^2 \\
     & \phantomeq + \sum_{\substack{t_1,t_2,d_1,d_2 \\d_1\neq d_2}} p^2 R(\theta)_{(t_1,d_1),(t_2,d_2)}^2                               \\
     & \phantomeq + \sum_{\substack{t_1,t_2,d_1,d_2 \\d_1=d_2,t_1 \neq t_2}} (p^2 + p(1-p)(1-a-b)^{|t_1 - t_2|}) R(\theta)_{(t_1,d_1),(t_2,d_2)}^2.
  \end{align*}
  The sum in the last term can be crudely controlled as follows:
  \begin{align*}
    \sum_{\substack{t_1,t_2,d                                                    \\t_1 \neq t_2}} (1-a-b)^{|t_1 - t_2|} R(\theta)_{(t_1,d),(t_2,d)}^2
     & \leq |1-a-b| \sum_{\substack{t_1,t_2                                      \\t_1 \neq t_2}} \sum_d (R(\theta)_{[t_1, t_2]})_{d,d}^2     \\
     & \leq |1-a-b| \sum_{t_1 \neq t_2} \lVert R(\theta)_{[t_1, t_2]} \rVert_F^2 \\
     & \leq |1-a-b| \cdot \lVert R(\theta) \rVert_F^2
  \end{align*}
  This yields a short, but probably suboptimal bound:
  \begin{equation*}
    \bbE_\Pi \big[ \one_{E^c} \lVert R_\Pi(\theta) \rVert_F^2 \big] \leq p \Tr[R(\theta) \odot R(\theta)] + (p^2 + p(1-p) |1-a-b|) \lVert R(\theta) \rVert_F^2.
  \end{equation*}
  In the previous part, we already saw that
  \begin{equation*}
    p + (1-p)(1 - a -  b) = 1-b.
  \end{equation*}
  Similarly, we obtain
  \begin{align*}
    p + (1-p)(a+b-1) & = \frac{a}{a+b} + \frac{b}{a+b}(a+b-1)                        \\
                     & = \frac{a + ba + b^2 - b}{a+b}  = \frac{a(1+b) - b(1-b)}{a+b} \\
                     & = p(1+b) - (1-p)(1-b)  = 2p - (1-b).
  \end{align*}
  As a consequence,
  \begin{equation*}
    p + (1-p)|1-a-b| = \max\{1-b, 2p-(1-b)\} = q_\ell,
  \end{equation*}
  which yields the expected result.
\end{proof}

\subsection{Bounding \texorpdfstring{$R(\theta)$}{R(theta)}}

Lemma~\ref{lem:submatrix_averaging} relates the bounds involving~$R_\Pi(\theta)$ to features of the full residual~$R(\theta)$, which we now study.

\begin{lemma} \label{lem:full_R_control}
  The residual~$R(\theta)$ satisfies the following inequalities:
  \begin{align*}
    \lVert R(\theta) \rVert_2       & \leq \frac{2 \sigma_{\max}^2}{(1 - \thetamax)^2} \lVert \theta \rVert_2                             \\
    \lVert R(\theta) \rVert_F^2     & \leq \frac{2T \sigma_{\max}^4}{(1 - \thetamax)^3} \lVert \theta \rVert_F^2                          \\
    \Tr [R(\theta) \odot R(\theta)] & \leq \frac{T \sigma_{\max}^4}{(1 - \thetamax)^2} \lVert \theta \rVert_2^2 \lVert \theta \rVert_F^2.
  \end{align*}
\end{lemma}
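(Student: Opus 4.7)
The plan is to exploit the explicit block structure of $R(\theta)$: its $(i,i)$ block is $\theta \Gamma_0(\theta) \theta'$ (by the identity $\Gamma_0(\theta) - \Sigma = \theta \Gamma_0(\theta) \theta'$ derived in the proof of Lemma~\ref{lem:x_covariance}), while its $(i,j)$ blocks for $i\neq j$ coincide with those of $\Cov_\theta[X]$, namely $\theta^{j-i}\Gamma_0(\theta)$ or its transpose. Every bound will then reduce to controlling $\lVert \theta^k\rVert_2$, $\lVert\theta\rVert_F$, and $\lVert\Gamma_0(\theta)\rVert_2$, the last of which is handled by Lemma~\ref{lem:norm_gamma}. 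The key trick for all three inequalities is to factor one copy of $\theta$ (in spectral or Frobenius norm) out of each block before summing, so that the final estimate depends linearly on $\lVert\theta\rVert_2$ or $\lVert\theta\rVert_F^2$.

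For the spectral bound, I would reuse the Kronecker decomposition from the proof of Lemma~\ref{lem:psi_norms}, writing
\begin{equation*}
  R(\theta) = I \otimes (\theta\Gamma_0(\theta)\theta') + \sum_{t=1}^{T-1}\bigl[J_t\otimes\theta^t\Gamma_0(\theta) + J_t'\otimes\Gamma_0(\theta)\theta'^t\bigr].
\end{equation*}
Applying Lemma~\ref{lem:kron_singular_values} with $\lVert J_t\rVert_2 = 1$ and submultiplicativity yields
$\lVert R(\theta)\rVert_2 \leq \lVert\Gamma_0(\theta)\rVert_2\bigl(\lVert\theta\rVert_2^2 + 2\sum_{t=1}^{T-1}\lVert\theta\rVert_2^t\bigr)$. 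Bounding the geometric series by $2\lVert\theta\rVert_2/(1-\thetamax)$ and using $\lVert\theta\rVert_2^2 \leq \thetamax\lVert\theta\rVert_2$, the bracket factors as $\lVert\theta\rVert_2(2-\thetamax)/(1-\thetamax) \cdot 1/(1+\thetamax)$ after combining with $\lVert\Gamma_0(\theta)\rVert_2 \leq \sigma_{\max}^2/((1-\thetamax)(1+\thetamax))$. The identity $(2+\thetamax-\thetamax^2)/(1+\thetamax) = 2-\thetamax \leq 2$ then yields the stated constant.

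For the Frobenius bound, I would compute $\lVert R(\theta)\rVert_F^2$ block by block: there are $T$ diagonal blocks $\theta\Gamma_0(\theta)\theta'$ and, for each $h\in\{1,\dots,T-1\}$, exactly $2(T-h)$ off-diagonal blocks of the form $\theta^h\Gamma_0(\theta)$ (or its transpose). Using $\lVert AB\rVert_F\leq \lVert A\rVert_F\lVert B\rVert_2$ to peel off one $\theta$ and pass the remaining $\theta^{h-1}\Gamma_0(\theta)$ through spectral bounds gives $\lVert \theta^h\Gamma_0(\theta)\rVert_F^2 \leq \lVert\theta\rVert_F^2\,\lVert\theta\rVert_2^{2(h-1)}\lVert\Gamma_0(\theta)\rVert_2^2$, after which summing the geometric series in $\lVert\theta\rVert_2^2$ and simplifying with $\lVert\Gamma_0(\theta)\rVert_2^2 \leq \sigma_{\max}^4/(1-\thetamax^2)^2$ produces the announced $\lVert\theta\rVert_F^2 \cdot T\sigma_{\max}^4/(1-\thetamax)^3$ up to a constant (the factor $2$ requires slightly careful bookkeeping of the $\lVert\theta\rVert_2^2$ term against $1/(1-\thetamax^2)$).

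The Hadamard-trace bound is the easiest: only diagonal entries contribute, and these live solely inside the diagonal blocks $\theta\Gamma_0(\theta)\theta'$, so $\Tr[R(\theta)\odot R(\theta)] = T \sum_d (\theta\Gamma_0(\theta)\theta')_{d,d}^2 \leq T\lVert\theta\Gamma_0(\theta)\theta'\rVert_F^2$. I then bound this by $T\lVert\theta\rVert_2^2 \lVert\Gamma_0(\theta)\theta'\rVert_F^2 \leq T\lVert\theta\rVert_2^2\lVert\Gamma_0(\theta)\rVert_2^2 \lVert\theta\rVert_F^2$ and substitute Lemma~\ref{lem:norm_gamma}. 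The main obstacle throughout is not the strategy but the constant tracking: the desired prefactors $2$ and $(1-\thetamax)^{-2}$ vs.~$(1-\thetamax)^{-3}$ force one to avoid slack bounds like $1-\thetamax^2 \geq 1-\thetamax$ at the wrong step, and to always factor one copy of $\theta$ (never two) from each block so the dependency on $\lVert\theta\rVert_2$ or $\lVert\theta\rVert_F^2$ stays linear.
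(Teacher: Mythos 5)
Your proposal is correct and follows essentially the same route as the paper: the same block formula for $R(\theta)$, the same Kronecker decomposition with Lemma~\ref{lem:kron_singular_values} for the spectral norm, the same use of Lemmas~\ref{lem:frobenius_spectral_product} and~\ref{lem:norm_gamma} to peel one copy of $\theta$ per block before summing the geometric series, and the same diagonal-block reduction for $\Tr[R(\theta)\odot R(\theta)]$. The constant bookkeeping you flag does go through exactly as you describe, matching the paper's factors.
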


\begin{proof}
  We start by giving a formula for the blocks of~$R(\theta)$: by Lemma~\ref{lem:y_covariance_decomposition},
  \begin{equation*}
    R(\theta)_{[t, s]} = \begin{cases}
      \theta^{t-s} \Gamma_0(\theta)   & \text{if~$s \in [1, t-1]$}  \\
      \theta \Gamma_0(\theta) \theta' & \text{if~$s = t$}           \\
      \Gamma_0(\theta) \theta'^{t-s}  & \text{if~$s \in [t+1, T]$}.
    \end{cases}
  \end{equation*}
  These individual blocks can be bounded using Lemmas~\ref{lem:frobenius_spectral_product} and~\ref{lem:norm_gamma}: if~$r \geq 1$, then
  \begin{align*}
    \lVert \theta^r \Gamma_0(\theta) \rVert_F^2
     & \leq \lVert \Gamma_0(\theta) \rVert_2^2 \lVert \theta^r \rVert_F^2 \leq \lVert \Gamma_0(\theta) \rVert_2^2 \lVert \theta \rVert_F^2 \lVert \theta^{r-1} \rVert_2^2 \leq \frac{\sigma_{\max}^4}{(1 - \thetamax)^2} \lVert \theta \rVert_F^2 \lVert \theta \rVert_2^{2(r-1)} \\
    \lVert \Gamma_0(\theta) \theta'^r  \rVert_F^2
     & \leq \lVert \Gamma_0(\theta) \rVert_2^2 \lVert \theta^r \rVert_F^2 \leq \lVert \Gamma_0(\theta) \rVert_2^2 \lVert \theta \rVert_F^2 \lVert \theta^{r-1} \rVert_2^2 \leq \frac{\sigma_{\max}^4}{(1 - \thetamax)^2} \lVert \theta \rVert_F^2 \lVert \theta \rVert_2^{2(r-1)} \\
    \lVert \theta \Gamma_0(\theta) \theta' \rVert_F^2
     & \leq \lVert \theta \rVert_2^2 \lVert \Gamma_0(\theta) \rVert_2^2 \lVert \theta \rVert_F^2 \leq \frac{\sigma_{\max}^4}{(1 - \thetamax)^2} \lVert \theta \rVert_F^2 \lVert \theta \rVert_2^2.
  \end{align*}
  Since we control the norm of each block of~$R(\theta)$, we control the norm of the whole matrix:
  \begin{align*}
    \lVert R(\theta) \rVert_F^2
     & = \sum_{t=1}^{T} \left( \sum_{s = 1}^{t-1}{\lVert \theta^{t-s} \Gamma_0(\theta) \rVert_F^2} + \lVert \theta \Gamma_0(\theta) \theta \rVert_F^2 + \sum_{s = t+1}^{T}{\lVert  \Gamma_0(\theta)\theta^{s-t} \rVert_F^2} \right)                                 \\
     & \leq \frac{\sigma_{\max}^4 \lVert \theta \rVert_F^2}{(1-\thetamax^2)^2} \sum_{t=1}^{T} \left( \sum_{s = 1}^{t-1}{\lVert \theta \rVert_2^{2(t-s-1)}} + \lVert \theta \rVert_2^{2} + \sum_{s = t+1}^{T}{\lVert \theta \rVert_2^{2(s-t-1)}} \right)             \\
     & \leq \frac{\sigma_{\max}^4 \lVert \theta \rVert_F^2}{(1-\thetamax^2)^2} \sum_{t=1}^{T} \left( \sum_{s = -\infty}^{t-1}{\lVert \theta \rVert_2^{2(t-1-s)}} + \lVert \theta \rVert_2^{2} + \sum_{s = t+1}^{+\infty}{\lVert \theta \rVert_2^{2(s-1-t)}} \right) \\
     & = \frac{\sigma_{\max}^4 \lVert \theta \rVert_F^2}{(1-\thetamax^2)^2} T \left( \frac{1}{1 - \lVert \theta \rVert_2^2} + \lVert \theta \rVert_2^2 + \frac{1}{1 - \lVert \theta \rVert_2^2} \right)
  \end{align*}
  We now remember our hypothesis~$\lVert \theta \rVert_2 \leq \thetamax < 1$:
  \begin{align*}
    \lVert R(\theta) \rVert_F^2
     & \leq \frac{\sigma_{\max}^4 \lVert \theta \rVert_F^2}{(1-\thetamax^2)^2} T \left( \frac{1}{1 - \thetamax^2} + \thetamax^2 + \frac{1}{1 - \thetamax^2} \right)                                                                                      \\
     & = \frac{\sigma_{\max}^4 \lVert \theta \rVert_F^2}{(1-\thetamax^2)^2} T \left( \frac{2 + \thetamax^2(1 - \thetamax^2)}{1 - \thetamax^2}  \right)                                                                                                   \\
     & \leq \frac{\sigma_{\max}^4 \lVert \theta \rVert_F^2}{(1- \thetamax^2)^2} T \left( \frac{2 + 2\thetamax}{1 - \thetamax^2}  \right) = \frac{\sigma_{\max}^4 \lVert \theta \rVert_F^2}{(1- \thetamax^2)^2} T \left( \frac{2}{1 - \thetamax}  \right) \\
     & = 2T \frac{\sigma_{\max}^4 \lVert \theta \rVert_F^2}{(1 - \thetamax)^3}.
  \end{align*}
  Now that we have a handle on the Frobenius norm of~$R(\theta)$, we move on to its spectral norm.
  Notice that~$R(\theta)$ can be written as a sum of Kronecker products with the subdiagonal matrices~$\Jmat_t$:
  \begin{equation*}
    R(\theta) = I \otimes \theta \Gamma_0(\theta) \theta' + \sum_{t=1}^{T-1} \left[ \Jmat_t \otimes \theta^t \Gamma_0(\theta) + \Jmat_t' \otimes \Gamma_0(\theta) \theta'^t \right].
  \end{equation*}
  We can use Lemma~\ref{lem:kron_singular_values} and write:
  \begin{align*}
    \lVert R(\theta) \rVert_2
     & \leq  \lVert I \rVert_2 \times \lVert \theta \Gamma_0(\theta) \theta' \rVert_2 + \sum_{t=1}^{T-1} \left[ \lVert \Jmat_t \rVert_2 \times \lVert \theta^t \Gamma_0(\theta) \rVert_2 + \lVert \Jmat_t' \rVert_2 \times \lVert \Gamma_0(\theta) \theta'^t \rVert_2 \right]    \\
     & \leq \lVert \Gamma_0(\theta) \rVert_2 \left( \lVert \theta \rVert_2^2 + 2\sum_{t=1}^{T-1} \lVert \theta \rVert_2^t \right) \leq \frac{\sigma_{\max}^2}{1-\thetamax^2} \left(\lVert \theta \rVert_2^2 + 2\frac{\lVert \theta \rVert_2}{1 - \lVert \theta \rVert_2} \right) \\
     & \leq \frac{ \sigma_{\max}^2 \lVert \theta \rVert_2}{1 - \thetamax^2} \left(\thetamax + \frac{2}{1-\thetamax} \right) \leq \frac{ \sigma_{\max}^2 \lVert \theta \rVert_2}{1 - \thetamax} \left(\frac{2 + 2\thetamax}{1-\thetamax^2} \right)                                \\
     & = 2 \frac{\sigma_{\max}^2 \lVert \theta \rVert_2}{(1 - \thetamax)^2}.
  \end{align*}
  We finish with the trace of the Hadamard product~$R(\theta) \odot R(\theta)$.
  \begin{align*}
    \Tr[R(\theta) \odot R(\theta)]
     & = T \Tr[(\theta \Gamma_0(\theta) \theta') \odot (\theta \Gamma_0(\theta) \theta')]                                                                           \\
     & \leq T \lVert \theta \Gamma_0(\theta) \theta' \rVert_F^2 \leq T \sigma_{\max}^4 \frac{\lVert \theta \rVert_2^2 \lVert \theta \rVert_F^2}{(1 - \thetamax)^2}.
  \end{align*}
\end{proof}

\subsection{Upper Bound on the KL Divergence}

We now have all the tools in hand to extract a KL divergence bound.

\begin{lemma}[Final KL bound]\label{lem:kl_bound_avg}
  Assume~$\theta \in \Theta_s$ satisfies
  \begin{equation*}
    \lVert \theta \rVert_2 \leq \frac{(1 - \thetamax)^2 (\sigma_{\min}^2 + \omega^2)}{4 \sigma_{\max}^2}
  \end{equation*}
  then the expected conditional KL divergence is upper-bounded as follows:
  \begin{equation*}
    \bbE_\Pi \left[ \KL{\bbP_{\theta}(Y | \Pi)}{\bbP_{0}(Y | \Pi)} \right] \leq \KLmax(\lVert \theta \rVert_2, \lVert \theta \rVert_F)
  \end{equation*}
  where we defined
  \begin{equation*}
    \gamma_\ell = (1-\thetamax)^{3/2} \frac{\sigma_{\min}^2 + \omega^2}{\sigma_{\max}^2} \qquad \text{and} \qquad \KLmax(\lVert \theta \rVert_2, \lVert \theta \rVert_F) = \frac{2Tp(\lVert \theta \rVert_2^2 + q_\ell)\lVert \theta \rVert_F^2}{\gamma_\ell}.
  \end{equation*}
\end{lemma}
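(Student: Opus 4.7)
The plan is to cascade the four preceding technical lemmas, peeling away one layer of abstraction at a time: the conditional KL divergence is first bounded by a function of $\Delta_\Pi(\theta)$, then by a function of $R_\Pi(\theta)$, then (after taking expectations in $\Pi$) by intrinsic quantities of $R(\theta)$, and finally by norms of $\theta$ itself. Each inequality in this chain has already been proved; the task is essentially to compose them and verify one compatibility condition.

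Before composing anything, I would check that the hypothesis on $\lVert \theta \rVert_2$ is exactly what is needed to apply Lemma~\ref{lem:delta_to_R}, namely $\lVert R(\theta) \rVert_2 \leq (\sigma_{\min}^2 + \omega^2)/2$. Combining the spectral bound from Lemma~\ref{lem:full_R_control}, $\lVert R(\theta) \rVert_2 \leq \tfrac{2\sigma_{\max}^2}{(1-\thetamax)^2}\lVert \theta \rVert_2$, with the standing assumption $\lVert \theta \rVert_2 \leq \tfrac{(1-\thetamax)^2(\sigma_{\min}^2 + \omega^2)}{4\sigma_{\max}^2}$ produces exactly that threshold, so the precondition holds and Lemma~\ref{lem:delta_to_R} is applicable.

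Chaining Lemmas~\ref{lem:kl_close_gaussians_applied} and~\ref{lem:delta_to_R} then yields, pointwise in $\Pi$, the estimate $\one_{E^c}\KL{\bbP_{\theta}(Y|\Pi)}{\bbP_{0}(Y|\Pi)} \leq \one_{E^c}\lVert R_\Pi(\theta)\rVert_F^2 / (\sigma_{\min}^2 + \omega^2)^2$. Taking expectation over $\Pi$, invoking Equation~\eqref{eq:effect_E} to reduce the left-hand side to $\bbE_\Pi[\KL{\bbP_{\theta}(Y|\Pi)}{\bbP_{0}(Y|\Pi)}]$, and applying Lemma~\ref{lem:submatrix_averaging} to the expectation on the right gives the upper bound $\big(p\Tr[R(\theta)\odot R(\theta)] + pq_\ell\lVert R(\theta)\rVert_F^2\big) / (\sigma_{\min}^2 + \omega^2)^2$.

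The final step is to substitute the two structural inequalities from Lemma~\ref{lem:full_R_control} for $\Tr[R(\theta)\odot R(\theta)]$ and $\lVert R(\theta)\rVert_F^2$, and then collapse everything onto a common denominator by using the crude comparison $(1-\thetamax)^{-2} \leq (1-\thetamax)^{-3}$, so that the two terms merge into $2Tp(\lVert\theta\rVert_2^2 + q_\ell)\lVert\theta\rVert_F^2$ multiplied by a residual prefactor built from $\sigma_{\max}^4$, $(1-\thetamax)^{-3}$, and $(\sigma_{\min}^2+\omega^2)^{-2}$. Recognizing this prefactor through the definition of $\gamma_\ell = (1-\thetamax)^{3/2}(\sigma_{\min}^2+\omega^2)/\sigma_{\max}^2$ produces the claimed $\KLmax$ expression. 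I expect the main obstacle to be purely bookkeeping: the $\sigma_{\max}$, $(\sigma_{\min}^2+\omega^2)$ and $(1-\thetamax)$ exponents must be tracked carefully across three successive substitutions, and the step where two different powers of $(1-\thetamax)$ are reconciled is the one most prone to dropping a constant.
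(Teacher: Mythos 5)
Your proposal is correct and follows exactly the paper's own route: verify the precondition $\lVert R(\theta)\rVert_2 \leq (\sigma_{\min}^2+\omega^2)/2$ via Lemma~\ref{lem:full_R_control} and the hypothesis on $\lVert\theta\rVert_2$, chain Lemmas~\ref{lem:kl_close_gaussians_applied} and~\ref{lem:delta_to_R}, average over $\Pi$ with Equation~\eqref{eq:effect_E} and Lemma~\ref{lem:submatrix_averaging}, then substitute Lemma~\ref{lem:full_R_control} and absorb $(1-\thetamax)^{-2}$ into $(1-\thetamax)^{-3}$. Note only that the prefactor obtained this way is $\sigma_{\max}^4(1-\thetamax)^{-3}(\sigma_{\min}^2+\omega^2)^{-2} = 1/\gamma_\ell^2$, so identifying it with $1/\gamma_\ell$ in the stated $\KLmax$ is a bookkeeping quirk you share with the paper's own final line.
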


\begin{proof}
  Let us start with Lemma~\ref{lem:kl_close_gaussians_applied} on the conditional KL divergence between~$\bbP_\theta(Y | \Pi)$ and~$\bbP_0 (Y | \Pi)$: for any non-empty~$\Pi$,
  \begin{equation*}
    \KL{\bbP_{\theta}(Y | \Pi)}{\bbP_{0}(Y | \Pi)}
    \leq \frac{\lVert \Delta_\Pi(\theta) \rVert_F^2}{2(1 + \lambda_{\min}(\Delta_\Pi(\theta)))}
  \end{equation*}
  We continue with Lemma~\ref{lem:delta_to_R} linking~$\Delta_\Pi(\theta)$ to~$R_\Pi(\theta)$.
  As long as~$\lVert R(\theta) \rVert_2 \leq (\sigma_{\min}^2 + \omega^2) / 2$ (we will see to that at the end), we have
  \begin{equation*}
    \KL{\bbP_{\theta}(Y | \Pi)}{\bbP_{0}(Y | \Pi)}
    \leq \frac{\lVert R_\Pi(\theta) \rVert_F^2}{(\sigma_{\min}^2 + \omega^2)^2}.
  \end{equation*}
  Taking the expectation on the event~$E^c$ yields:
  \begin{equation*}
    \bbE_\Pi \left[\one_{E^c} \KL{\bbP_{\theta}(Y | \Pi)}{\bbP_{0}(Y | \Pi)}\right]
    \leq \frac{\bbE_\Pi \left[ \one_{E^c} \lVert R_\Pi(\theta) \rVert_F^2 \right]}{(\sigma_{\min}^2 + \omega^2)^2}
  \end{equation*}
  We can now apply Lemma~\ref{lem:submatrix_averaging}:
  \begin{equation*}
    \bbE_\Pi \left[\one_{E^c} \KL{\bbP_{\theta}(Y | \Pi)}{\bbP_{0}(Y | \Pi)}\right]
    \leq \frac{p \Tr[R(\theta) \odot R(\theta)] + pq_\ell \lVert R(\theta) \rVert_F^2}{(\sigma_{\min}^2 + \omega^2)^2}
  \end{equation*}
  We substitute the residual bounds from Lemma~\ref{lem:full_R_control}:
  \begin{align*}
    \bbE_\Pi \left[ \one_{E^c} \KL{\bbP_{\theta}(Y | \Pi)}{\bbP_{0}(Y | \Pi)} \right]
     & \leq \frac{p \times \frac{T \sigma_{\max}^4}{(1 - \thetamax)^2} \lVert \theta \rVert_2^2 \lVert \theta \rVert_F^2 + pq_\ell \times \frac{2T \sigma_{\max}^4}{(1 - \thetamax)^3} \lVert \theta \rVert_F^2}{(\sigma_{\min}^2 + \omega^2)^2} \\
     & \leq \left(\frac{\sigma_{\max}^2}{\sigma_{\min}^2 + \omega^2}\right)^2 \frac{2Tp(\lVert \theta \rVert_2^2 + q_\ell)\lVert \theta \rVert_F^2}{(1-\thetamax)^3}                                                                             \\
     & = \frac{2Tp(\lVert \theta \rVert_2^2 + q_\ell)\lVert \theta \rVert_F^2}{\gamma_\ell}.
  \end{align*}
  By Equation~\eqref{eq:effect_E}, this is equivalent to bounding the expected KL divergence regardless of the event~$E$, hence the result.
  Note that our assumption on~$\theta$, combined with Lemma~\ref{lem:full_R_control}, implies
  \begin{equation*}
    \lVert R(\theta) \rVert_2 \leq \frac{2 \sigma_{\max}^2}{(1 - \thetamax)^2} \lVert \theta \rVert_2 \leq \frac{2 \sigma_{\max}^2}{(1 - \thetamax)^2} \frac{(1 - \thetamax)^2 (\sigma_{\min}^2 + \omega^2)}{4 \sigma_{\max}^2} \leq \frac{\sigma_{\min}^2 + \omega^2}{2}
  \end{equation*}
\end{proof}

\subsection{Application of Fano's Method} \label{sec:fano_application}

Given the KL bound we just obtained, we are finally able to prove Theorem~\ref{thm:lower_bound_sparse}.

\begin{proof}
  Fano's method requires finding~$M+1$ parameters~$\theta_i$ such that~$\theta_0 = 0$ and~$\lVert \theta_i - \theta_j \rVert_F \geq 2\tau$ for~$i \neq j$ (with~$\tau$ to be specified), while keeping control upon the average KL divergence between the probability distributions~$\bbP_{\theta_i}$ and~$\bbP_0$.
  Judging by Lemma~\ref{lem:kl_bound_avg}, one way to achieve this control on the KL divergence is to bound the~$\lVert \theta_i \rVert_F$ uniformly in~$i$ (in other words, to choose them all inside a ball of fixed radius).
  We will then have to see how many~$2\tau$-separated matrices we can fit in such a ball.

  Let us consider the set~$\mathcal{H}(r)$ of all block-diagonal~$D \times D$ matrices with coefficients in~$\{0, r\}$ such that each block has size~$s \times s$ (we assume~$s$ divides~$D$).
  In particular, these matrices are all row- and column-sparse, with no more than~$s$ non-zero coefficients per row or column.
  In terms of dimensionality, we are dealing with the (scaled) matrix equivalent of a~$Ds$-dimensional hypercube, hence the notation~$\mathcal{H}(r)$.
  It has cardinality~$2^{Ds}$ and for every~$\theta \in \mathcal{H}$, we have the following norm bounds:
  \begin{equation*}
    \lVert \theta \rVert_2 \leq rs \qquad \text{and} \qquad \lVert \theta \rVert_F \leq r \sqrt{Ds}.
  \end{equation*}
  The spectral norm bound on~$\theta$ is obtained as the maximum spectral norm of each block, which we in turn control using the Frobenius norm of each block.

  \medskip

  Unfortunately, in this hypercube, not all pairs of vertices are well-separated.
  That is why we need the Gilbert-Varshamov bound of Lemma~\ref{lem:gilbert}: according to this result, there exists a pruned subset~$\mathcal{K}(r) \subset \mathcal{H}(r)$ containing~$0$ and such that
  \begin{equation*}
    |\mathcal{K}(r)| \geq |\mathcal{H}(r)|^{1/8} = 2^{Ds/8} \qquad \text{and} \qquad \lVert \vecm(\theta_i) - \vecm(\theta_j) \lVert_1 \geq \frac{r Ds}{8}
  \end{equation*}
  for all pairs of distinct vertices~$\theta_i$ and~$\theta_j$ in~$\mathcal{K}(r)$.
  We choose our set of parameters~$\theta_0, \theta_1, ..., \theta_{M}$ to be exactly this pruned subset~$\mathcal{K}(r)$, in particular~$M+1 = |\mathcal{K}(r)|$.

  \medskip

  The missing ingredient is an upper bound on the maximum average KL divergence between~$\bbP_{\theta_i}$ and~$\bbP_{0}$: we can obtain it using Lemma~\ref{lem:kl_bound_avg}.
  We only need to assume
  \begin{equation*}
    \lVert \theta_i \rVert_F \leq r \sqrt{Ds} \leq \min \left\{ \thetamax,  \frac{(1 - \thetamax)^2 (\sigma_{\min}^2 + \omega^2)}{4 \sigma_{\max}^2} \right\}
  \end{equation*}
  to get the upper bound
  \begin{align*}
    \max_i \bbE_\Pi \left[\KL{\bbP_{\theta_i}(Y|\Pi)}{\bbP_{\theta_0}(Y|\Pi)}\right]
     & \leq \max_i \KLmax(\lVert \theta_i \rVert_2, \lVert \theta_i \rVert_F) \\
     & \leq \KLmax(rs, r\sqrt{Ds}).
  \end{align*}
  Since we must satisfy the constraint from Equation~\eqref{eq:kl_constraint_fano} in Fano's method, we will choose~$r$ so that:
  \begin{equation*}
    \KLmax(rs, r\sqrt{Ds}) \leq \alpha \log(M) = \alpha \log\left(2^{Ds/8}-1\right)
  \end{equation*}
  with~$\alpha = \frac{\log 3 - \log 2}{2 \log 2}$.
  We want to solve the previous inequality for~$r$, and for that we start by replacing~$\KLmax(rs, r\sqrt{Ds})$ with its value from Lemma~\ref{lem:kl_bound_avg}, replacing~$\gamma_\ell$ with~$\gamma_\ell$ to lighten notations:
  \begin{align*}
    \KLmax(rs, r\sqrt{Ds}) \leq \alpha \log\left(2^{Ds/8}-1\right)
     & \iff \frac{2}{\gamma_\ell} Tp\left((rs)^2 + q_\ell\right) (r\sqrt{Ds})^2 \leq c Ds \\
     & \iff D s^3 r^4 + q_\ell Ds r^2 - c\frac{\gamma_\ell^2 Ds}{Tq_\ell} \leq 0.
  \end{align*}
  If we consider this as a degree two polynomial in the variable~$r^2$, its determinant is
  \begin{equation*}
    \Delta = q_\ell^2 D^2 s^2 + 4 D s^3 c\frac{\gamma_\ell^2 Ds}{Tp}.
  \end{equation*}
  For~$\beta$ to be small enough,~$r^2$ must remain below the only positive root of the polynomial, namely
  \begin{equation*}
    r^2 \leq \frac{- q_\ell Ds + \sqrt{q_\ell^2 D^2 s^2 + c\frac{\gamma_\ell^2 D^2 s^4}{Tp}}}{2D s^3} = \frac{q_\ell}{2s^2} \left(\sqrt{1 + c \frac{\gamma_\ell^2 s^2}{Tpq_\ell^2}} - 1 \right).
  \end{equation*}
  If we assume the quantity~$c \frac{\gamma_\ell^2 s^2}{Tpq_\ell^2}$ inside the square root is smaller than~$1$, i.e.
  \begin{equation} \label{eq:Tlarge_minimax_1}
    \frac{\gamma_\ell s}{\sqrt{p}q_\ell\sqrt{T}} \leq c,
  \end{equation}
  then we can lower-bound~$\sqrt{1+x}$ by its chord~$(\sqrt{2}-1)x$.
  In other words, a sufficient condition for~$r^2$ to remain small enough is given by
  \begin{equation*}
    r^2 \leq \frac{q_\ell}{2s^2} \times (\sqrt{2}-1) c \frac{\gamma_\ell^2 s^2}{Tpq_\ell^2} = c \frac{\gamma_\ell^2}{Tpq_\ell}.
  \end{equation*}
  To sum up, we have three constraints on~$r$:
  \begin{align*}
    r s \leq \thetamax                                                                                                           &  &
    r s \leq \frac{(1 - \thetamax)^2 (\sigma_{\min}^2 + \omega^2)}{4 \sigma_{\max}^2} = \frac{\sqrt{1-\thetamax}}{4} \gamma_\ell &  &
    r   \leq \sqrt{c \frac{\gamma_\ell^2}{Tpq_\ell}}.
  \end{align*}
  We can therefore choose~$r$ as the largest value satisfying all three of them:
  \begin{equation} \label{eq:r_val}
    r = \frac{1}{s} \min \left\{ \thetamax, \frac{\gamma_\ell \sqrt{1-\thetamax}}{4}, c \frac{\gamma_\ell s}{\sqrt{Tpq_\ell}} \right\}
  \end{equation}
  To reach our conclusion, we simply need to remark that the vectorized~$\ell_1$ distance between any two matrices in~$\mathcal{K}(r)$ gives us a lower bound on the operator~$\ell_{\infty}$ distance that separates them:
  \begin{align*}
    \lVert \theta_i - \theta_j \rVert_{\infty}
     & = \max_{k \in [D]} \sum_{l \in [D]} |(\theta_i - \theta_j)|_{k,l} \geq \frac{1}{D} \sum_{1 \leq k,l \leq D} |(\theta_i - \theta_j)|_{k,l} \\
     & = \frac{1}{D} \lVert \vecm(\theta_i) - \vecm(\theta_j) \rVert_1 \geq \frac{rDs}{8D} = \frac{rs}{8}
  \end{align*}
  Subsequently, our parameters~$\theta_i$ are~$2\tau$-separated (in~$\ell_{\infty}$ operator distance) with~$\tau = rs / 8$.
  As soon as the minimum in Equation~\eqref{eq:r_val} is reached by the third value, i.e. whenever
  \begin{equation} \label{eq:Tlarge_minimax_2}
    \frac{ \gamma_\ell s}{\sqrt{T p q_\ell}} \leq c \min\{\thetamax, \gamma_\ell \sqrt{1-\thetamax}\}
  \end{equation}
  we can simplify the expression of~$\tau$:
  \begin{equation*}
    \tau = c \frac{\gamma_\ell s}{\sqrt{T p q_\ell}}.
  \end{equation*}
  In this case, by Lemma~\ref{lem:fano_method}, we can conclude:
  \begin{equation*}
    \inf_{\widehat{\theta}} \sup_{\theta \in \Theta_s} \bbP_{\theta} \left[ \lVert \widehat{\theta} - \theta \rVert_{\infty} \geq c \frac{\gamma_\ell s}{\sqrt{T p q_\ell}} \right] \geq \frac{\log(M+1) - \log 2}{\log M} - \alpha \geq \frac{1}{2}.
  \end{equation*}

\end{proof}

\section{Useful Lemmas} \label{sec:lemmas}

\subsection{Linear Algebra}

The following set of results will sometimes be used in matrix calculations without explicit justifications.

\begin{lemma}[Weyl's inequality] \label{lem:weyl}
  Let~$A$ and~$B$ be two~$n \times n$ symmetric matrices. Then for all~$i$ we have:
  \begin{equation*}
    \lambda_i(A) + \lambda_n(B) \leq \lambda_i(A+B) \leq \lambda_i(A) + \lambda_1(B).
  \end{equation*}
  In particular,
  \begin{equation*}
    \lambda_{\min}(A) + \lambda_{\min}(B) \leq \lambda_{\min}(A+B).
  \end{equation*}
\end{lemma}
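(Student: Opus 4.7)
The plan is to invoke the Courant--Fischer min-max characterization of eigenvalues for symmetric matrices, which is the standard route to Weyl's inequality. Recall that if $M$ is symmetric with eigenvalues $\lambda_1(M) \geq \cdots \geq \lambda_n(M)$, then
\begin{equation*}
  \lambda_i(M) = \max_{\substack{S \subset \bbR^n \\ \dim S = i}} \min_{\substack{x \in S \\ \lVert x \rVert = 1}} x' M x.
\end{equation*}
This will be the only non-trivial tool required; after that, both inequalities follow by monotonicity of the Rayleigh quotient.

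For the upper bound, I would fix $i$ and observe that for every unit vector $x$, the Rayleigh quotient of $A+B$ splits as $x'(A+B)x = x' A x + x' B x \leq x' A x + \lambda_1(B)$, since $\lambda_1(B)$ is the maximum of $x' B x$ over the unit sphere. Plugging this pointwise inequality inside the max-min of Courant--Fischer applied to $A+B$, and pulling the additive constant $\lambda_1(B)$ out of both the inner minimum and the outer maximum, I recover $\lambda_i(A+B) \leq \lambda_i(A) + \lambda_1(B)$. The lower bound is fully symmetric: $x'(A+B)x \geq x' A x + \lambda_n(B)$ for every unit $x$, and the same manipulation yields $\lambda_i(A+B) \geq \lambda_i(A) + \lambda_n(B)$.

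The ``in particular'' clause is then immediate by taking $i = n$ in the lower bound, since $\lambda_n = \lambda_{\min}$ by convention. No genuine obstacle is expected in this proof; the only subtle point worth mentioning is that the convention $\lambda_1 \geq \cdots \geq \lambda_n$ matches the Rayleigh quotient's maximization (respectively minimization) at $\lambda_1$ (respectively $\lambda_n$), so that the direction of inequalities comes out correctly. Since this is a very classical statement, I would keep the proof short and self-contained, presenting the two bounds in parallel to highlight the symmetry of the argument.
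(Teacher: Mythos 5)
Your proof is correct: the Courant--Fischer max-min characterization combined with the pointwise bounds $\lambda_n(B) \leq x'Bx \leq \lambda_1(B)$ on the unit sphere gives both inequalities, and taking $i=n$ yields the stated consequence. The paper does not prove this lemma at all --- it simply cites \citet[Theorem 4.3.1]{hornMatrixAnalysis2012} --- and your argument is the standard proof found in that reference, so you have essentially supplied the omitted classical argument rather than a different route.
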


\begin{proof}
  See \citet[Theorem 4.3.1]{hornMatrixAnalysis2012}.
\end{proof}

\begin{lemma}[Ostrowski] \label{lem:ostrowski}
  Let~$S$ and~$A$ be two~$n \times n$ matrices with~$S$ symmetric. For all~$i$, there is a real number~$r_i \in [s_{\min}(A)^2, s_{\max}(A)^2]$ such that~$\lambda_i(ASA') = r_i \lambda_i(S)$, where~$s_{\min}$ (resp.~$s_{\max}$) denotes the minimum (resp. maximum) singular value.
\end{lemma}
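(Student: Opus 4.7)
The plan is to combine the Courant--Fischer min-max characterization of eigenvalues with Sylvester's law of inertia, after first reducing to the case where $A$ is a positive diagonal matrix. The reduction exploits the fact that $ASA'$ and $SA'A$ share the same spectrum (since for square $X, Y$ the products $XY$ and $YX$ have the same eigenvalues), and that $SA'A$ is similar to the symmetric matrix $(A'A)^{1/2} S (A'A)^{1/2}$ via conjugation by $(A'A)^{1/2}$. Since $(A'A)^{1/2}$ is symmetric PSD with eigenvalues $s_j(A)$, an orthogonal diagonalization further lets me assume $A = D = \diag(s_1(A), \dots, s_n(A))$. The nontrivial case, and the only one used in this paper, is when $A$ is nonsingular; I proceed under that assumption.

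Next, I would invoke Courant--Fischer (with eigenvalues ordered in decreasing order),
\begin{equation*}
\lambda_i(DSD) = \max_{\dim V = i}\, \min_{x \in V \setminus \{0\}}\, \frac{x' D S D x}{x' x},
\end{equation*}
and substitute $y = Dx$, a bijection since $D$ is nonsingular, to factor the Rayleigh quotient as
\begin{equation*}
\frac{x' DSD x}{x' x} = \frac{y' S y}{y' y} \cdot \frac{y' y}{y' D^{-2} y},
\end{equation*}
where the second factor lies in $[s_{\min}(A)^2, s_{\max}(A)^2]$ because the eigenvalues of $D^{-2}$ lie in $[s_{\max}(A)^{-2}, s_{\min}(A)^{-2}]$. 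Choosing $V^{*}$ as the image under $D^{-1}$ of the optimal subspace for $\lambda_i(S)$, the factorization together with the min-max formula (and its dual max-min analogue applied to obtain the reverse inequality) produces two-sided bounds on $\lambda_i(DSD)$ in terms of $\lambda_i(S)$ and the extreme squared singular values of $A$.

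The main obstacle is the sign bookkeeping: the Rayleigh quotient $y'Sy / y'y$ may take either sign, so there is no uniform ``product of extrema'' argument and one must split cases according to the sign of $\lambda_i(S)$. When $\lambda_i(S) > 0$, the Rayleigh quotient is nonnegative on the relevant subspace, so multiplying by the always-positive second factor preserves the direction of the inequality and pinches $\lambda_i(DSD)$ between $s_{\min}(A)^2 \lambda_i(S)$ and $s_{\max}(A)^2 \lambda_i(S)$. When $\lambda_i(S) < 0$, the Rayleigh quotient on the dual optimal subspace is nonpositive, the signs flip, and the pinch reverses, still yielding $r_i = \lambda_i(DSD)/\lambda_i(S) \in [s_{\min}(A)^2, s_{\max}(A)^2]$. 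The boundary case $\lambda_i(S) = 0$ is handled by Sylvester's law of inertia, valid since $A$ is nonsingular, which forces $\lambda_i(DSD) = 0$ as well so that any $r_i$ in the interval serves.
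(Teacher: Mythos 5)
Your argument is correct, and it is genuinely different in kind from what the paper does: the paper's ``proof'' of this lemma is only a citation to \citet[Theorem 4.5.9 and Corollary 4.5.11]{hornMatrixAnalysis2012}, whereas you supply a self-contained derivation (essentially the classical proof of Ostrowski's theorem). Your reduction is sound: $ASA'$ and $SA'A$ share a spectrum, $SA'A$ is similar to $(A'A)^{1/2}S(A'A)^{1/2}$ when $A$ is nonsingular, and an orthogonal change of basis lets you take $A = D = \diag(s_1(A),\dots,s_n(A))$; the substitution $y = Dx$ in Courant--Fischer, the factorization of the Rayleigh quotient with the second factor pinned in $[s_{\min}(A)^2, s_{\max}(A)^2]$, the sign split on $\lambda_i(S)$, and the use of Sylvester's inertia for $\lambda_i(S)=0$ all hold up (in the case $\lambda_i(S)>0$, the dual max--min subspace may contain vectors with $y'Sy<0$, but the resulting negative products only help the upper bound, so your pinching still goes through). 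Two remarks. First, the reduction to diagonal $D$ is optional: substituting $y=A'x$ directly gives $x'ASA'x/x'x = (y'Sy/y'y)\cdot(y'y/y'(A'A)^{-1}y)$ and the same bounds. Second, you prove the lemma only for nonsingular $A$, while the statement in the paper imposes no such hypothesis; this matches the scope of the cited Horn--Johnson theorem and of the paper's sole application (there $A = Q_\Pi^{1/2}$ with $Q_\Pi \succeq \omega^2 I \succ 0$), but if you want the lemma exactly as stated you should add either a continuity argument (apply your result to nonsingular perturbations $A_\epsilon \to A$ and pass to the limit, noting that $\lambda_i(S)=0$ then forces $\lambda_i(A_\epsilon S A_\epsilon')=0$ for every $\epsilon$) or an inertia-type argument showing that $A S A'$ has at most as many positive and at most as many negative eigenvalues as $S$.
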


\begin{proof}
  See \citet[Theorem 4.5.9 and Corollary 4.5.11]{hornMatrixAnalysis2012}
\end{proof}

\begin{lemma}[Singular values of the Kronecker product] \label{lem:kron_singular_values}
  Let~$A$ and~$B$ be two matrices. Then
  \begin{equation*}
    \lVert A \otimes B \rVert_2 \leq \lVert A \rVert_2 \lVert B \rVert_2.
  \end{equation*}
\end{lemma}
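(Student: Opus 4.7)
The plan is to compute the spectral norm of $A \otimes B$ exactly via the singular value decomposition, which in fact yields equality rather than just the inequality stated. I would rely on two standard algebraic facts about Kronecker products: the mixed-product identity $(A_1 \otimes B_1)(A_2 \otimes B_2) = (A_1 A_2) \otimes (B_1 B_2)$, and the fact that $(A \otimes B)^* = A^* \otimes B^*$.

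First, I would write SVDs $A = U_A \Sigma_A V_A^*$ and $B = U_B \Sigma_B V_B^*$ where $U_A, V_A, U_B, V_B$ are unitary. Applying the mixed-product identity three times gives
\begin{equation*}
  A \otimes B = (U_A \otimes U_B)(\Sigma_A \otimes \Sigma_B)(V_A \otimes V_B)^*.
\end{equation*}
Since Kronecker products of unitary matrices are unitary (again by the mixed-product identity, since $(U_A \otimes U_B)(U_A \otimes U_B)^* = (U_A U_A^*) \otimes (U_B U_B^*) = I \otimes I = I$), and since $\Sigma_A \otimes \Sigma_B$ is a nonnegative diagonal (rectangular) matrix, this display is an SVD of $A \otimes B$. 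Therefore the singular values of $A \otimes B$ are exactly the products $\sigma_i(A)\sigma_j(B)$.

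The spectral norm being the largest singular value, we conclude
\begin{equation*}
  \lVert A \otimes B \rVert_2 = \max_{i,j} \sigma_i(A)\sigma_j(B) = \sigma_{\max}(A)\sigma_{\max}(B) = \lVert A \rVert_2 \lVert B \rVert_2,
\end{equation*}
which is stronger than what is claimed. There is no serious obstacle here; the only point worth double-checking is that the mixed-product identity applies without dimensional mismatch (which it does, because $\Sigma_A$ and $\Sigma_B$ have the appropriate shapes to multiply $V_A^*$ and $V_B^*$ on one side and $U_A, U_B$ on the other). An alternative one-line argument uses $\lVert M \rVert_2^2 = \lambda_{\max}(M^*M)$: then $(A \otimes B)^*(A \otimes B) = (A^*A) \otimes (B^*B)$, whose eigenvalues are products of eigenvalues of the two factors, so the maximum equals $\lVert A \rVert_2^2 \lVert B \rVert_2^2$.
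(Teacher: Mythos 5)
Your argument is correct, and in fact stronger than needed: the mixed-product identity applied to the SVDs of $A$ and $B$ does yield an SVD of $A \otimes B$ (Kronecker products of unitaries being unitary), so the singular values of $A \otimes B$ are exactly the products $\sigma_i(A)\sigma_j(B)$ and the norm identity holds with equality. The paper itself gives no proof at all — it simply cites \citet[Theorem 4.2.15]{hornTopicsMatrixAnalysis1994}, which is precisely this statement about the singular values of a Kronecker product; your write-up is essentially the standard proof of that cited theorem, and your alternative one-liner via $(A\otimes B)^*(A\otimes B) = (A^*A)\otimes(B^*B)$ is equally valid. The only nuance worth keeping in mind is that in the general rectangular case $\Sigma_A \otimes \Sigma_B$ is not literally a diagonal matrix but a permutation of one (the product singular values appear in a shuffled order), so strictly one should either permute the factors by unitaries or argue via the eigenvalues of $(A\otimes B)^*(A\otimes B)$ as in your last remark; this does not affect the conclusion.
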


\begin{proof}
  See \citet[Theorem 4.2.15]{hornTopicsMatrixAnalysis1994}.
\end{proof}

\begin{lemma} \label{lem:frobenius_spectral_product}
  For any two matrices~$A$ and~$B$, we have:
  \begin{equation*}
    \lVert A B \rVert_F \leq \min \left\{ \lVert A \rVert_2 \lVert B \rVert_F,  \lVert A \rVert_F \lVert B \rVert_2 \right\}
  \end{equation*}
\end{lemma}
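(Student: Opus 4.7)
The plan is to prove the two bounds separately using the column/row interpretation of the Frobenius norm together with the operator-norm definition of the spectral norm. The key identity I rely on is $\lVert M \rVert_F^2 = \sum_j \lVert M_{\cdot,j} \rVert_2^2 = \sum_i \lVert M_{i,\cdot} \rVert_2^2$, and the elementary fact that $\lVert A v \rVert_2 \leq \lVert A \rVert_2 \lVert v \rVert_2$ for any vector $v$ (and the transposed version for row vectors).

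For the first bound $\lVert AB \rVert_F \leq \lVert A \rVert_2 \lVert B \rVert_F$, I would view $AB$ column by column: its $j$-th column is $A B_{\cdot,j}$. Applying the spectral norm inequality to each column gives $\lVert A B_{\cdot,j} \rVert_2^2 \leq \lVert A \rVert_2^2 \lVert B_{\cdot,j} \rVert_2^2$, and summing over $j$ yields the result. For the second bound $\lVert AB \rVert_F \leq \lVert A \rVert_F \lVert B \rVert_2$, I would instead work row by row: the $i$-th row of $AB$ is $A_{i,\cdot} B$, and $\lVert A_{i,\cdot} B \rVert_2^2 = \lVert B' A_{i,\cdot}' \rVert_2^2 \leq \lVert B' \rVert_2^2 \lVert A_{i,\cdot} \rVert_2^2 = \lVert B \rVert_2^2 \lVert A_{i,\cdot} \rVert_2^2$. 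Summing over $i$ finishes the argument. Taking the minimum of the two bounds gives the stated inequality.

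There is no real obstacle here: both inequalities are one-line computations once the right norm decomposition is chosen. An equally short alternative, if desired, is the trace approach: write $\lVert AB \rVert_F^2 = \Tr(B' A' A B)$ and bound $A' A \preceq \lVert A \rVert_2^2 I$ in the Loewner order, so that $\Tr(B' A' A B) \leq \lVert A \rVert_2^2 \Tr(B' B) = \lVert A \rVert_2^2 \lVert B \rVert_F^2$; the symmetric manipulation $\lVert AB \rVert_F^2 = \Tr(A B B' A')$ with $B B' \preceq \lVert B \rVert_2^2 I$ gives the other bound. Either route fits in a few lines and requires no additional lemmas beyond the basic spectral/Frobenius norm definitions.
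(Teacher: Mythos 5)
Your proof is correct. Your primary route (decomposing $\lVert AB \rVert_F^2$ column by column, bounding each column via $\lVert A B_{\cdot,j} \rVert_2 \leq \lVert A \rVert_2 \lVert B_{\cdot,j} \rVert_2$, and the row-wise analogue for the second bound) differs slightly from the paper's argument, which works entirely with the trace and the Loewner order: it establishes that $P \preceq Q$ implies $R'PR \preceq R'QR$ and $\Tr(P) \leq \Tr(Q)$, then bounds $\lVert AB \rVert_F^2 = \Tr(B'A'AB) \leq \lambda_{\max}(A'A)\Tr(B'B) = \lVert A \rVert_2^2 \lVert B \rVert_F^2$. Your trace-based alternative at the end is essentially identical to the paper's proof. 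The column/row decomposition is arguably the more elementary of the two, needing only the operator-norm definition of $\lVert \cdot \rVert_2$ applied vector by vector, while the Loewner-order route packages the same estimate in a form that generalizes cleanly to other trace inequalities; for this lemma they are interchangeable and equally short. One small cosmetic point: in your row-wise step you could note explicitly that $\lVert B' \rVert_2 = \lVert B \rVert_2$, which you use implicitly, but this is standard and does not affect correctness.
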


\begin{proof}
  The Loewner order on symmetric matrices satisfies the following properties:
  \begin{align*}
    \forall (P, Q) \in \mathcal{S}_n(\bbR), \forall R, \quad & P \preceq Q \implies R'PR \preceq R'QR   \\
    \forall (P, Q) \in \mathcal{S}_n(\bbR), \quad            & P \preceq Q \implies \Tr(P) \leq \Tr(Q).
  \end{align*}
  The first inequality is true because if~$x$ is a vector,~$x' R' (Q - P) R x = (R x)' (Q - P) (R x) \geq 0$ due to the Loewner positivity of~$Q - P$.
  The second inequality can be directly deduced from the relation between the spectra of~$P$ and~$Q$.
  Therefore, since~$A'A$ is symmetric,
  \begin{equation*}
    B'A'AB \leq \lambda_{\max}(A'A) B'B
  \end{equation*}
  which implies
  \begin{equation*}
    \lVert AB \rVert_F^2 = \Tr(B' A' A B) \leq \lambda_{\max}(A A') \Tr(B' B)  = \lVert A \rVert_2^2 \rVert B \rVert_F^2.
  \end{equation*}
  The proof for the other inequality is identical.
\end{proof}

\begin{lemma} \label{lem:maxnorm_l1linf}
  Let~$A$ and~$B$ be two matrices with compatible sizes: then
  \begin{equation*}
    \lVert A B \rVert_{\max} \leq \min\{\lVert A \rVert_{\infty} \lVert B \rVert_{\max}, \lVert A \rVert_{\max} \lVert B \rVert_1\}.
  \end{equation*}
\end{lemma}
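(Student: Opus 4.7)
The plan is to work entry-by-entry, using a simple Hölder-type splitting of the row-column product. Fix indices $i$ and $j$; by the triangle inequality,
\begin{equation*}
  |(AB)_{ij}| = \Big|\sum_k A_{ik} B_{kj}\Big| \leq \sum_k |A_{ik}|\,|B_{kj}|.
\end{equation*}
From this single expression both bounds will fall out by pulling either factor out of the sum.

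For the first bound, pull $|B_{kj}|$ out via $\max_k |B_{kj}| \leq \lVert B \rVert_{\max}$, leaving $\sum_k |A_{ik}|$, which is the $\ell_1$ norm of the $i$-th row of $A$, hence at most $\lVert A \rVert_{\infty}$ (the operator $\ell_\infty$ norm is exactly the maximum row $\ell_1$ norm). This gives $|(AB)_{ij}| \leq \lVert A \rVert_{\infty}\,\lVert B \rVert_{\max}$. For the second bound, symmetrically pull $|A_{ik}|$ out via $\max_k |A_{ik}| \leq \lVert A \rVert_{\max}$, leaving $\sum_k |B_{kj}|$, which is the $\ell_1$ norm of the $j$-th column of $B$, hence at most $\lVert B \rVert_1$ (the operator $\ell_1$ norm is exactly the maximum column $\ell_1$ norm). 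This yields $|(AB)_{ij}| \leq \lVert A \rVert_{\max}\,\lVert B \rVert_1$.

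Finally, take the maximum over $(i,j)$ on the left-hand side to replace $|(AB)_{ij}|$ with $\lVert AB \rVert_{\max}$, and then take the minimum of the two upper bounds to conclude. There is no real obstacle here: the only subtlety is remembering that $\lVert \cdot \rVert_{\infty}$ and $\lVert \cdot \rVert_1$ denote the induced operator norms, interpreted respectively as the maximum row and column $\ell_1$ norms of the underlying matrix, so that the pointwise splitting above indeed matches the norms stated in the lemma.
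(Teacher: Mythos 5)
Your proof is correct and follows essentially the same route as the paper: bounding each entry $|(AB)_{i,j}| \leq \sum_k |A_{i,k}||B_{k,j}|$ and pulling out either the max entry of $B$ (leaving the row $\ell_1$ norm of $A$) or the max entry of $A$ (leaving the column $\ell_1$ norm of $B$). If anything, your version is written slightly more carefully, since the paper's displayed intermediate bound omits the absolute values inside the sums.
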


\begin{proof}
  \begin{equation*}
    \lVert AB \rVert_{\max} = \max_{i,j} |(AB)_{i,j}| = \max_{i,j} \left| \sum_k A_{i,k} B_{k,j} \right|
  \end{equation*}
  We easily deduce:
  \begin{align*}
    \lVert AB \rVert_{\max} & \leq \max_{i} \left| \sum_k A_{i,k} \right| \times \lVert B \rVert_{\max} = \lVert A \rVert_{\infty} \lVert B \rVert_{\max} \\
    \lVert AB \rVert_{\max} & \leq \lVert A \rVert_{\max} \times  \max_{j} \left| \sum_k B_{k,j} \right| = \lVert A \rVert_{\max} \lVert B \rVert_1
  \end{align*}
\end{proof}

\subsection{Probability}

\begin{lemma}[Fano's method] \label{lem:fano_method}
  Let~$\theta_0, ..., \theta_M$ be~$M+1$ parameters that are~$2\tau$-separated w.r.t. a distance~$d$
  \begin{equation*}
    \forall i \neq j, \quad d(\theta_i, \theta_j) \geq 2 \tau
  \end{equation*}
  and such that the average KL divergence between~$\bbP_{\theta_i}$ and~$\bbP_{\theta_0}$ is small enough
  \begin{equation}
    \frac{1}{M+1} \sum_{i=1}^{M} \KL{\bbP_{\theta_i}}{\bbP_{\theta_0}} \leq \alpha \log M \quad \text{with} \quad 0 < \alpha < 1 \label{eq:kl_constraint_fano}
  \end{equation}
  Then the minimax probability of an error at threshold~$\tau$ satisfies:
  \begin{equation*}
    \inf_{\widehat{\theta}} \sup_{\theta \in \Theta_s} \bbP_{\theta} \left[ d\left(\widehat{\theta}, \theta\right) \geq \tau \right] \geq \frac{\log(M+1) - \log 2}{\log M} - \alpha.
  \end{equation*}
\end{lemma}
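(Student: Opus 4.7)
The plan is to follow the textbook three-step Fano recipe: reduce estimation to multiple hypothesis testing, apply Fano's inequality under a uniform prior on the $M+1$ hypotheses, then control the mutual information by the KL quantities appearing in the hypothesis.

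First, I would exploit the $2\tau$-separation to convert any estimator $\widehat{\theta}$ into a test. Set $\psi(X) \in \argmin_{0 \leq i \leq M} d(\widehat{\theta}(X), \theta_i)$. If $d(\widehat{\theta}, \theta_i) < \tau$, then for every $j \neq i$ the triangle inequality forces $d(\widehat{\theta}, \theta_j) \geq d(\theta_i, \theta_j) - d(\widehat{\theta}, \theta_i) > \tau$, so the test returns $i$. Contrapositively, $\{\psi \neq i\} \subseteq \{d(\widehat{\theta}, \theta_i) \geq \tau\}$, and taking appropriate suprema and infima yields
\begin{equation*}
\inf_{\widehat{\theta}} \sup_{\theta \in \{\theta_0, \dots, \theta_M\}} \bbP_\theta\!\left[d(\widehat{\theta}, \theta) \geq \tau\right] \;\geq\; \inf_\psi \max_{0 \leq i \leq M} \bbP_{\theta_i}[\psi \neq i].
\end{equation*}

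Second, I would place a uniform prior $\Theta$ on $\{\theta_0, \dots, \theta_M\}$ and apply the classical Fano inequality to $\hat{\Theta} = \psi(X)$. Since $\hat{\Theta}$ takes values in a set of size $M+1$, the conditional entropy satisfies $H(\Theta \mid \hat{\Theta}) \leq \log 2 + \bar{p}_e \log M$, where $\bar{p}_e = \tfrac{1}{M+1} \sum_i \bbP_{\theta_i}[\psi \neq i]$ is the average Bayes error. Combining this with $H(\Theta) = \log(M+1)$ and $H(\Theta \mid \hat{\Theta}) \geq H(\Theta) - I(\Theta; X)$ rearranges to
\begin{equation*}
\bar{p}_e \;\geq\; \frac{\log(M+1) - \log 2 - I(\Theta; X)}{\log M},
\end{equation*}
and the worst-case error dominates the average.

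Third, I would bound $I(\Theta; X)$ using the variational identity $I(\Theta; X) = \min_Q \tfrac{1}{M+1} \sum_i \KL{\bbP_{\theta_i}}{Q}$. Choosing $Q = \bbP_{\theta_0}$ kills the $i=0$ term and invokes the Lemma's hypothesis, yielding $I(\Theta; X) \leq \alpha \log M$. Substituting this back produces the announced inequality. The entire argument is classical material (see Tsybakov, Chapter 2), so no step constitutes a real obstacle; the only point that needs care is matching normalizations — it is because $\hat{\Theta}$ has $M$ possible wrong values (not $M+1$) that the denominator of the final bound is $\log M$, and it is because mutual information is minimized over reference distributions that a single-reference KL hypothesis suffices in place of the pairwise version.
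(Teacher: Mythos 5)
Your proposal is correct and is exactly the classical argument (reduction from estimation to testing via the $2\tau$-separation and the triangle inequality, Fano's inequality under a uniform prior with the $\log M$ normalization, and the mutual-information bound obtained by taking $\bbP_{\theta_0}$ as reference distribution) that the paper itself does not spell out but simply delegates to \citet[Section 2.2 and Corollary 2.6]{tsybakovIntroductionNonparametricEstimation2008}. So you have essentially reproduced the cited proof; the only implicit point worth keeping in mind is that the $\theta_i$ must lie in $\Theta_s$ for the final supremum over $\Theta_s$ to dominate the maximum over the finite family, which is how the lemma is used in the paper.
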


\begin{proof}
  See \citet[Section 2.2 + Corollary 2.6]{tsybakovIntroductionNonparametricEstimation2008}.
  In particular, since~$M \mapsto \frac{\log(M+1) - \log 2}{\log M}$ is increasing, setting~$\alpha = \frac{\log(3) - \log(2)}{2\log(2)} \geq 1/2$ is enough to obtain a minimax risk greater than~$\alpha$, as soon as~$M \geq 3$.
\end{proof}

\begin{lemma}[Chain rule for KL divergence] \label{lem:kl_chain}
  If~$\bbP_0$ and~$\bbP_1$ are probability densities on a product space~$\mathcal{X} \times \mathcal{Y}$ with~$\mathcal{X}$ discrete, then:
  \begin{align*}
    \KL{\bbP_0[X, Y]}{\bbP_1[X, Y]}
     & = \KL{\bbP_0[X]}{\bbP_1[X]}  + \bbE_X \left[ \KL{\bbP_0[Y|X]}{\bbP_1[Y|X]} \right].
  \end{align*}
\end{lemma}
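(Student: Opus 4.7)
The plan is to carry out the standard proof based on the factorization of the joint density. I will start from the definition of KL divergence as an expectation of a log-ratio under $\bbP_0$, then apply the chain rule for probability densities $\bbP_i[X,Y] = \bbP_i[X] \cdot \bbP_i[Y | X]$ (for $i \in \{0, 1\}$), and finally recognize each resulting term as one of the two quantities on the right-hand side.

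More concretely, the first step is to write
\[
  \KL{\bbP_0[X,Y]}{\bbP_1[X,Y]} = \bbE_{(X,Y) \sim \bbP_0}\!\left[\log \frac{\bbP_0[X,Y]}{\bbP_1[X,Y]}\right].
\]
Since $\mathcal{X}$ is discrete, both joint densities factorize unambiguously as $\bbP_i[X,Y] = \bbP_i[X]\,\bbP_i[Y|X]$, and the log-ratio splits additively into $\log(\bbP_0[X]/\bbP_1[X]) + \log(\bbP_0[Y|X]/\bbP_1[Y|X])$. Linearity of expectation then yields two pieces.

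The first piece, $\bbE_{\bbP_0}[\log(\bbP_0[X]/\bbP_1[X])]$, depends only on $X$, and integrating out $Y$ with respect to the conditional $\bbP_0[Y|X]$ leaves exactly $\KL{\bbP_0[X]}{\bbP_1[X]}$. For the second piece, I use the tower property (conditioning on $X$ first):
\[
  \bbE_{(X,Y) \sim \bbP_0}\!\left[\log \frac{\bbP_0[Y|X]}{\bbP_1[Y|X]}\right]
  = \bbE_{X \sim \bbP_0}\!\left[\bbE_{Y \sim \bbP_0[\cdot|X]}\!\left[\log \frac{\bbP_0[Y|X]}{\bbP_1[Y|X]}\right]\right],
\]
and the inner expectation is precisely $\KL{\bbP_0[Y|X]}{\bbP_1[Y|X]}$.

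There is no real obstacle here: the argument is purely a rearrangement of definitions, and discreteness of $\mathcal{X}$ guarantees that the conditional densities are well-defined pointwise, sidestepping any measure-theoretic subtleties about regular conditional probabilities. The only mild care needed is to make sure the logarithm split is valid on the support of $\bbP_0$ (whenever $\bbP_0[X,Y] > 0$, we also have $\bbP_0[X] > 0$ and $\bbP_0[Y|X] > 0$), and to adopt the usual convention $0 \log 0 = 0$ so the KL divergence is unambiguous when $\bbP_1$ vanishes on a $\bbP_0$-null set; if $\bbP_1$ fails to dominate $\bbP_0$, both sides are $+\infty$ and the identity still holds trivially.
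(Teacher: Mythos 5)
Your proof is correct and is exactly the standard factorization-plus-tower-property argument behind the chain rule; the paper itself does not spell this out but simply cites \citet[Theorem 2.5.3]{coverElementsInformationTheory2012}, which proves the identity the same way. Your added care about the support of $\bbP_0$ and the case where $\bbP_1$ fails to dominate $\bbP_0$ is a welcome (if routine) completion of what the citation leaves implicit.
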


\begin{proof}
  See \citet[Theorem 2.5.3]{coverElementsInformationTheory2012}.
\end{proof}

\begin{lemma}[KL divergence between Gaussians] \label{lem:kl_gaussian}
  The KL divergence between two multivariate Gaussian distributions~$\bbP_0 = \mathcal{N}(\mu_0, \Sigma_0)$ and~$\bbP_1 = \mathcal{N}(\mu_1, \Sigma_1)$ of dimension~$n$ is
  \begin{equation*}
    \KL{\bbP_0}{\bbP_1} = \frac{1}{2} \left( \Tr(\Sigma_0 \Sigma_1^{-1}) + (\mu_1 - \mu_0)' \Sigma_1^{-1} (\mu_1 - \mu_0) - n + \logdet (\Sigma_1 \Sigma_0^{-1}) \right).
  \end{equation*}
\end{lemma}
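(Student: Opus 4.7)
The plan is to follow the standard computation: write out the Gaussian log-densities explicitly, take the expectation of their difference under~$\bbP_0$, and collect terms. By definition,
\begin{equation*}
  \KL{\bbP_0}{\bbP_1} = \bbE_{X \sim \bbP_0}[\log p_0(X) - \log p_1(X)],
\end{equation*}
where~$\log p_k(x) = -\tfrac{n}{2}\log(2\pi) - \tfrac{1}{2}\log\det(\Sigma_k) - \tfrac{1}{2}(x-\mu_k)'\Sigma_k^{-1}(x-\mu_k)$.
The constant~$-\tfrac{n}{2}\log(2\pi)$ cancels, and the two log-determinant terms combine to~$\tfrac{1}{2}\log\det(\Sigma_1\Sigma_0^{-1})$.

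Next, I would handle the two quadratic forms separately using the trace identity~$y'My = \Tr(Myy')$ and linearity of expectation. For the first one, since~$\bbE_0[(X-\mu_0)(X-\mu_0)'] = \Sigma_0$, one gets
\begin{equation*}
  \bbE_0\!\left[(X-\mu_0)'\Sigma_0^{-1}(X-\mu_0)\right] = \Tr(\Sigma_0^{-1}\Sigma_0) = n.
\end{equation*}
For the second one, I would write~$X - \mu_1 = (X - \mu_0) + (\mu_0 - \mu_1)$, expand the quadratic form, and notice that the cross term vanishes because~$\bbE_0[X-\mu_0] = 0$. What remains is
\begin{equation*}
  \bbE_0\!\left[(X-\mu_1)'\Sigma_1^{-1}(X-\mu_1)\right] = \Tr(\Sigma_1^{-1}\Sigma_0) + (\mu_0-\mu_1)'\Sigma_1^{-1}(\mu_0-\mu_1).
\end{equation*}

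Finally, assembling the three contributions (log-determinant, first quadratic~$=n$, second quadratic) gives
\begin{equation*}
  \KL{\bbP_0}{\bbP_1} = \tfrac{1}{2}\bigl(\Tr(\Sigma_1^{-1}\Sigma_0) + (\mu_1-\mu_0)'\Sigma_1^{-1}(\mu_1-\mu_0) - n + \logdet(\Sigma_1\Sigma_0^{-1})\bigr),
\end{equation*}
which matches the statement up to the cyclic-trace identity~$\Tr(\Sigma_1^{-1}\Sigma_0) = \Tr(\Sigma_0 \Sigma_1^{-1})$. There is no real obstacle here: the only place one has to be careful is the algebraic expansion of~$(X-\mu_1)'\Sigma_1^{-1}(X-\mu_1)$, where the symmetry of~$\Sigma_1^{-1}$ is used to kill the cross term. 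A more streamlined alternative would be to first reduce to the centered case by translation and then observe that the mean contribution separates out, but the direct computation above is already short enough that I would simply write it out.
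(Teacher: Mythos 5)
Your computation is correct: the expectation of the log-density ratio under~$\bbP_0$, handled with the trace identity and the decomposition~$X-\mu_1 = (X-\mu_0)+(\mu_0-\mu_1)$, yields exactly the stated formula. The paper does not write this argument out but simply cites an external reference for it, and that reference carries out precisely this standard direct calculation, so your proof matches the intended one.
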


\begin{proof}
  See \citet[page 13]{duchiDerivationsLinearAlgebra2007}.
\end{proof}

\begin{lemma}[KL divergence between close Gaussians] \label{lem:kl_around_id}
  Let~$\Delta$ be a symmetric matrix of size~$n$ such that~$\lambda_{\min}(\Delta) > -1$, and let~$M$ be a rectangular matrix such that~$MM' \succ 0$.
  Then the KL divergence between
  \begin{equation*}
    \bbP_1 = \mathcal{N}(\mu, M (I + \Delta) M') \quad \text{and} \quad \bbP_0 = \mathcal{N}(\mu, MM')
  \end{equation*}
  satisfies
  \begin{equation*}
    \KL{\bbP_1}{\bbP_0} \leq \frac{\lVert \Delta \rVert_F^2}{2(1 + \lambda_{\min}(\Delta))}.
  \end{equation*}
\end{lemma}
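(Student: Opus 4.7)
The plan is to plug the two covariances into the closed-form KL formula from Lemma~\ref{lem:kl_gaussian}, reduce everything to a scalar calculation over the eigenvalues of a suitable symmetric matrix, and invoke a one-variable inequality on $x - \log(1+x)$.

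First, since both Gaussians share the same mean $\mu$, Lemma~\ref{lem:kl_gaussian} collapses to
\begin{equation*}
    \KL{\bbP_1}{\bbP_0} = \tfrac{1}{2}\bigl( \Tr(\Sigma_1 \Sigma_0^{-1}) - n + \logdet(\Sigma_0 \Sigma_1^{-1}) \bigr),
\end{equation*}
with $\Sigma_0 = MM'$ and $\Sigma_1 = M(I+\Delta)M'$. The hypothesis $MM' \succ 0$ forces $M \in \bbR^{n \times m}$ to have full row rank, and I exploit this by writing the reduced SVD $M = U S V_1'$ where $U \in \bbR^{n \times n}$ is orthogonal, $S \in \bbR^{n \times n}$ is diagonal and invertible, and $V_1 \in \bbR^{m \times n}$ satisfies $V_1'V_1 = I_n$. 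Setting $\Delta' = V_1' \Delta V_1 \in \bbR^{n \times n}$, a direct computation shows that
\begin{equation*}
    \Sigma_1 \Sigma_0^{-1} = U S (I + \Delta') S^{-1} U',
\end{equation*}
which is similar to $I + \Delta'$ and hence shares its trace and determinant. Denoting by $\mu_1, \dots, \mu_n$ the real eigenvalues of $\Delta'$, substitution then gives
\begin{equation*}
    \KL{\bbP_1}{\bbP_0} = \tfrac{1}{2}\bigl( \Tr(\Delta') - \logdet(I + \Delta') \bigr) = \tfrac{1}{2}\sum_{i=1}^{n} \bigl( \mu_i - \log(1 + \mu_i) \bigr).
\end{equation*}

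Second, I establish the key scalar inequality: for every $x > -1$,
\begin{equation*}
    x - \log(1+x) \leq \frac{x^2}{1+x}.
\end{equation*}
This is a short calculus check -- differentiating $h(x) = x^2/(1+x) - x + \log(1+x)$ yields $h'(x) = x/(1+x)^2$, so $h$ attains its global minimum at $x = 0$ with $h(0) = 0$, hence $h \geq 0$ on $(-1, +\infty)$. Applying this bound coordinatewise and using $\mu_i \geq \lambda_{\min}(\Delta')$ to enlarge each denominator, one obtains
\begin{equation*}
    \KL{\bbP_1}{\bbP_0} \leq \sum_{i=1}^n \frac{\mu_i^2}{2(1+\mu_i)} \leq \frac{\lVert \Delta' \rVert_F^2}{2\bigl(1+\lambda_{\min}(\Delta')\bigr)}.
\end{equation*}

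Finally, I transfer the bound from $\Delta'$ back to $\Delta$. Since $V_1$ is a linear isometry from $\bbR^n$ into $\bbR^m$, the Courant-Fischer variational characterization gives $\lambda_{\min}(\Delta') = \min_{\lVert x \rVert = 1} (V_1 x)' \Delta (V_1 x) \geq \lambda_{\min}(\Delta) > -1$, and applying Lemma~\ref{lem:frobenius_spectral_product} twice together with $\lVert V_1 \rVert_2 = 1$ gives $\lVert \Delta' \rVert_F \leq \lVert V_1 \rVert_2^2 \lVert \Delta \rVert_F = \lVert \Delta \rVert_F$. Both comparisons push the ratio in the favorable direction (smaller numerator, larger denominator), so we conclude
\begin{equation*}
    \KL{\bbP_1}{\bbP_0} \leq \frac{\lVert \Delta \rVert_F^2}{2\bigl(1+\lambda_{\min}(\Delta)\bigr)}.
\end{equation*}
The main obstacle is the rectangular-$M$ bookkeeping: one has to distinguish the $n$ dimensions of the Gaussian from the $m$ dimensions on which $\Delta$ lives, and check that the compression $\Delta \mapsto V_1'\Delta V_1$ neither inflates the Frobenius norm nor decreases the minimum eigenvalue. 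Once the SVD reduces everything to a square symmetric $\Delta'$, what remains is a one-line calculus computation plus two standard applications of Courant-Fischer and submultiplicativity of the Frobenius norm.
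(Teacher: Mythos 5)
Your proof is correct and follows essentially the same route as the paper: plug both covariances into the closed-form Gaussian KL of Lemma~\ref{lem:kl_gaussian}, reduce to $\tfrac12(\Tr - \logdet)$ of a perturbation of the identity, and apply the scalar inequality $x - \log(1+x) \leq x^2/(1+x)$ eigenvalue by eigenvalue. The only difference is that your SVD bookkeeping with $\Delta' = V_1'\Delta V_1$ treats the genuinely rectangular case rigorously, whereas the paper's own computation writes $M^{-1}$ and thus implicitly assumes $M$ square invertible (which suffices for its application with $M = Q_\Pi^{1/2}$); your transfer step $\lVert \Delta' \rVert_F \leq \lVert \Delta \rVert_F$, $\lambda_{\min}(\Delta') \geq \lambda_{\min}(\Delta)$ closes that gap cleanly.
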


\begin{proof} From Lemma~\ref{lem:kl_gaussian} (beware of the switch between~$\bbP_0$ and~$\bbP_1$) we get:
  \begin{align*}
    \KL{\bbP_1}{\bbP_0}
     & = \frac{1}{2} \left( \Tr(\Sigma_1 \Sigma_0^{-1}) + (\mu_0 - \mu_1)' \Sigma_0^{-1} (\mu_0 - \mu_1) - n + \logdet (\Sigma_0 \Sigma_1^{-1}) \right) \\
     & = \frac{1}{2} \left( \Tr(M(I+\Delta)M^{-1}) - n - \logdet (M (I+\Delta) M^{-1}) \right)                                                          \\
     & = \frac{1}{2} \left( \Tr(\Delta) - \logdet (I+\Delta) \right).
  \end{align*}
  As it happens, for small deviations from the identity, the log-determinant is almost equal to the trace.
  Indeed, since
  \begin{equation*}
    \forall x > -1, \quad \log(1+x) \geq \frac{x}{1+x},
  \end{equation*}
  we have
  \begin{align*}
    \Tr(\Delta) - \logdet(I+\Delta)
     & = \sum_{k=1}^{n} \lambda_k(\Delta) - \sum_{k=1}^{n} \log(1 + \lambda_k(\Delta))                                                                     \\
     & \leq \sum_{k=1}^{n} \lambda_k(\Delta) - \sum_{k=1}^{n} \frac{\lambda_k(\Delta)}{1 + \lambda_k(\Delta)}                                              \\
     & = \sum_{k=1}^{n} \frac{\lambda_k(\Delta)^2}{1 + \lambda_k(\Delta)} \leq \frac{1}{\min_k (1 + \lambda_k(\Delta))} \sum_{k=1}^{n} \lambda_k(\Delta)^2 \\
     & = \frac{\lVert \Delta \rVert_F^2}{1 + \lambda_{\min}(\Delta)}.
  \end{align*}
\end{proof}

\begin{lemma}[Chernoff inequality for Bernoulli variables] \label{lem:chernoff_bernoulli}
  Let~$(X_t)$ be sequence of independent~$\mathcal{B}(p)$ variables. Their average satisfies
  \begin{align*}
    \forall u \in [0, 1], \quad \bbP\left( \left\lvert\frac{1}{T} \sum_{t=1}^{T} X_t - p \right\rvert \geq u p \right) \leq c_1 \exp\left(-c_2 u^2 T p\right).
  \end{align*}
\end{lemma}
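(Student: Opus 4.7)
The plan is to apply the classical Cram\'er-Chernoff method separately to the upper and lower tails of the centered sum, and then conclude by a union bound. Let $S_T = \sum_{t=1}^{T} X_t$. The two-sided event $\{|S_T/T - p| \geq up\}$ decomposes as $\{S_T \geq (1+u)pT\} \cup \{S_T \leq (1-u)pT\}$, so it suffices to handle each side via the exponential Markov inequality, using independence of the $X_t$ to factorize the moment generating function.

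For the upper tail, applying $\bbP(S_T \geq \tau) \leq e^{-\lambda \tau}\bbE[e^{\lambda S_T}]$ with the Bernoulli MGF bound $\bbE[e^{\lambda X_t}] = 1 - p + p e^\lambda \leq \exp(p(e^\lambda - 1))$ yields
\begin{align*}
\bbP\bigl(S_T \geq (1+u)pT\bigr) \leq \exp\bigl(Tp(e^\lambda - 1) - \lambda(1+u)pT\bigr).
\end{align*}
Optimizing over $\lambda > 0$ produces the multiplicative Chernoff bound $\exp(-Tp \cdot h_+(u))$ with $h_+(u) = (1+u)\log(1+u) - u$. The lower tail is handled identically and gives $\exp(-Tp \cdot h_-(u))$ with $h_-(u) = (1-u)\log(1-u) + u$.

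The only real work is to lower-bound both $h_\pm(u)$ by a constant multiple of $u^2$ on $[0,1]$. This is standard: convexity combined with Taylor expansion gives $h_+(u) \geq u^2/(2 + 2u/3) \geq 3u^2/8$ and $h_-(u) \geq u^2/2$ for $u \in [0,1]$. A union bound over the two tails then yields the inequality with, say, $c_1 = 2$ and $c_2 = 3/8$. I do not anticipate any substantial obstacle since this is a textbook multiplicative Chernoff bound; the only mild care is ensuring that the constants are absolute (independent of $p$, $T$, $u$) and that the factor $up$ appears multiplicatively on the left-hand side, which is automatic once one works with the relative deviation $S_T / (Tp) - 1$ rather than the additive deviation $S_T/T - p$.
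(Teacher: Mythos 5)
Your proof is correct: it is the standard Cram\'er--Chernoff derivation of the multiplicative Chernoff bound, with the usual lower bounds $h_+(u) \geq 3u^2/8$ and $h_-(u) \geq u^2/2$ on $[0,1]$ and a union bound over the two tails. The paper itself simply cites \citet[Theorem 1.1]{dubhashiConcentrationMeasureAnalysis2009}, which is exactly this bound, so your argument matches the intended proof and merely writes it out in full.
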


\begin{proof}
  See \citet[Theorem 1.1]{dubhashiConcentrationMeasureAnalysis2009}.
\end{proof}

\begin{lemma}[Doeblin condition and mixing time] \label{lem:doeblin}
  Let~$(X_t)$ be an irreducible aperiodic Markov chain with state space~$\mathcal{X}$, transition matrix~$P$ and stationary distribution~$\mu$.
  Suppose that~$(X_t)$ satisfies the Doeblin condition:
  \begin{equation*}
    \exists r \in \bbN,  \exists \delta > 0,  \forall (x, y) \in \mathcal{X}^2, \quad P^r(x, y) \geq \delta \mu(y).
  \end{equation*}
  Then the mixing time of~$X_t$, defined as
  \begin{equation*}
    \tmix(\epsilon) = \min \left\{t \in \bbN:  \max_{x \in \mathcal{X}} \left\lVert P^t(x, \cdot) - \mu \right\rVert_{\TV} \leq \epsilon \right\},
  \end{equation*}
  satisfies:
  \begin{equation*}
    \tmix(\epsilon) \geq r \left(1 +  \frac{\log\frac{1}{\epsilon}}{\log\frac{1}{1-\delta}}\right).
  \end{equation*}
\end{lemma}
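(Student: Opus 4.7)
The plan is to proceed via a coupling argument combined with a careful inversion of the resulting contraction. I would construct two copies of the chain, $(X_t)$ and $(Y_t)$, with $X_0 = x$ arbitrary and $Y_0 \sim \mu$ in equilibrium. Over each window of $r$ consecutive steps, the Doeblin hypothesis $P^r(x,y) \geq \delta \mu(y)$ enables a coupling such that $X_{kr} = Y_{kr}$ with conditional probability at least $\delta$ given the past. Iterating over $k$ such blocks, the chains have failed to couple by time $kr$ with probability at most $(1-\delta)^k$, and the coupling inequality yields $\max_x \lVert P^{kr}(x,\cdot) - \mu \rVert_{\TV} \leq (1-\delta)^k$.

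From this TV contraction the next step is to invert. Requiring $(1-\delta)^k \leq \epsilon$ amounts to $k \geq \log(1/\epsilon) / \log(1/(1-\delta))$, so the threshold $t = r(1 + \log(1/\epsilon)/\log(1/(1-\delta)))$ appears naturally, with the extra $r$ term absorbing the rounding to a multiple of $r$. This standard Doeblin-to-mixing-time calculation produces the inequality $\tmix(\epsilon) \leq r(1 + \log(1/\epsilon)/\log(1/(1-\delta)))$.

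The main obstacle to proving the inequality in the direction stated in the lemma, namely $\tmix(\epsilon) \geq r(1+\log(1/\epsilon)/\log(1/(1-\delta)))$, is that the Doeblin condition yields only an \emph{upper} bound on the TV distance to equilibrium, and therefore only an \emph{upper} bound on $\tmix(\epsilon)$. A genuine \emph{lower} bound on $\tmix(\epsilon)$ would require exhibiting a starting state from which the TV distance to $\mu$ remains above $\epsilon$ for a long time, and the Doeblin hypothesis alone does not force such slow mixing: a trivial one-state chain, or a chain with $P^r = \mu$ exactly, satisfies the Doeblin bound with some $\delta < 1$ while mixing in a single block.

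A plausible route, which I would pursue if one interprets $\delta$ as the \emph{tightest} admissible Doeblin constant, is to select $(x_0,y_0)$ achieving equality $P^r(x_0,y_0) = \delta\mu(y_0)$, propagate this tightness through iteration to show that $|P^{kr}(x_0,y_0) - \mu(y_0)|$ remains of order $(1-\delta)^k \mu(y_0)$, and invert this TV lower bound to recover the claimed mixing-time lower bound. Without such a tightness assumption, however, the inequality sign as stated cannot be derived from the hypotheses, and the downstream uses in Lemma~\ref{lem:chernoff_doeblin} and Lemma~\ref{lem:proj_concentration} are consistent only with the upper-bound version, so I would flag this as a likely inversion of the inequality sign in the statement and carry out the coupling argument above to recover the intended content.
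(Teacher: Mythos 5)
Your argument is essentially the paper's proof: the paper cites \citet[Theorem 5.4]{levinMarkovChainsMixing2017} for the same total-variation contraction $\lVert P^t(x,\cdot) - \mu \rVert_{\TV} \leq (1-\delta)^{\lfloor t/r \rfloor}$ that your block coupling produces, and then inverts it exactly as you do, obtaining a \emph{sufficient} condition for $\epsilon$-mixing and hence the upper bound $\tmix(\epsilon) \leq r\bigl(1 + \log\tfrac{1}{\epsilon}/\log\tfrac{1}{1-\delta}\bigr)$. You are also right to flag the inequality sign: the $\geq$ in the statement is a typo, since the paper's own proof only establishes the $\leq$ direction and Lemma~\ref{lem:chernoff_doeblin} consumes the result precisely as an upper bound $\tmix(1/8) \leq c_3$, exactly as you predicted.
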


\begin{proof}
  The proof of \citet[Theorem 5.4]{levinMarkovChainsMixing2017} shows that with our assumptions,
  \begin{equation*}
    \forall x \in \mathcal{X}, \quad \left\lVert P^t(x, \cdot) - \mu \right\rVert_{\TV} \leq (1-\delta)^{\lfloor t/r \rfloor}.
  \end{equation*}
  From which we can deduce a sufficient condition for~$\epsilon$-mixing:
  \begin{equation*}
    (1-\delta)^{\lfloor t/r \rfloor} \leq \epsilon
    \quad \iff \quad  \left\lfloor \frac{t}{r} \right\rfloor \geq \frac{\log(\epsilon)}{\log(1-\delta)}
    \quad \impliedby \quad \frac{t}{r} - 1 \geq \frac{\log\frac{1}{\epsilon}}{\log\frac{1}{1-\delta}}.
  \end{equation*}
  The result follows easily.
\end{proof}

\begin{lemma}[Chernoff inequality for Markov chains] \label{lem:chernoff_markov}
  Let~$(X_t)$ be an ergodic stationary Markov chain with finite state space~$\mathcal{X}$.
  We consider a function~$f: \mathcal{X} \to \bbR$ such that~$\bbE[f(X_t)] = \mu$.
  Then
  \begin{equation*}
    \forall u \in [0, 1], \quad \bbP\left( \left\lvert \frac{1}{T} \sum_{t=1}^{T} X_t - \mu \right\rvert \geq u \mu \right) \leq c_1 \exp\left(-c_2\frac{u^2 T \mu}{\tmix(1/8)}\right)
  \end{equation*}
\end{lemma}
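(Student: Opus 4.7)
The plan is to reduce this Markov-chain concentration problem to the independent case through a blocking argument, and then invoke Lemma~\ref{lem:chernoff_bernoulli}. Set $L = \tmix(1/8)$ and partition the time indices into $L$ color classes $\mathcal{I}_j = \{t \in [T] : t \equiv j \pmod{L}\}$ for $j = 0, \ldots, L-1$, each of cardinality roughly $T/L$. The key point is that within each color class, consecutive samples are $L$ steps apart along the original chain, so by definition of the mixing time their joint law is within total variation $1/8$ of an i.i.d.\ sequence drawn from the stationary distribution.

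Decomposing $S_T = \sum_{t=1}^T f(X_t) = \sum_{j=0}^{L-1} S_T^{(j)}$ with $S_T^{(j)} = \sum_{t \in \mathcal{I}_j} f(X_t)$, a pigeonhole argument shows that the event $|S_T - T\mu| \geq u T \mu$ forces $|S_T^{(j)} - |\mathcal{I}_j|\mu| \geq u |\mathcal{I}_j| \mu$ for at least one color $j$. A union bound over the $L$ colors then reduces the task to controlling each subsum individually, at the cost of an extra multiplicative factor $L = \tmix(1/8)$ in front of the exponential---which is exactly the factor that appears in the announced bound.

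For fixed $j$, the subsequence $(X_t)_{t \in \mathcal{I}_j}$ is itself a stationary Markov chain, now with transition kernel $P^L$ satisfying $\|P^L(x, \cdot) - \mu\|_{\TV} \leq 1/8$ uniformly in $x$. I would exploit the resulting Doeblin-type minorization $P^L(x, \cdot) \geq (7/8)\, \mu(\cdot)$ to build a coupling between $(X_t)_{t \in \mathcal{I}_j}$ and an i.i.d.\ $\mu$-distributed sequence that agree with probability at least $7/8$ at each step. Translating this coupling into a comparison of moment generating functions, the deviations of $S_T^{(j)}$ can be bounded by those of an i.i.d.\ sum with $|\mathcal{I}_j| \simeq T/L$ terms and variance of order $\mu$, and the desired exponential decay then follows from an independent Bernstein--Chernoff-type bound in the spirit of Lemma~\ref{lem:chernoff_bernoulli}.

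The main obstacle will be making the coupling-based MGF comparison rigorous without degrading the scaling in $\mu$: the multiplicative fudge factor coming from the $1/8$ rejection probability has to be absorbed into the universal constant $c_2$ rather than into the exponent or the $\mu$ dependency, and some care is required to handle the short remainder block at the end of each color class. If this bookkeeping becomes unwieldy, a clean fallback is to directly invoke an existing Chernoff bound for Markov chains phrased in terms of mixing times---such as the spectral bound of Lezaud or the blocking bound of Paulin---which delivers exactly the announced estimate and already internalizes the coupling step.
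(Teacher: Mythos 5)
The paper does not prove this lemma at all: it is quoted directly from \citet[Theorem 3]{chungChernoffHoeffdingBoundsMarkov2012}, so your fallback option (citing an off-the-shelf Markov-chain Chernoff bound in terms of mixing time) is exactly the route the paper takes, and it is the sound one. Your primary, self-contained sketch, however, has genuine gaps. First, the bound $\lVert P^L(x,\cdot) - \mu \rVert_{\TV} \leq 1/8$ does \emph{not} imply the minorization $P^L(x,\cdot) \geq \tfrac{7}{8}\,\mu(\cdot)$: total variation controls the additive discrepancy $P^L(x,y) \geq \mu(y) - 1/8$ pointwise, which is vacuous for atoms with $\mu(y) \ll 1/8$. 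What it does give is a maximal coupling in which each $L$-spaced step agrees with a fresh draw from $\mu$ with probability at least $7/8$ --- a strictly weaker statement. Second, and more fundamentally, even granting that coupling, the steps where it fails (a constant $1/8$ fraction in expectation, with values governed by an uncontrolled ``discrepancy'' kernel) can shift each subsum by an amount of order $|\mathcal{I}_j|$ times a constant, which dwarfs the target deviation $u\,\mu\,|\mathcal{I}_j|$ precisely in the regime where $\mu$ is small --- and that regime is the one the paper needs, since the lemma is applied with $\mu = S(h)_{d_1,d_2} \approx p^2$. Obtaining the \emph{multiplicative} form of the exponent, $u^2 T \mu / \tmix(1/8)$ rather than an additive $u^2 T/\tmix$, is the actual content of the cited theorem, and a naive MGF comparison through the coupling does not deliver it; making it work essentially amounts to re-proving Chung et al.\ (or invoking a regeneration/split-chain argument that genuinely requires a minorization hypothesis, which is what Lemma~\ref{lem:chernoff_doeblin} supplies later via the Doeblin condition).

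Two smaller points: the union bound over the $L$ residue classes leaves a prefactor $L = \tmix(1/8)$ in front of the exponential, which is not the universal constant $c_1$ claimed in the statement and cannot in general be absorbed into the exponent; and the joint law of the $L$-spaced subsequence is not within total variation $1/8$ of an i.i.d.\ sequence --- the discrepancy accumulates over the $|\mathcal{I}_j|$ steps, so only the per-step coupling formulation is usable. If you intend a self-contained proof, the cleanest repair is to assume the Doeblin minorization from the start (as the paper's application does) and run a regeneration argument; otherwise, simply cite a mixing-time Chernoff bound such as Chung--Lam--Liu--Mitzenmacher or Paulin's Bernstein inequality, which is what the paper does.
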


\begin{proof}
  See \citet[Theorem 3]{chungChernoffHoeffdingBoundsMarkov2012}
\end{proof}

\begin{lemma}[Chernoff inequality for Markov chains under Doeblin condition] \label{lem:chernoff_doeblin}
  Under the hypotheses of the previous two Lemmas (\ref{lem:doeblin} and~\ref{lem:chernoff_markov}), if the parameters~$r$ and~$\delta$ in the Doeblin condition are constants, then we have:
  \begin{equation*}
    \forall u \in [0, 1], \quad \bbP\left( \left\lvert \frac{1}{T} \sum_{t=1}^{T} X_t - \mu \right\rvert \geq u \mu \right) \leq c_1 \exp\left(-c_2 u^2 T \mu\right)
  \end{equation*}
\end{lemma}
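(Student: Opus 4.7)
The plan is to simply chain together the two lemmas that immediately precede this statement. Specifically, Lemma~\ref{lem:chernoff_markov} provides an inequality with the mixing time~$\tmix(1/8)$ in the denominator of the exponent, while Lemma~\ref{lem:doeblin} allows us to upper bound this mixing time using the Doeblin parameters~$r$ and~$\delta$. If both are treated as universal constants, then~$\tmix(1/8)$ itself becomes a universal constant, and can be absorbed into the constant~$c_2$.

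More precisely, first I would invoke Lemma~\ref{lem:doeblin} with~$\epsilon = 1/8$, which yields
\begin{equation*}
  \tmix(1/8) \leq r \left(1 + \frac{\log 8}{\log \frac{1}{1-\delta}}\right).
\end{equation*}
Under the assumption that~$r$ and~$\delta$ are universal constants (independent of any varying parameters of the problem), the right-hand side is itself bounded by some universal constant~$\kappa$.

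Next I would plug this bound into the conclusion of Lemma~\ref{lem:chernoff_markov}: for all~$u \in [0, 1]$,
\begin{equation*}
  \bbP\left( \left\lvert \frac{1}{T} \sum_{t=1}^{T} X_t - \mu \right\rvert \geq u \mu \right) \leq c_1 \exp\left(-c_2 \frac{u^2 T \mu}{\tmix(1/8)}\right) \leq c_1 \exp\left(-\frac{c_2}{\kappa} u^2 T \mu\right).
\end{equation*}
Finally, redefining the constant~$c_2 / \kappa$ as a new constant~$c_2$ (which is legitimate under the paper's convention introduced at the start of Section~\ref{sec:lower_upper_bound} that~$c$ denotes a universal constant that may change from line to line) delivers the announced inequality. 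There is no conceptual obstacle here, since the argument is a one-line substitution; the only thing to be careful about is that the absorption of~$\tmix(1/8)$ into the universal constant is justified precisely by the constancy hypothesis on~$r$ and~$\delta$, without which this simplified form would not hold.
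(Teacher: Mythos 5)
Your proposal is correct and is essentially identical to the paper's own proof: bound $\tmix(1/8)$ by a constant via Lemma~\ref{lem:doeblin} (whose displayed inequality should indeed read as an upper bound on the mixing time, as its proof shows) and absorb that constant into the $c_2$ of Lemma~\ref{lem:chernoff_markov}. Nothing is missing.
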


\begin{proof}
  By Lemma~\ref{lem:doeblin}, since~$r$ and~$\delta$ are constants, the~$\frac{1}{8}$-mixing time of~$(X_t)$ can be bounded by a constant
  \begin{equation*}
    \tmix(1/8) \leq r \left(1 +  \frac{\log(8)}{\log\frac{1}{1-\delta}}\right) \leq c_3,
  \end{equation*}
  which we merge with the~$c_2$ inside the exponential of Lemma~\ref{lem:chernoff_markov}.
\end{proof}

\begin{lemma}[Gilbert-Varshamov] \label{lem:gilbert}
  Let~$\mathcal{H} = \{0, 1\}^d$ be the~$d$-dimensional binary hypercube.
  If~$d \geq 8$, there exists a pruned subset~$\mathcal{K} \subset \mathcal{H}$ such that
  \begin{equation*}
    \forall (x, y) \in \mathcal{K}, ~  \lVert x - y \rVert_1 \geq \frac{d}{8} \quad \text{and} \quad |\mathcal{K} | \geq 2^{d/8}.
  \end{equation*}
\end{lemma}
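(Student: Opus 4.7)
The plan is to use a classical greedy volume-packing construction. Starting from any~$x_0 \in \mathcal{H}$, I would iteratively add to~$\mathcal{K}$ any vertex whose Hamming distance to every already-chosen vertex is at least~$d/8$, continuing until no such vertex remains. The pairwise-separation property then holds by construction, so the only thing left to verify is the lower bound on~$|\mathcal{K}|$.

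The key observation is that the procedure can only stop when the open Hamming balls~$B(x) = \{y \in \mathcal{H} : \lVert x - y \rVert_1 < d/8\}$ centred at the vertices of~$\mathcal{K}$ together cover~$\mathcal{H}$. A union bound at that point yields
\begin{equation*}
    |\mathcal{K}| \geq \frac{2^d}{V}, \qquad V = \sum_{k=0}^{\lceil d/8 \rceil - 1} \binom{d}{k}.
\end{equation*}
To bound~$V$ from above, I would invoke the standard entropy volume estimate~$\sum_{k \leq \alpha d} \binom{d}{k} \leq 2^{H(\alpha) d}$, valid for~$\alpha \in (0, 1/2)$, where~$H(\alpha) = -\alpha \log_2 \alpha - (1-\alpha) \log_2(1-\alpha)$ is the binary entropy. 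Applied at~$\alpha = 1/8$, this gives~$|\mathcal{K}| \geq 2^{(1 - H(1/8)) d}$.

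Finishing the argument reduces to the numerical inequality~$H(1/8) \leq 7/8$, which I would check directly: writing~$H(1/8) = \tfrac{3}{8} + \tfrac{7}{8} \log_2(8/7)$ and using~$\log_2(8/7) < 1/2$ (since~$8/7 < \sqrt{2}$) gives~$H(1/8) < 13/16 < 7/8$, hence~$1 - H(1/8) \geq 1/8$ and the desired~$|\mathcal{K}| \geq 2^{d/8}$. I do not anticipate any real obstacle: the entropy volume bound is textbook material (see e.g.~\citet{tsybakovIntroductionNonparametricEstimation2008}), and the hypothesis~$d \geq 8$ enters only to guarantee~$d/8 \geq 1$ so that the separation threshold is non-vacuous. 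The only minor care needed is the ceiling~$\lceil d/8 \rceil$ in the ball radius when~$d$ is not a multiple of~$8$, which is painlessly absorbed into the entropy estimate.
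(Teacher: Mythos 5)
Your proof is correct, and it is genuinely different from what the paper does: the paper offers no argument at all for this lemma, simply citing \citet[Lemma~2.9]{tsybakovIntroductionNonparametricEstimation2008}, whose proof constructs the separated set by successive pruning and controls the number of discarded points with a Hoeffding-type bound on binomial tails. Your route is the classical greedy packing/covering duality plus the entropy volume estimate: maximality of the greedy set forces the balls of radius~$\lceil d/8\rceil-1$ to cover the cube, the union bound gives~$|\mathcal{K}| \geq 2^d / V$, and~$V \leq 2^{H(1/8)d}$ with~$H(1/8) \leq 3/8 + \tfrac{7}{8}\log_2(8/7) < 13/16$ yields~$|\mathcal{K}| \geq 2^{(1-H(1/8))d} \geq 2^{3d/16} \geq 2^{d/8}$ — all of these steps check out, including the ceiling bookkeeping (the greedy stopping radius is an integer~$\leq \lfloor d/8\rfloor$, so the entropy bound at~$\alpha = 1/8$ applies). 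What your approach buys is a short self-contained proof with a better exponent ($1 - H(1/8) \approx 0.46$ instead of~$1/8$), at the cost of invoking the entropy bound on binomial sums; the citation buys brevity and the exact constants of the standard Varshamov--Gilbert statement used downstream. Two cosmetic points: the separation should of course be read over distinct pairs~$x \neq y$ (as in the paper's statement, which is slightly sloppy on this), and since the application in Appendix~B needs~$0 \in \mathcal{K}$, you should note that the greedy construction can be initialized at~$x_0 = 0$, which your first sentence already permits.
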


\begin{proof}
  See \citet[Lemma 2.9]{tsybakovIntroductionNonparametricEstimation2008}
\end{proof}

\begin{lemma}[Hanson-Wright inequality: Gaussian case] \label{lem:hanson_wright}
  Let~$A$ be a square matrix. If~$X$ and~$Y$ are two independent standard Gaussian vectors, we have:
  \begin{align*}
    \bbP \left(|X'AX - \bbE[X'AX]| \geq u \right) & \leq 2 \exp \left( -c \min \left\{\frac{u^2}{\lVert A \rVert_F^2}, \frac{u}{\lVert A \rVert_2} \right\} \right)  \\
    \bbP \left(|X'AY - \bbE[X'AY]| \geq u\right)  & \leq 2 \exp \left( -c \min \left\{\frac{u^2}{\lVert A \rVert_F^2}, \frac{u}{\lVert A \rVert_2} \right\} \right).
  \end{align*}
\end{lemma}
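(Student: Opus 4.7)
The plan is to reduce both inequalities to Bernstein-type tail bounds for sums of independent sub-exponential random variables. The common mechanism is that after rotating by a well-chosen orthogonal change of coordinates (the eigenbasis of the symmetric part of $A$ in one case, the singular vector bases of $A$ in the other), each quadratic form becomes a linear combination of independent mean-zero sub-exponential summands whose $\psi_1$-norms are controlled by either $\lVert A \rVert_F$ or $\lVert A \rVert_2$. One then invokes Bernstein's inequality for sub-exponential sums.

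For the first inequality, I would split $A = A_s + A_a$ into its symmetric and antisymmetric parts and observe that $X'A_a X = 0$, so only $A_s$ contributes. Diagonalizing $A_s = Q \Lambda Q'$ and setting $Z = Q' X$, which is again a standard Gaussian vector by rotational invariance, yields
\[
  X'AX - \bbE[X'AX] = \sum_{i=1}^n \lambda_i (Z_i^2 - 1),
\]
a sum of independent mean-zero sub-exponential variables with $\psi_1$-norms of order $|\lambda_i|$ (a standard property of centered chi-square variables). Bernstein's inequality then produces a tail bound with variance proxy $\sum_i \lambda_i^2 = \lVert A_s \rVert_F^2 \leq \lVert A \rVert_F^2$ and scale parameter $\max_i |\lambda_i| = \lVert A_s \rVert_2 \leq \lVert A \rVert_2$, matching the claim. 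For the decoupled form $X'AY$, I first note that $\bbE[X'AY] = 0$ and invoke the singular value decomposition $A = U \Sigma V'$: the rotated vectors $\tilde{X} = U' X$ and $\tilde{Y} = V' Y$ remain independent standard Gaussians, so $X'AY = \sum_i \sigma_i \tilde{X}_i \tilde{Y}_i$, where each summand is mean zero and sub-exponential with $\psi_1$-norm of order $\sigma_i$ since the product of two independent standard Gaussians is itself sub-exponential. A second application of Bernstein's inequality yields the same tail shape, this time with $\sum_i \sigma_i^2 = \lVert A \rVert_F^2$ and $\max_i \sigma_i = \lVert A \rVert_2$.

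The main obstacle is bookkeeping rather than ideas: one has to verify the $\psi_1$-norms of $Z_i^2 - 1$ and of $\tilde{X}_i \tilde{Y}_i$ with explicit universal constants, and propagate those constants through Bernstein's inequality so that they can be absorbed into the single $c$ appearing in the $\min\{u^2/\lVert A \rVert_F^2,\, u/\lVert A \rVert_2\}$ exponent. A fully self-contained alternative would bypass the sub-exponential abstraction and bound the moment generating functions of $X'A_s X - \bbE[X'A_s X]$ and of $X'AY$ directly, using the explicit Gaussian and chi-square MGFs available after diagonalization, before closing with a Chernoff optimization; this is the classical route taken in most textbook treatments of the Hanson-Wright inequality.
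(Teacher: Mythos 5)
Your proposal is correct, but it takes a genuinely different route from the paper. The paper simply cites \citet[Theorem 6.2.1]{vershyninHighDimensionalProbabilityIntroduction2018} for the quadratic bound $X'AX$, and then obtains the bilinear bound $X'AY$ by a block-embedding trick: it applies the first inequality to $\widetilde{X} = (X', Y')'$ and $\widetilde{A} = \begin{psmallmatrix} 0 & A \\ 0 & 0 \end{psmallmatrix}$, noting that $\widetilde{X}'\widetilde{A}\widetilde{X} = X'AY$ while $\lVert \widetilde{A} \rVert_F = \lVert A \rVert_F$ and $\lVert \widetilde{A} \rVert_2 = \lVert A \rVert_2$. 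You instead give the classical self-contained Gaussian-case argument: symmetrize and diagonalize to write $X'AX - \bbE[X'AX] = \sum_i \lambda_i(Z_i^2 - 1)$, use the SVD to write $X'AY = \sum_i \sigma_i \widetilde{X}_i \widetilde{Y}_i$, and close both with Bernstein's inequality for sub-exponential sums; the passage from $A_s$ back to $A$ works because $\lVert A_s \rVert_F \leq \lVert A \rVert_F$ and $\lVert A_s \rVert_2 \leq \lVert A \rVert_2$ sit in denominators, so the bound only improves in the right direction. Your route is more elementary and makes the Gaussian structure explicit (rotational invariance is what lets you reduce to independent coordinates, which is exactly what fails in the general sub-gaussian case that Vershynin's theorem handles via decoupling), at the cost of still needing a cited Bernstein inequality or the MGF computation you sketch; the paper's route is shorter and its block trick is a cheap way to get the bilinear case without any separate analysis, which is worth remembering.
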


\begin{proof}
  See \citet[Theorem 6.2.1]{vershyninHighDimensionalProbabilityIntroduction2018} for the first inequality. We will see that it implies the second one. Let us define
  \begin{equation*}
    \widetilde{A} = \begin{bmatrix}
      0 & A \\
      0 & 0
    \end{bmatrix} \quad \text{and} \quad \widetilde{X} = \begin{bmatrix}
      X \\ Y
    \end{bmatrix}.
  \end{equation*}
  We note that~$\lVert \widetilde{A} \lVert_F = \lVert A \lVert_F$ and~$\lVert \widetilde{A} \lVert_2 = \lVert A \lVert_2$.
  Applying the first inequality to~$\widetilde{X}' \widetilde{A} \widetilde{X} = X' A Y$ yields the expected result.
\end{proof}

\begin{lemma}[Conditional Hanson-Wright inequality] \label{lem:conditional_hanson_wright}
  Let~$A$ be a random square matrix such that with probability~$1 - \delta$,
  \begin{equation*}
    \lVert A \rVert_2 \leq M_2 \qquad \text{and} \qquad \lVert A \rVert_F^2 \leq M_F^2.
  \end{equation*}
  If~$X$ and~$Y$ are two independent standard Gaussian vectors independent of~$A$, we have:
  \begin{align*}
    \bbP \left(|X'AX - \bbE[X'AX]| \geq u \right) & \leq \delta + 2 \exp \left( -c \min \left\{\frac{u^2}{M_F^2}, \frac{u}{M_2} \right\} \right) + \bbP\left(|\Tr(A - \bbE[A])| \geq u/2\right) \\
    \bbP \left(|X'AY - \bbE[X'AY]| \geq u \right) & \leq \delta + 2 \exp \left( -c \min \left\{\frac{u^2}{M_F^2}, \frac{u}{M_2} \right\} \right).
  \end{align*}
\end{lemma}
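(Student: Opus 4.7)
The plan is to prove this lemma by conditioning on the random matrix $A$ and reducing to the standard (deterministic) Hanson-Wright inequality of Lemma~\ref{lem:hanson_wright}. Let $E$ denote the good event $\{\lVert A\rVert_2 \leq M_2\}\cap\{\lVert A\rVert_F^2 \leq M_F^2\}$, which by hypothesis satisfies $\mathbb{P}(E^c) \leq \delta$. Since $X$ and $Y$ are independent of $A$, conditionally on any realization $A=A_0$ they remain independent standard Gaussians, so the deterministic Hanson-Wright bound applies with constants $\lVert A_0\rVert_F^2$ and $\lVert A_0\rVert_2$. On $E$ these norms are dominated by $M_F^2$ and $M_2$, and on $E^c$ we pay the price $\delta$ crudely.

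For the bilinear case $X'AY$, I would first observe that $\bbE[X'AY] = \bbE_A\bbE_{X,Y}[X'AY\mid A] = \bbE_A[\Tr(A\cdot 0)] = 0$, and similarly $\bbE[X'AY\mid A] = 0$ for every realization of $A$. Therefore the conditional quantity $X'AY - \bbE[X'AY\mid A]$ equals $X'AY - \bbE[X'AY]$ pointwise, and the bilinear inequality of Lemma~\ref{lem:hanson_wright} yields, for any fixed $A_0$,
\begin{equation*}
\bbP(|X'A_0Y| \geq u \mid A = A_0) \leq 2\exp\!\left(-c\min\{u^2/\lVert A_0\rVert_F^2,\, u/\lVert A_0\rVert_2\}\right).
\end{equation*}
Integrating over $A$ and splitting on $E$ versus $E^c$ gives the bilinear bound directly.

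The quadratic case $X'AX$ is the main subtlety. Conditionally on $A$, we have $\bbE[X'AX\mid A] = \Tr(A)$, but the quantity we want to control centers on the \emph{unconditional} mean $\bbE[X'AX] = \Tr(\bbE[A])$. The key trick is to split
\begin{equation*}
X'AX - \bbE[X'AX] = \underbrace{(X'AX - \Tr(A))}_{\text{controlled by conditional HW}} + \underbrace{\Tr(A - \bbE[A])}_{\text{extra term depending only on }A}
\end{equation*}
and apply the union bound with threshold $u/2$ on each piece. The first piece is handled by the deterministic quadratic form inequality of Lemma~\ref{lem:hanson_wright} applied conditionally on $A$, together with the norm bounds on $E$ and the crude $\delta$ penalty on $E^c$ (absorbing the factor $1/2$ into the constant $c$). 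The second piece is not Gaussian randomness at all and must be left as the trace deviation probability $\bbP(|\Tr(A - \bbE[A])| \geq u/2)$, which appears as the extra term in the statement.

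The only real obstacle is recognizing why the trace term is unavoidable in the quadratic case but absent in the bilinear case: it stems from the fact that the conditional mean of $X'AX$ depends on $A$ through $\Tr(A)$ (which is random), whereas the conditional mean of $X'AY$ is identically zero and so coincides with its unconditional mean. This is exactly the gap pointed out in the footnote of Section~\ref{sec:convergence_gamma_proof} about the incomplete argument in the earlier literature. Everything else is a clean conditioning argument followed by a union bound.
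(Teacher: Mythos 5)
Your proposal is correct and follows essentially the same route as the paper: condition on $A$ (the paper does this by summing over the finite set of values of $A$, you by conditioning on the realization and the good event), apply the deterministic Hanson-Wright inequality of Lemma~\ref{lem:hanson_wright} with the norm bounds on the good event and a crude $\delta$ penalty otherwise, and in the quadratic case split off the random centering term $\bbE[X'AX\mid A]-\bbE[X'AX]=\Tr(A-\bbE[A])$ by a union bound at level $u/2$, which is exactly where the trace deviation probability comes from. Your observation that the bilinear case escapes this because $\bbE[X'AY\mid A]=0$ matches the paper's closing remark, so there is nothing to fix.
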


\begin{proof}
  We start with the first case. Since~$A$ is a discrete random matrix with a finite set~$\mathcal{A}$ of possible values,
  \begin{align*}
    \bbP(|X'AX - \bbE[X'AX]| \geq u)
     & = \sum_{a \in \mathcal{A}} \bbP(|X'AX - \bbE[X'AX]| \geq u \cap A=a)  \\
     & = \sum_{a \in \mathcal{A}} \bbP(|X'aX - \bbE[X'AX]| \geq u \cap A=a).
  \end{align*}
  Using independence between~$X$ and~$A$ gives us
  \begin{equation*}
    \bbP(|X'AX - \bbE[X'AX]| \geq u)
    = \sum_{a \in \mathcal{A}} \bbP(|X'aX - \bbE[X'AX]| \geq u) \bbP(A=a).
  \end{equation*}
  We now split the set of feasible values~$\mathcal{A}$ into
  \begin{equation*}
    \mathcal{A}_{\leq} = \{a \in \mathcal{A}: \lVert a \rVert_F^2 \leq M_F^2\} \quad \text{and} \quad \mathcal{A}_{>} = \{a \in \mathcal{A}: \lVert a \rVert_F^2 > M_F^2\}.
  \end{equation*}
  Since we assumed~$\bbP(A \in \mathcal{A}_{>}) = \sum_{a \in A_>} \bbP(A=a) \leq \delta$, we get:
  \begin{equation*}
    \bbP(|X'AX - \bbE[X'AX]| \geq u)
    \leq \delta + \sum_{a \in \mathcal{A}_\leq} \bbP(|X'aX - \bbE[X'AX]| \geq u) \bbP(A=a).
  \end{equation*}
  Unfortunately, Lemma~\ref{lem:hanson_wright} only lets us bound
  \begin{equation*}
    \bbP(|X'aX - \bbE[X'aX]| \geq u)\quad \text{and not} \quad \bbP(|X'aX - \bbE[X'AX]| \geq u)
  \end{equation*}
  (notice the change inside the expectation), which means we need an additional step. For a fixed~$a \in \mathcal{A}_{\leq}$, we use independence and normality to obtain
  \begin{align*}
    \bbE[X'aX] - \bbE[X'AX] & = \bbE[\Tr(X'(a-A)X)] = \Tr(\bbE[XX'(a-A)])  \\
                            & = \Tr(\bbE[XX'] \bbE[a-A]) = \Tr(a-\bbE[A]).
  \end{align*}
  We are now ready to decompose, with the help of the union bound:
  \begin{align*}
    \bbP(|X'aX - \bbE[X'AX]| \geq u)
     & = \bbP\left(|X'aX - \bbE[X'aX] + \bbE[X'aX] - \bbE[X'AX]| \geq u\right)                                                                                         \\
     & \leq \bbP\left(|X'aX - \bbE[X'aX]| \geq u/2\right) + \bbP\left(|\bbE[X'aX] - \bbE[X'AX]| \geq u/2\right)                                                        \\
     & \leq 2 \exp \left(-c \min \left\{\frac{u^2}{\lVert a \rVert_F^2}, \frac{u}{\lVert a \rVert_2} \right\} \right) + \one \left\{|\Tr(a-\bbE[A])| \geq u/2\right\}.
  \end{align*}
  This implies:
  \begin{align*}
    \bbP(|X'AX - \bbE[X'AX]| \geq u)
     & \leq \delta + \sum_{a \in \mathcal{A}_\leq} \bbP(A=a) \bbP(|X'aX - \bbE[X'AX]| \geq u)                                                                                  \\
     & \leq \delta + \sum_{a \in \mathcal{A}_\leq} \bbP(A=a) \times 2 \exp \left[ -c \min \left\{\frac{u^2}{\lVert a \rVert_F^2}, \frac{u}{\lVert a \rVert_2} \right\} \right] \\
     & \phantomleq + \sum_{a \in \mathcal{A}_\leq} \bbP(A=a) \times \one \left\{|\Tr(a-\bbE[A])| \geq u/2\right\}.
  \end{align*}
  By definition of~$\mathcal{A}_\leq$,
  \begin{align*}
    \bbP(|X'AX - \bbE[X'AX]| \geq u)
     & \leq \delta + \sum_{a \in \mathcal{A}_\leq} \bbP(A=a) \times 2 \exp \left( -c \min \left\{\frac{u^2}{M_F^2}, \frac{u}{M_2} \right\} \right)  \\
     & \phantomleq + \bbP\left(|\Tr(A - \bbE[A])| \geq u/2\right)                                                                                   \\
     & \leq \delta + 2 \exp \left( -c \min \left\{\frac{u^2}{M_F^2}, \frac{u}{M_2} \right\} \right) + \bbP\left(|\Tr(A - \bbE[A])| \geq u/2\right).
  \end{align*}
  The proof for~$X'AY$ follows the same lines, except that we replace~$\bbE[XX'] = I$ by~$\bbE[XY'] = 0$, which removes the trace term in the final expression.
\end{proof}

\begin{lemma}[Heuristic optimality of the signal-to-noise ratio] \label{lem:signal_to_noise}
  In the one-dimensional setting with full observations, the dependency of the error in~$1 + \frac{\sigma^2}{\omega^2}$ is \enquote{coherent} with the asymptotic behavior of the MLE.
\end{lemma}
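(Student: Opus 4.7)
The plan is to reduce to the simplest tractable setting and exhibit one concrete estimator whose asymptotic variance already matches the claimed dependency; the MLE can then only do at least as well. Set $D=1$ and $p=1$, so that Equations~\eqref{eq:x_model} and~\eqref{eq:y_model} collapse to the scalar ARMA-type model $X_t = \theta X_{t-1} + \innov_t$ with $\innov_t \sim \mathcal{N}(0,\sigma^2)$ and $Y_t = X_t + \noise_t$ with $\noise_t \sim \mathcal{N}(0,\omega^2)$. By Lemma~\ref{lem:x_covariance} the hidden process has covariances $\Gamma^X_0 = \sigma^2/(1-\theta^2)$ and $\Gamma^X_h = \theta^h \Gamma^X_0$, and since $\noise$ is white, $\Gamma^Y_0 = \Gamma^X_0 + \omega^2$ while $\Gamma^Y_h = \Gamma^X_h$ for $h \geq 1$. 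The observed process is therefore a stationary Gaussian ARMA$(1,1)$, for which the MLE is known to be asymptotically efficient, so any explicit estimator we build gives an upper bound on the asymptotic variance of the MLE.

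Next I would analyse the scalar version of the noise-corrected estimator from Section~\ref{sec:estimator_def}, namely $\widehat{\theta} = \widehat{\Gamma}^Y_1 / (\widehat{\Gamma}^Y_0 - \omega^2)$. Linearising around the population values via the delta method,
\begin{equation*}
\widehat{\theta} - \theta \;\approx\; \frac{1}{\Gamma^X_0} \, \bigl( \widehat{\Gamma}^Y_1 - \theta (\widehat{\Gamma}^Y_0 - \omega^2)\bigr),
\end{equation*}
so the asymptotic variance of $\widehat{\theta}$ is $(\Gamma^X_0)^{-2}$ times the long-run variance of $Z_t := Y_{t+1}Y_t - \theta(Y_t^2 - \omega^2)$. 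Substituting $Y_t = X_t + \noise_t$ and $X_{t+1} = \theta X_t + \innov_{t+1}$ makes the deterministic part cancel and leaves the mean-zero decomposition
\begin{equation*}
Z_t \;=\; \innov_{t+1}(X_t+\noise_t) \;+\; \noise_{t+1}(X_t+\noise_t) \;-\; 2\theta X_t \noise_t \;-\; \theta(\noise_t^2 - \omega^2).
\end{equation*}

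The central computation is then the variance of $Z_t$ (and a small cross-covariance bookkeeping between $Z_t$ and $Z_{t+1}$ for the long-run variance). Using mutual independence of $\innov$, $\noise$ and past $X$, the dominant terms combine to $\Var(Z_t) = (\sigma^2+\omega^2)(\Gamma^X_0+\omega^2) + O(\theta^2 \omega^4)$. Dividing by $T (\Gamma^X_0)^2$ and plugging in $\Gamma^X_0 = \sigma^2/(1-\theta^2)$ gives
\begin{equation*}
\Var(\widehat{\theta}) \;\asymp\; \frac{(1-\theta^2)^2}{T} \left(1+\frac{\omega^2}{\sigma^2}\right)\!\left(1 + (1-\theta^2)\frac{\omega^2}{\sigma^2}\right),
\end{equation*}
whose leading behaviour as $\omega/\sigma$ varies is indeed $(1+\omega^2/\sigma^2)^2/T$, so the RMSE scales as $(1+\omega^2/\sigma^2)/\sqrt{T}$, matching the prefactor of $\gamma_u(\theta)$ in Theorem~\ref{thm:convergence_rate_theta} and of $\gamma_\ell$ in Theorem~\ref{thm:lower_bound_sparse}.

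The main obstacle is that to actually tie this back to the MLE one needs the asymptotic efficiency of the Gaussian ARMA$(1,1)$ MLE. A fully rigorous argument would compute the Fisher information via Whittle's formula applied to the spectral density $f_Y(\lambda) = \sigma^2 |1 - \theta e^{-i\lambda}|^{-2} + \omega^2$, namely $I(\theta) = \frac{1}{4\pi} \int_{-\pi}^{\pi} (\partial_\theta \log f_Y)^2 \, d\lambda$, and then verify that $I(\theta)^{-1}$ carries the same $(1+\omega^2/\sigma^2)^2$ factor. Since the lemma only claims heuristic coherence, the delta-method computation above suffices: it shows that an achievable asymptotic variance already has the announced form, and efficiency of the MLE cannot make the rate worse.
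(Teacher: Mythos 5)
Your route is genuinely different from the paper's, and it only delivers half of what the lemma is meant to check. The paper's proof does not compare the MLE to another estimator: it computes the Fisher information of the observed process directly, writing $\Cov_\theta[Y] = (\sigma^2+\omega^2)I + R(\theta)$, Taylor-expanding in small $\theta$ to get $\mathcal{I}_T(\theta) \approx \Tr[(\partial_\theta R)^2]/(2(\sigma^2+\omega^2)^2) \approx \frac{T}{2}\bigl(\sigma^2/(\sigma^2+\omega^2)\bigr)^2$, and then invokes the CLT for the MLE with asymptotic variance $\mathcal{I}_\infty(\theta)^{-1}$, so the MLE's asymptotic error is (approximately) \emph{equal} to $\sqrt{2}\,(1+\omega^2/\sigma^2)/\sqrt{T}$. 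Your delta-method analysis of the moment estimator $\widehat{\Gamma}^Y_1/(\widehat{\Gamma}^Y_0-\omega^2)$, combined with asymptotic efficiency of the ARMA$(1,1)$ MLE, only shows that the MLE's asymptotic error is \emph{at most} of order $(1+\omega^2/\sigma^2)/\sqrt{T}$. That suffices to make the factor in the upper bound $\gamma_u(\theta)$ plausible, but the lemma is invoked to support the appearance of $1+\omega^2/\sigma^2$ in \emph{both} $\gamma_u$ and the minimax lower bound $\gamma_\ell$; for the latter one needs the reverse inequality, i.e.\ that the inverse Fisher information itself carries the factor $(1+\omega^2/\sigma^2)^2$, so that no estimator (in particular not the MLE) can beat this dependency asymptotically. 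That is precisely the Whittle/Fisher computation you defer to a ``fully rigorous argument'' --- but it is the actual content of the paper's proof, not an optional refinement, so as it stands the proposal leaves the lower-bound half of the ``coherence'' unverified.

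Two minor algebra points, harmless for the heuristic but worth fixing: expanding $Z_t = Y_{t+1}Y_t - \theta(Y_t^2-\omega^2)$ gives a cross term $-\theta X_t\noise_t$ (not $-2\theta X_t\noise_t$), and the neglected contribution to $\Var(Z_t)$ is $\theta^2\omega^2\Gamma^X_0 + 2\theta^2\omega^4$, not just $O(\theta^2\omega^4)$; also, dividing $(\sigma^2+\omega^2)(\Gamma^X_0+\omega^2)$ by $(\Gamma^X_0)^2$ yields a prefactor $(1-\theta^2)$, not $(1-\theta^2)^2$. None of this changes the $(1+\omega^2/\sigma^2)$ dependence, which is the quantity of interest.
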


\begin{proof}
  Let us consider the case where~$D = 1$ and~$p = 1$, since we are mainly interested in the role of the parameters~$\sigma^2$ and~$\omega^2$.
  In this case, Theorem~\ref{thm:lower_bound_sparse} argues that the error of any estimator should grow at least like~$\gamma_\ell = 1 + \frac{\omega^2}{\sigma^2}$.
  We also note that in this simple scenario, Theorem~\ref{thm:convergence_rate_theta} states that~$\gamma_u \propto \gamma_\ell$.

  We will compare this to the asymptotic error of the Maximum Likelihood Estimator (MLE)~$\widehat{\theta}$, which (for well-behaved models) is given by the inverse of the Fisher information matrix.
  To make this statement more precise, we will invoke \citet[Proposition 2.14]{doucNonlinearTimeSeries2014}. Let us verify the conditions:
  \begin{itemize}
    \item The process is stable, i.e.~$\rho(\theta) < 1$.
          We made sure of that by assuming~$\lVert \theta \rVert_2 \leq \thetamax < 1$.
    \item The sampling matrix~$\Pi_t$ is constant across time.
          Although this assumption is not essential, it is true here since~$p = 1$ and~$D = 1$ hence~$\Pi_t = I_1$.
    \item The model has the smallest possible dimension.
    \item The true parameter~$\theta$ is identifiable and does not lie on the boundary of~$\Theta_s$.
          Identifiability is easily deduced from Lemma~\ref{lem:x_covariance} by observing that~$\theta = \Gamma_1(\theta) \Gamma_0(\theta)^{-1}$ can be entirely deduced from distribution moments.
  \end{itemize}
  Since all of these prerequisites hold here, \citet[Proposition 2.14]{doucNonlinearTimeSeries2014} gives us a Central Limit Theorem for the MLE of linear Gaussian models:
  \begin{align*}
    \sqrt{T}(\widehat{\theta} - \theta) \xrightarrow[T \to \infty]{\mathcal{L}} \mathcal{N}(0, \mathcal{I}_{\infty}(\theta)^{-1}) \quad \text{where} \quad \mathcal{I}_{\infty}(\theta) = \lim_{T \to \infty} \frac{\mathcal{I}_T(\theta)}{T}.
  \end{align*}
  We only have to compute the Fisher information matrix~$\mathcal{I}_T(\theta)$.
  The covariance matrix of~$Y$ is given by Lemma~\ref{lem:y_covariance_decomposition}, but in our case the sampling matrix is constant, and we obtain the simpler (unconditional) result
  \begin{equation*}
    \Cov_\theta[Y] = (\sigma^2 + \omega^2) I_T + R(\theta),
  \end{equation*}
  where the residual~$R(\theta)$ is of order 1 in~$\theta$.
  Indeed, our simplifying assumptions imply~$\Gamma_0(\theta) = \frac{\sigma^2}{1 - \theta^2}$ and therefore
  \begin{equation*}
    R(\theta) = \frac{\sigma^2}{1 - \theta^2} \begin{pmatrix}
      \theta^2 & \theta^1 & \theta^2 & \cdots \\
      \theta^1 & \theta^2 & \theta^1 &        \\
      \theta^2 & \theta^1 & \theta^2 &        \\
      \vdots   &          &          & \ddots
    \end{pmatrix}
    \qquad
    \partial_\theta R(\theta) = \sigma^2 \begin{pmatrix}
      0      & 1 & 0 & \cdots \\
      1      & 0 & 1 &        \\
      0      & 1 & 0 &        \\
      \vdots &   &   & \ddots
    \end{pmatrix} + \mathcal{O}(\theta).
  \end{equation*}
  The Fisher information of~$Y$ with respect to~$\theta$ has an explicit formula \citep[Section 3.5]{malagoInformationGeometryGaussian2015}:
  \begin{align*}
    \mathcal{I}_T(\theta)
     & = \frac{1}{2} \Tr \left[ \Cov_\theta[Y]^{-1}   \partial_\theta \Cov_\theta[Y]   \Cov_\theta[Y]^{-1}   \partial_\theta \Cov_\theta[Y] \right]                                                                                                                              \\
     & = \frac{1}{2} \Tr \left[ \left(I + \frac{R(\theta)}{\sigma^2 + \omega^2} \right)^{-1} \frac{\partial_\theta R(\theta)}{\sigma^2 + \omega^2}  \left(I + \frac{R(\theta)}{\sigma^2 + \omega^2} \right)^{-1} \frac{\partial_\theta R(\theta)}{\sigma^2 + \omega^2}  \right].
  \end{align*}
  If assume~$\theta$ is small and perform a Taylor expansion, we get:
  \begin{equation*}
    \mathcal{I}_T(\theta) \approx \frac{1}{2(\sigma^2 + \omega^2)^2} \Tr \left[(\partial_\theta R(\theta))^2\right].
  \end{equation*}
  Incidentally, we also note that at the lowest order in~$\theta$,
  \begin{equation*}
    \Tr[(\partial_\theta R(\theta))^2] = \lVert \partial_\theta R(\theta) \rVert_F^2 \approx 2 \sigma^4 (T-1).
  \end{equation*}
  Which gives us an approximate information matrix for~$T$ steps:
  \begin{equation*}
    \mathcal{I}_T(\theta) \approx \frac{\Tr[(\partial_\theta R(\theta))^2]}{2 (\sigma^2 + \omega^2)^2} \approx \frac{T}{2}\left( \frac{\sigma^2}{\sigma^2 + \omega^2} \right)^2.
  \end{equation*}
  Taking the temporal limit yields:
  \begin{equation*}
    \mathcal{I}_{\infty}(\theta) = \lim_{T \to \infty} \frac{\mathcal{I}_T(\theta)}{T} \approx \frac{1}{2}\left( \frac{\sigma^2}{\sigma^2 + \omega^2} \right)^2.
  \end{equation*}
  In conclusion, this informal analysis reveals an asymptotic error equivalent to
  \begin{equation*}
    \frac{1}{\sqrt{T}} \sqrt{\mathcal{I}_{\infty}(\theta)^{-1}} \approx \frac{\sqrt{2}}{\sqrt{T}} \left(1+\frac{\omega^2}{\sigma^2} \right),
  \end{equation*}
  which is coherent with the dependency we identified in Theorem~\ref{thm:lower_bound_sparse}.

\end{proof}

\section{Glossary} \label{sec:glossary}

\subsection{Notations}

For any integer~$n$, let~$[n] = \{1, ..., n\}$.
The symbol~$\one_{\{...\}}$ stands for an indicator function.
When dealing with random variables, we write~$\bbP(X=x)$ for a probability density,~$\bbE[X]$ for an expectation,~$\Var[X]$ for a variance (scalar of vector) and~$\Cov[X, Y]$ for a covariance (scalar or matrix).
The symbols~$\mathcal{B}(p)$ and~$\mathcal{N}(\mu, \Sigma)$ denote a Bernoulli distribution and a (possibly multivariate) Gaussian distribution.
When we write~$\log(x)$, we mean the natural (base-$e$) logarithm.

Given a real number~$a$, we denote by~$|a|$ its absolute value.
Given a vector~$x$, we denote by~$\lVert x \rVert_2$ (resp.
$\lVert x \rVert_1$,~$\lVert x \rVert_{\infty}$,~$\lVert x \rVert_0$) its Euclidean norm (resp.~$\ell_1$ norm,~$\ell_\infty$ norm, number of nonzero entries).
The notation~$\basis_i$ stands for a vector with a single non-zero coordinate at position~$i$.

A matrix can be defined by its coefficients~$M = (M_{i,j})_{i,j}$ or by its blocks~$M = (M_{[b_1, b_2]})_{b_1,b_2}$.
We write~$I$ for the identity matrix, and~$\Jmat_r$ for the square matrix entirely filled with zeros, except for the subdiagonal of rank~$r$ which is filled with ones.
The notation~$\diag(\lambda)$ stands for the diagonal matrix with coefficients~$\lambda_1, ..., \lambda_n$, while~$\bdiag_T(M)$ stands for a block-diagonal matrix with~$T$ copies of~$M$ on the diagonal and zeros elsewhere.
We write~$\vecm(M)$ for the column-wise flattening of matrix~$M$ into a vector.
When we want to apply a function elementwise, we often use notation that is standard for real numbers but not for matrices: for instance,~$\sqrt{M} = (\sqrt{M_{i, j}})_{i,j}$ and~$1/M = (1/M_{i,j})_{i,j}$.
Given a real matrix~$M$, we denote by
\begin{itemize}
  \item~$M'$ its transposition,~$M^\dagger$ its Moore-Penrose pseudo-inverse and~$M^{-1}$ its inverse;
  \item~$\Tr(M)$ its trace and~$\det(M)$ its determinant;
  \item~$\lambda_{\max}(M)$ (resp.~$\lambda_{\min}(M)$,~$\lambda_i(M)$) its maximum (resp. minimum,~$i$-th largest) eigenvalue, so that
        \begin{equation*}
          \lambda_{\max}(M) = \lambda_1(M) \geq \lambda_2(M) \cdots \geq \lambda_n(M) = \lambda_{\min}(M)
        \end{equation*}
  \item~$s_{\max}(M)$ (resp.~$s_{\min}(M)$,~$s_i(M)$) its maximum (resp. minimum,~$i$-th largest) singular value;
  \item~$\lVert M \rVert_1 = \sup \frac{\lVert M x \rVert_1}{\lVert x \rVert_1} = \max_j \sum_i |M_{i,j}|$ its operator~$\ell_1$ norm, which is the maximum~$\ell_1$ norm of a column of~$M$;
  \item~$\lVert M \rVert_2 = \sup \frac{\lVert M x \rVert_2}{\lVert x \rVert_2} = |s_{\max}(M)| = \sqrt{\lambda_{\max} (M'M)}$ its operator~$\ell_2$ norm, also known as the spectral norm;
  \item~$\lVert M \rVert_\infty = \sup \frac{\lVert M x \rVert_\infty}{\lVert x \rVert_\infty} = \max_i \sum_j |M_{i,j}|$ its operator~$\ell_\infty$ norm, which is the maximum~$\ell_1$ norm of a row of~$M$;
  \item~$\lVert M \rVert_F = \lVert \vecm(M) \rVert_2 = \Tr(M'M)$ its Frobenius norm;
  \item~$\lVert M \rVert_{\max} = \lVert \vecm(M) \rVert_{\infty} = \max_{i,j} |M_{i,j}|$ the maximum absolute value of its entries;
  \item~$\rho(M)$ its spectral radius.
\end{itemize}
See \citet{petersenMatrixCookbook2012} for a collection of inequalities relating all of these quantities.
Given two real matrices~$A$ and~$B$, we denote by
\begin{itemize}
  \item~$A \otimes B$ their Kronecker product;
  \item~$A \odot B$ Hadamard (elementwise) product;
  \item~$A \succeq B$ or~$A \preceq B$ the (partial) Loewner order on symmetric matrices.
\end{itemize}

\subsection{Frequent symbols}

Here is a list of the most frequent symbols and their meaning.

\medskip

Dimensions:
\begin{itemize}
  \item~$t \in [T]$: time step
  \item~$d \in [D]$: dimension
\end{itemize}

State process:
\begin{itemize}
  \item~$X_t$: state process
  \item~$\theta$: transition matrix
  \item~$\innov_t$: innovations
  \item~$\Sigma$: covariance matrix of~$\innov_t$
  \item~$\sigma_{\min}^2, \sigma_{\max}^2$: extremal eigenvalues of~$\Sigma$
  \item~$s$: sparsity level of~$\theta$ (number of non-zero coefficients in each row)
  \item~$\thetamax$: maximum~$\ell_2$ norm for~$\theta$
  \item~$\Theta_s$: set of feasible values for~$\theta$
  \item~$\Gamma_h(\theta)$: covariance between~$X_{t+h}$ and~$X_t$
\end{itemize}

Observations:
\begin{itemize}
  \item~$\pi_t$: random sampling vector
  \item~$\Pi_t$: diagonal random sampling matrix
  \item~$p$: fraction of state components activated by observations
  \item~$\mathcal{T}$: transition matrix for Markov sampling
  \item~$a, b$: transition probabilities for Markov sampling
  \item~$\chi$: minimum distance between~$a$ or~$b$ and~$\{0, 1\}$ (considered constant)
  \item~$Y_t$: observations
  \item~$\noise_t$: noise
  \item~$\omega^2$: variance of~$\noise_t$
\end{itemize}

Estimation:
\begin{itemize}
  \item~$h$: covariance time lag
  \item~$h_0$: minimum covariance time lag for transition estimation
  \item~$S(h)$: scaling matrix for covariance estimation
  \item~$p q_u$: smallest coefficient of the scaling matrix
\end{itemize}

Other:
\begin{itemize}
  \item $g$: standard Gaussian vector
  \item~$\Psi_{\innov}$ (resp.~$\Psi_{\noise}$): link between~$X$ (resp.~$\noise$) and a standard Gaussian vector
  \item~$L$: random bilinear form
  \item~$u$: threshold in concentration inequalities
  \item~$\delta$: small probability
  \item~$Q_{\Pi}$: constant term in the conditional variance of~$Y$
  \item~$R_{\Pi}(\theta)$: varying term in the conditional variance of~$Y$
  \item~$\Delta_\Pi(\theta)$: deviation from the identity
  \item~$\gamma_\ell$ (resp.~$\gamma_u(\theta)$): signal-to-noise ratio in the lower bound (resp. the upper bound)
\end{itemize}

\end{document}